\documentclass[english,11pt]{article}

\usepackage{etex,etoolbox}
\newtoggle{crypto}
\togglefalse{crypto}

\iftoggle{crypto}{
}
{
\usepackage{amsthm}
}

\iftoggle{crypto}{
\usepackage{thmtools,environ,cite,wrapfig}
}{}
\usepackage[english]{babel}
\usepackage{latexsym,amssymb,amsmath,amsfonts,epsfig,color}
\usepackage{graphicx}
\usepackage{float}
\iftoggle{crypto}{}{\usepackage[left=1in,right=1in, top=1in, bottom=1in]{geometry}}
\iftoggle{crypto}{\usepackage{framed}
\usepackage[leftbars,color]{changebar}}{}
\usepackage{ifthen}
\usepackage{authblk}
\usepackage{colonequals}
\iftoggle{crypto}{\usepackage{array}}{}
\iftoggle{crypto}{\usepackage{paralist}}{}
\usepackage{mathrsfs}
\usepackage{complexity}
\usepackage[hidelinks]{hyperref}
\usepackage{enumitem}
\iftoggle{crypto}{\pagestyle{headings}}{} 
\iftoggle{crypto}{\setlist{nolistsep}}{}

\usepackage{tikz}
\usetikzlibrary{decorations.shapes}
\usetikzlibrary{shapes.symbols}

\iftoggle{crypto}{\usepackage[symbol]{footmisc}}{}

\let\originalparagraph\paragraph
\renewcommand{\paragraph}[2][.]{\originalparagraph{#2#1}}

%
%
%


\usepackage{xy}
\xyoption{matrix}
\xyoption{frame}
\xyoption{arrow}
\xyoption{arc}

\usepackage{ifpdf}
\ifpdf
\else
\PackageWarningNoLine{Qcircuit}{Qcircuit is loading in Postscript mode.  The Xy-pic options ps and dvips will be loaded.  If you wish to use other Postscript drivers for Xy-pic, you must modify the code in Qcircuit.tex}
\xyoption{ps}
\xyoption{dvips}
\fi

\entrymodifiers={!C\entrybox}

\newcommand{\bra}[1]{{\left\langle{#1}\right\vert}}
\newcommand{\ket}[1]{{\left\vert{#1}\right\rangle}}
\newcommand{\qw}[1][-1]{\ar @{-} [0,#1]}
\newcommand{\qwx}[1][-1]{\ar @{-} [#1,0]}
\newcommand{\cw}[1][-1]{\ar @{=} [0,#1]}

\newcommand{\gate}[1]{*+<.6em>{#1} \POS ="i","i"+UR;"i"+UL **\dir{-};"i"+DL **\dir{-};"i"+DR **\dir{-};"i"+UR **\dir{-},"i" \qw} 
    
\newcommand{\meter}{*=<1.8em,1.4em>{\xy ="j","j"-<.778em,.322em>;{"j"+<.778em,-.322em> \ellipse ur,_{}},"j"-<0em,.4em>;p+<.5em,.9em> **\dir{-},"j"+<2.2em,2.2em>*{},"j"-<2.2em,2.2em>*{} \endxy} \POS ="i","i"+UR;"i"+UL **\dir{-};"i"+DL **\dir{-};"i"+DR **\dir{-};"i"+UR **\dir{-},"i" \qw}





\newcommand{\control}{*!<0em,.025em>-=-<.2em>{\bullet}}

\newcommand{\ctrl}[1]{\control \qwx[#1] \qw}

\newcommand{\targ}{*+<.02em,.02em>{\xy ="i","i"-<.39em,0em>;"i"+<.39em,0em> **\dir{-}, "i"-<0em,.39em>;"i"+<0em,.39em> **\dir{-},"i"*\xycircle<.4em>{} \endxy} \qw}

\newcommand{\rstick}[1]{*!L!<-.5em,0em>=<0em>{#1}}
\newcommand{\lstick}[1]{*!R!<.5em,0em>=<0em>{#1}}


\newcommand{\Qcircuit}{\xymatrix @*=<0em>}




\definecolor{darkgreen}{rgb}{0,.5,0}

%

\iftoggle{crypto}{}{
\makeatletter
\newtheorem*{rep@theorem}{\rep@title}
\newcommand{\newreptheorem}[2]{%
\newenvironment{rep#1}[1]{%
 \def\rep@title{#2 \ref*{##1}}%
 \begin{rep@theorem}}%
 {\end{rep@theorem}}}
\makeatother

\newtheorem{theorem}{Theorem}[section]
\newtheorem{lemma}[theorem]{Lemma}
\newtheorem{corollary}[theorem]{Corollary}

\newreptheorem{theorem}{Theorem}
\newreptheorem{lemma}{Lemma}

\newtheorem{definition}[theorem]{Definition}
}

\def\ket#1{{\lvert}#1\rangle}
\def\bra#1{{\langle}#1\rvert}

\newcommand{\ketbra}[2]{\ket{#1}\bra{#2}}			
\newcommand{\egoketbra}[1]{\ketbra{#1}{#1}}			
\DeclareMathOperator{\Tr}{Tr}

\renewcommand{\th}[1]{${#1}^{\textrm{th}}$}
\renewcommand{\(}{\left(}
\renewcommand{\)}{\right)}

\newcommand{\regR}{{\mathcal{R}}}
\newcommand{\regC}{{\mathcal{C}}}
\newcommand{\regA}{{\mathcal{A}}}
\newcommand{\regB}{{\mathcal{B}}}
\newcommand{\regE}{{\mathcal{E}}}

\newcommand{\advA}{{\mathscr{A}}}

\newcommand{\classS}{{\mathscr{S}}}

\newcommand{\hgate}{{\sf H}}
\newcommand{\tgate}{{\sf T}}
\newcommand{\pgate}{{\sf P}}

\newcommand{\xgate}{{\sf X}}

\newcommand{\zgate}{{\sf Z}}
\newcommand{\cnot}{{\sf CNOT}}
\newcommand{\QEnc}{{\mathsf{QEnc}}}
\newcommand{\QDec}{{\mathsf{QDec}}}
\newcommand{\Eval}{{\mathsf{Eval}}}
\newcommand{\KeyGen}{{\mathsf{KeyGen}}}
\newcommand{\CL}{{\sf CL}}
\newcommand{\QHE}{{\sf QHE}}
\newcommand{\HE}{{\sf HE}}
\newcommand{\AUX}{{\sf AUX}}
\newcommand{\TRIV}{{\sf TRIV}}
\newcommand{\EPR}{{\sf EPR}}
\newcommand{\Enc}{{\sf Enc}}
\newcommand{\Dec}{{\sf Dec}}
\newcommand{\sk}{{\it sk}}
\newcommand{\pk}{{\it pk}}
\newcommand{\evk}{{\it evk}}
\newcommand{\id}{{\mathbb I}}
\newcommand{\maxmix}{\$}

\newcommand{\meas}{
\begin{tikzpicture}
\filldraw[fill=white] (0,.25) rectangle (.7,-.25);
\draw (.67,-.1) arc (50:130:.5);
\draw (.35,-.2)--(.525,.2);
\end{tikzpicture}
}

\newcommand{\target}{
\begin{tikzpicture}
\draw (0,-.15)--(0,.15);
\draw (-.15,0)--(.15,0);
\draw (0,0) circle (.15);
\end{tikzpicture}
}

\newcommand{\cntrl}{
\begin{tikzpicture}
\fill (0,0) circle (.08);
\end{tikzpicture}
}

\title{Quantum homomorphic encryption\\
 for circuits of low $\tgate$-gate complexity}

%
\iftoggle{crypto}{
\author{
Anne Broadbent\footnote{Department of Mathematics and Statistics, University of Ottawa, Ottawa, Ontario, Canada; \texttt{abroadbe@uottawa.ca}.}
\and Stacey Jeffery\footnote{Institute for Quantum Information and Matter, California Institute of Technology, Pasadena, California, USA; \texttt{sjeffery@caltech.edu}.}
}
\institute{}
}
{
\author{Anne Broadbent\footnote{Department of Mathematics and Statistics, University of Ottawa, Ottawa, Ontario, Canada; \texttt{abroadbe@uottawa.ca}.}
\;\;\;\; and \;\;\;\; Stacey Jeffery\footnote{Institute for Quantum Information and Matter, California Institute of Technology, Pasadena, California, USA; \texttt{sjeffery@caltech.edu}.}}

}
\date{}

\begin{document}

\maketitle

\begin{abstract}
Fully homomorphic encryption is an encryption method  with the property that any computation on the plaintext can be performed by a party having  access to the ciphertext only.
Here, we formally define and give schemes for \emph{quantum} homomorphic encryption, which is the encryption of \emph{quantum} information such that \emph{quantum} computations can be performed given the ciphertext only. Our schemes allow for arbitrary Clifford group gates, but become inefficient for circuits with
large complexity, measured in terms of the non-Clifford portion of the circuit (we use the ``$\pi/8$'' non-Clifford group gate, also known as the $\tgate$-gate).

More specifically, two schemes are proposed: the first scheme has a decryption procedure whose complexity scales with the square of  the \emph{number} of $\tgate$-gates (compared with a trivial scheme in which the complexity scales with the total number of gates); the second scheme
uses a quantum evaluation key of length given by a polynomial of degree exponential in the circuit's
$\tgate$-gate depth, yielding a homomorphic scheme for quantum circuits with constant $\tgate$-depth. Both schemes build on a classical fully homomorphic encryption scheme.

A further contribution of ours is to formally define the security of encryption schemes for quantum messages: we define  \emph{quantum indistinguishability under chosen plaintext attacks} in both the public- and private-key settings. In this context, we show the equivalence of several definitions.

Our schemes are the first of their kind that are secure under modern cryptographic definitions, and can be seen as a quantum analogue of classical results establishing homomorphic encryption for circuits with a limited number of \emph{multiplication} gates. Historically, such results appeared as precursors to the breakthrough result establishing classical fully homomorphic encryption.

\end{abstract}

\iftoggle{crypto}{}{
\newpage
\tableofcontents
\newpage
}
\vspace{-20pt}
\section{Introduction}
\vspace{-5pt}

An encryption scheme is \emph{homomorphic} over some set of circuits $\classS$ if any circuit in $\classS$ can be evaluated on an encrypted input. That is, given an encryption of the message~$m$, it is possible to produce a ciphertext that decrypts to the output of the circuit $\sf C$ on input $m$, for any ${\sf C}\in\classS$. In \emph{fully homomorphic encryption (FHE)}, $\classS$ is the set of all classical circuits. FHE was introduced in 1978\iftoggle{crypto}{}{ by Rivest, Adleman and
Dertouzos}~\cite{RAD78}, but the existence of such a scheme was an open problem for over~30 years.
 Some early public-key encryption schemes were homomorphic over the set of circuits consisting of only additions~\cite{GM84,Paillier99} or \iftoggle{crypto}{}{over the set of circuits consisting of }only multiplications~\cite{ElGamal85}.
Several steps were made towards FHE,
with schemes that were homomorphic over increasingly large circuit classes, such as circuits containing additions and a single multiplication~\cite{BGN05}, or of logarithmic depth~\cite{SYY99},
until finally in 2009, Gentry established a breakthrough result by giving the first fully homomorphic encryption scheme~\cite{Gen09}. Follow-up work showed that FHE could be simplified~\cite{MGHV10}, and based
on standard assumptions, such as \emph{learning with errors}~\cite{BV11}.
The advent of FHE has unleashed a series of far-reaching consequences, such as delegating computations\iftoggle{crypto}{}{ in a cloud architecture}, and functional encryption~\cite{Goldwasser:2013:RGC:2488608.2488678}.
For a survey  on \iftoggle{crypto}{FHE}{fully homomorphic encryption}, see~\cite{Vaikuntanathan:2011}.

\iftoggle{crypto}{}{
Quantum cryptography is the study of cryptography in light of quantum information. One branch of quantum cryptography revisits \emph{classical} primitives in the light of quantum information, establishing either no-go results (\emph{e.g.}~\cite{Lo-Chau, Mayers}), or qualitative improvements achieved with quantum information (\emph{e.g.}~\cite{BB84}). Another branch of quantum cryptography seeks to establish \emph{quantum} cryptographic functionality, for instance in multiparty quantum computation~\cite{QMPC} or quantum message authentication~\cite{BCGST02}.
The study of quantum cryptography is notorious for its subtleties and challenges, ranging from dealing with ``purification attacks''~\cite{Lo-Chau, Mayers} to dealing with situations that are unique to the quantum world (such as ``quantum rewinding''~\cite{W06, U12}).}

A number of works have studied \iftoggle{crypto}{}{the cryptographic implications of }the secure delegation of quantum computation\iftoggle{crypto}{ \cite{Chi05,BFK09,ABE10,DFPR13,BGS12,QCED,B15}}{, including:
Childs~\cite{Chi05}; Broadbent, Fitzsimons and Kashefi~\cite{BFK09};  Aharonov, Ben-Or and Eban~\cite{ABE10};  Vedran, Fitzsimons, Portmann and Renner~\cite{DFPR13}; Broadbent, Gutoski and Stebila~\cite{BGS12}; Fisher~\emph{et al.}~\cite{QCED}; and Broadbent~\cite{B15}}.
None \iftoggle{crypto}{}{of these works, however }directly address the question of quantum homomorphic encryption, since they are interactive schemes, and the work of the client is proportional to the size of the circuit being evaluated (and thus, they do not satisfy the \emph{compactness} requirement of \iftoggle{crypto}{FHE}{fully homomorphic encryption}, even if we allow interaction).
Non-interactive approaches are given by\iftoggle{crypto}{}{ Arrighi and Salvail}~\cite{AS06},\iftoggle{crypto}{}{ Rohde, Fitzsimons and  Gilchrist}~\cite{RFJG12} and\iftoggle{crypto}{}{ Tan, Kettlewell, Ouyang, Chen and Fitzsimons}~\cite{TKOCF14}.  However, {none of these approaches are applicable to universal circuit families.}
Furthermore, in the case of~\cite{AS06}, security is given only in terms of cheat sensitivity, while both~\cite{RFJG12} and~\cite{TKOCF14} only bound the leakage of their encoding schemes.

Recent work\iftoggle{crypto}{}{ by Yu, P\'erez-Delgado and  Fitzsimons}~\cite{YPF14} examines the question of perfect security and correctness for quantum fully homomorphic encryption (QFHE), concluding that the trivial scheme is optimal in this context. In light of this result, it is natural to consider computational assumptions in achieving QFHE. Indeed, the question of computationally secure QFHE remains an open problem; our contribution makes progress in this direction by presenting the first schemes that are homomorphic for a large class of quantum circuits.

\vspace{-10pt}
\subsection{Summary of Contributions and Techniques}
\label{sec:intro-summary}
\vspace{-5pt}
We introduce schemes for \emph{quantum homomorphic encryption (QHE)}, the quantum version of homomorphic encryption; we are \iftoggle{crypto}{}{thus }interested in \iftoggle{crypto}{}{establishing functionality for }the evaluation of \emph{quantum} circuits on encrypted \emph{quantum} data.
In terms of definitions, we contribute by giving the first definition of quantum homomorphic encryption (QHE) in the computational setting, in the case of both public-key and symmetric-key cryptosystems. As a consequence, we give the first formal definition (and scheme) for the public-key encryption of quantum information, where security is given in terms of \emph{quantum indistinguishability under chosen plaintext attacks}---for which we show the equivalence of a number of definitions, including security for multiple messages. {Prior work considered the computational setting for quantum encryption of classical plaintexts only~\cite{OTU00,K07,XY12}.}

In terms of QHE schemes, we start by using straightforward techniques to construct a scheme that is homomorphic for Clifford circuits\iftoggle{crypto}{}{ (or, more generally, stabilizer circuits)}.
This can  be seen as an analogue to a classical scheme that is homomorphic for linear circuits (circuits performing only additions). While Clifford circuits are not universal for quantum computation, this already yields a range of applications for quantum information processing, including encoding and decoding into stabilizer codes. Our quantum public-key encryption scheme is a hybrid of a classical public-key fully homomorphic encryption scheme and the quantum one-time pad~\cite{AMTW00}. Intuitively, the scheme works by encrypting the quantum register with a quantum one-time pad, and then encrypting the one-time pad encryption keys with a classical public-key FHE scheme. Since Clifford circuits conjugate Pauli operators to Pauli operators, any Clifford circuit can be directly applied to the encrypted quantum register; the homomorphic property of the classical encryption scheme is used to update the encryption key.
Of course, we specify that the classical FHE scheme should be secure against quantum adversaries. By using,  \emph{e.g.},~the scheme from~\cite{BV11}, we get security based on the \emph{learning with errors} (LWE) assumption \cite{Regev2005, Regev2009}; this has been equated with worst-case hardness of ``short vector problems" on arbitrary lattices \cite{MG09}, which is widely believed  to be a quantum-safe (or ``post-quantum'') assumption.

For universal quantum computations, we must evaluate a non-Clifford gate, for which we choose the ``$\tgate$'' gate (also known as ``$\mathsf{R}$'' or ``$\pi/8$''). Applying the above principle  we run into trouble, since $\tgate \xgate^a \zgate^b = \xgate^a \zgate^{a \oplus b} \pgate^a \tgate$. That is, conditioned on the quantum one-time pad encryption key $a, b \in \{0,1\}$, the output picks up an undesirable non-Pauli error.
 Our main contribution is to present two schemes, $\EPR$~and~$\AUX$, that deal with this situation in two different ways:

 \begin{description}
 \item[$\EPR$:]  The main idea of $\EPR$ \iftoggle{crypto}{}{(named after the famous Einstein-Podolski-Rosen trio~\cite{EPR35})} is to use
 entangled quantum registers
       to enable corrections  \emph{within the circuit} at the time of decryption.
      This scheme is efficient for any quantum circuit, however, it fails to meet a requirement for fully homomorphic encryption called \emph{compactness}, which requires that the complexity of the decryption procedure be independent of the evaluated circuit.
       More specifically, the complexity of the decryption procedure for $\EPR$ scales with the square of the number of $\tgate$-gates. This gives an advantage over the trivial scheme whenever the number of $\tgate$-gates in the evaluated circuit is less than the squareroot of the number of gates. (The \emph{trivial} scheme consists of
   appending to the ciphertext a description of the circuit to be evaluated, and specifying that it should be applied as part of the decryption procedure.)


 \item[$\AUX$:]
 Compared to $\EPR$, the scheme $\AUX$ takes a more proactive approach to performing the correction required for a $\tgate$-gate: to do this, it uses a number of auxiliary qubits  that are given as part of the evaluation key.  Intuitively, these auxiliary qubits encode the required corrections. In order to ensure universality, a large number of possible corrections must be available --- the length of the evaluation key is thus given by a polynomial of degree exponential in the circuit's $\tgate$-gate \emph{depth}, yielding  a homomorphic scheme that is efficient for quantum circuits with constant $\tgate$-depth.
 \end{description}
\newpage
The two main schemes\iftoggle{crypto}{}{ $\EPR$ and $\AUX$} are incomparable\iftoggle{crypto}{}{; for some circuits, $\EPR$ is more desirable, while for others, it is preferable to use $\AUX$}. The scheme $\EPR$ becomes less \emph{compact} (and therefore less interesting, since it approaches the trivial scheme), as the \emph{number} of $\tgate$-gates increases, while the scheme~$\AUX$ becomes inefficient (\emph{extremely} rapidly) as the \emph{depth} of $\tgate$-gates increases.

Our results can be viewed as a quantum analogue of precursory results to classical fully homomorphic encryption, which established the homomorphic property of encryption schemes that tolerate a limited amount of operations.
One difference is that, while these schemes started with the modest goal of just a \emph{single} multiplication (the addition operation being ``easy''), we have already allowed for at the very least a \emph{constant} number, and, depending on the circuit, up to a polynomial number of ``hard'' operations, namely of $\tgate$-gates.

Our schemes use the existence of classical FHE, although at the expense of a slightly more complicated exposition, a classical scheme that is homomorphic only for linear circuits would actually suffice. We see the relationship between our schemes and classical FHE as a strength of our result, via the following interpretation: classical FHE is sufficient to enable QHE for a large family of circuits, and perhaps by taking greater advantage of the \emph{fully} homomorphic property of the classical scheme in some as yet unknown way, our ideas might be extended to larger classes of quantum circuits. With this in mind, and for ease of exposition, we use a classical fully homomorphic encryption scheme for all of our quantum homomorphic encryption schemes.

\iftoggle{crypto}{}{
An additional contribution of ours is conceptual: in the context of quantum circuits, it had been known for some time now that the non-Clifford part of a quantum computation is the ``difficult'' one (this phenomena appears, \emph{e.g.}~in the context of quantum simulations~\cite{Got98}, fault-tolerant quantum computation~\cite{BK05} and quantum secure function evaluation~\cite{DNS10, DNS12, QMPC}).
This has motivated a series of theoretical work seeking to optimize quantum circuits in terms of their $\tgate$-gate complexity~\cite{Sel13, KMM13}. In particular, Amy,  Maslov, Mosca, and Roetteler~\cite{AMMR13}  recently proposed $\tgate$-depth as a cost function, the idea being to count the number of $\tgate$-layers in a quantum circuit and optimize over this parameter. Our contribution adds to this understanding, showing that, in the context of quantum homomorphic encryption, the main challenge is to evaluate non-Clifford gates, the bottleneck being, more precisely, the \emph{depth} of the $\tgate$-gate part of the circuit.
}

\iftoggle{crypto}{}{
\paragraph{Organization}}
Some preliminaries and notation are given in Sec.~\ref{sec:Prelim}. We give formal definitions of quantum homomorphic encryption and related concepts, including security definitions, in Sec.~\ref{sec:QFHE-Defs}; this allows us to formally state our results in Sec.~\ref{sec:main-contributions}. Sec.~\ref{sec:scheme-Clifford} contains a basic quantum homomorphic encryption scheme,~$\CL$, for Clifford circuits that is used as a basis for $\EPR$\iftoggle{crypto}{}{, the entanglement-based quantum homomorphic encryption scheme} (Sec.~\ref{sec:scheme-EPR}), and
\iftoggle{crypto}{}{for }$\AUX$\iftoggle{crypto}{ }{, the auxiliary-qubit based quantum homomorphic encryption scheme }(Sec.~\ref{sec:scheme-AUX}).
\iftoggle{crypto}{Further details, including proofs of our main theorems, can be found in the full version \cite{BJ15}.}{}

\vspace{-5pt}
\section{Preliminaries and Notation}
\label{sec:Prelim}
\vspace{-5pt}
\iftoggle{crypto}{}{
\subsection{Notation}
}
A negligible function, 
$\eta(\cdot)$, is a function such that for every polynomial $p(\cdot)$, there exists an~$N$ such that for all integers $n > N$ it holds that $\eta(n) < \frac{1}{p(n)}$.
As a convention, if $a$ is a classical plaintext, we denote its encryption by $\tilde{a}$.  Throughout this work we use $\kappa$ to indicate the security parameter.\looseness=-1

\iftoggle{crypto}{}{
For a detailed and rigorous introduction to quantum information theory, we refer the reader to \cite{WatrousNotes}. In the remainder of this section, we give a brief overview of some of the necessary concepts, as well as our specific notation.}

A \emph{quantum register} is a quantum system, which we view as a physical object that stores quantum information. The contents of a quantum register are mathematically modelled as the set of trace-1, positive semidefinite operators, called \emph{density operators}, on $\mathcal{X}$, where $\mathcal{X}$
is a  complex Euclidean space. We denote the set of density operators on any space $\mathcal{X}$ by $D(\mathcal{X})$.

Quantum registers are denoted with calligraphic typeset\iftoggle{crypto}{}{, such as $\cal X$, $\cal Y$}.
Two \iftoggle{crypto}{}{(or more) }quantum systems, $\mathcal{X}$ and $\mathcal{Y}$, form a composite system by the tensor product\iftoggle{crypto}{}{ of the subsystems}, $\mathcal{X}\otimes \mathcal{Y}$. If $\rho\in D(\mathcal{X}\otimes \mathcal{Y})$ is a state on the joint system, we write $\rho^{\mathcal{X}}$ to denote $\Tr_{\mathcal{Y}}(\rho)$.
If $\mathcal{X}$ and $\mathcal{Y}$ have the same dimension, we denote this by $\mathcal{X} \equiv \mathcal{Y}$.
\iftoggle{crypto}{}{

}
The \emph{trace distance} between two states, $\rho$ and $\sigma$, is defined $\Delta(\rho, \sigma):=\Tr\(\sqrt{(\rho-\sigma)^\dagger(\rho-\sigma)}\)$.

A density matrix that is diagonal in the computational basis corresponds to a classical random variable. For a random variable $X$ on some set $\Sigma_X$, we define $\rho(X):=\sum_{x\in\Sigma_X}\Pr[X=x]\ket{x}\bra{x}$, the density matrix corresponding to $X$.
 A \emph{classical-quantum} state is a state of the form $\rho^{\mathcal{M}\mathcal{A}} = \sum_x \Pr[X=x]\egoketbra{x}^\mathcal{M} \otimes \rho_x^\mathcal{A}$.

One special quantum state on any system $\cal X$ is the \emph{completely mixed state}, $\frac{1}{\dim \cal X}\mathbb{I}_{\cal X}$, which we will sometimes denote by $\maxmix$ (where $\cal X$ should be implicit from the context). When $\cal X$ is interpreted as $\mathbb{C}^S$ for some finite set $S$, then $\maxmix$ corresponds to the uniform distribution on $S$.

A \emph{quantum channel} $\Phi:D(\regA)\to D(\regB)$ refers to any physically-realizable mapping on quantum registers.
 The identity channel on register~$\regR$ is denoted~$\mathbb{I}_\regR$.
 Let $\Phi$ be a quantum channel acting on register $\regA$, and $\rho^{\regA\regE}$ a quantum system held in the joint registers $\regA \otimes \regE$. Then to simplify notation, when it is clear from the context, we  write $\Phi(\rho^{\regA\regE})$ to mean  $(\Phi \otimes \id)(\rho^{\regA\regE})$.

\iftoggle{crypto}{}{
We mention a special type of channel, a \emph{conditional quantum channel}, which, on input the classical-quantum state $\sum_x \Pr[x]\egoketbra{x}^\mathcal{M} \otimes \rho_x^\mathcal{A}$,  outputs the quantum state:
\iftoggle{crypto}{
$\Tr_M \left(\sum_x \Pr[x]\egoketbra{x}^\mathcal{M} \otimes \Phi_x (\rho_x^\mathcal{A})\right)$ for quantum channels $\Phi_x$.
}{
\begin{equation*}
\Tr_M \left(\sum_x \Pr[x]\egoketbra{x}^\mathcal{M} \otimes \Phi_x (\rho_x^\mathcal{A})\right)
\end{equation*}
for quantum channels $\Phi_x: D(\regA)\to D(\regB)$.}}

\iftoggle{crypto}{}{Unless otherwise specified, a quantum \emph{measurement} refers to a measurement in the computational basis.
A quantum \emph{algorithm} is a polynomial-time uniform family of quantum circuits, implementing a family of quantum channels.}

\iftoggle{crypto}{
}{
\subsection{Quantum Circuits}
\label{sec:prelim-circuits}
}

We work with the \iftoggle{crypto}{gate set $\{\xgate,\zgate,\pgate,\cnot,\hgate\}$.}{set of quantum gates consisting of single-qubit preparation in the~$\ket{0}$ state, single-qubit measurements, as well as the gates the following unitary gates:
$$\xgate = \left[\begin{array}{cc} 0 & 1\\ 1 & 0\end{array}\right],
\quad\zgate = \left[\begin{array}{cc} 1 & 0\\ 0 & -1\end{array}\right],
\quad\pgate = \left[\begin{array}{cc} 1 & 0\\ 0 & i\end{array}\right],
\quad\tgate = \left[\begin{array}{cc} 1 & 0\\ 0 & e^{i\pi/4}\end{array}\right],$$
$$\hgate = \frac{1}{\sqrt{2}}\left[\begin{array}{cc}1 & 1\\1 & -1\end{array}\right],\quad\mbox{and}
\quad\cnot = \left[\begin{array}{cccc} 1 & 0 & 0 & 0\\ 0 & 1 & 0 & 0\\ 0 & 0 & 0 & 1\\ 0 & 0 & 1 & 0\end{array}\right].$$}
\iftoggle{crypto}{This gate set}{The set $\{\xgate,\zgate,\pgate,\cnot,\hgate\}$} applied to arbitrary wires (redundantly) generates the Clifford group, and adding any non-Clifford gate, such as $\tgate$, gives a generating set for all quantum circuits.
\iftoggle{crypto}{}{We note the following relations between these gates:
$$\xgate\zgate = - \zgate\xgate,\quad \tgate^2=\pgate,\quad\pgate^2=\zgate,\quad\hgate\xgate\hgate=\zgate,\quad \tgate\pgate=\pgate\tgate,\quad\pgate\zgate=\zgate\pgate.$$}

\iftoggle{crypto}{}{A classical circuit is \emph{layered} if it consists of alternating layers of either all `$+$' gates or all `$\times$' gates. The \emph{multiplicative depth} of a layered circuit is the number of `$\times$' layers. As we see in this work, a natural quantum analogue of  `$+$' gates are Clifford group gates, while the analogue of the~`$\times$' gate is the~$\tgate$-gate.\footnote{The analogy is due to the ``easiness'' of performing Clifford group computations on encrypted data, versus the ``hardness'' of performing non-Clifford group computations. Another way of seeing this is that the (reversible) quantum analogue of multiplication is the Toffoli gate: $\ket{x,y,z} \mapsto \ket{x,y,x\cdot y \oplus z}$. The Toffoli is a non-Clifford group gate that can be expressed in terms of $\tgate$-gates~\cite{Sel13}.} Thus, a \emph{layered quantum circuit} consists of alternating layers of either all Clifford group gates, or $\tgate$-gates. Then the $\tgate$-depth of a layered quantum circuit is the number of  such $\tgate$ layers~\cite{AMMR13}.}

\iftoggle{crypto}{
}{
\subsection{Quantum One-time Pad}
\label{sec:prelim-QOTP}
}
For a single-qubit \iftoggle{crypto}{}{system $\rho$ in }register $\regR$, and $a,b \in \{0,1\}$, we denote by $\QEnc_{a,b}: \regR \rightarrow \regR $ the quantum one-time pad encryption and by $\QDec_{a,b}:  \regR \rightarrow \regR$ the quantum one-time pad decryption~\cite{AMTW00}, \iftoggle{crypto}{$\mathsf{QEnc}_{a,b} : \rho \mapsto \xgate^a \zgate^b \rho \zgate^b \xgate^a$ and $\mathsf{QDec}_{a,b} = \mathsf{QEnc}_{a,b}$.}{namely:
\begin{equation}
\mathsf{QEnc}_{a,b} : \rho \mapsto \xgate^a \zgate^b \rho \zgate^b \xgate^a\qquad\mbox{and}\qquad
\mathsf{QDec}_{a,b} : \rho \mapsto \xgate^a \zgate^b \rho \zgate^b \xgate^a.
\end{equation}

}
It is easy to see that  $\QDec_{a,b} \circ \QEnc_{a,b} = \id_{\regR}$. By specifying that~$(a,b)$ be chosen  uniformly at random, we get that the encryption maps any input to the completely mixed state (from the point of view of the adversary), since for all~$\rho$,
\iftoggle{crypto}{$\frac{1}{4}\sum_{a,b} \xgate^a \zgate^b \rho \zgate^b \xgate^a = \frac{\id_2}{2}$.}{
\begin{equation}
\frac{1}{4}\sum_{a,b} \xgate^a \zgate^b \rho \zgate^b \xgate^a = \frac{\id_2}{2}\,.
\end{equation}
}

\vspace{-5pt}
\section{Definitions}
\label{sec:QFHE-Defs}
\vspace{-5pt}

\iftoggle{crypto}{We now}{In this section, we} formally define QHE schemes and their properties. In Sec.~\ref{sec:QFHE}, we \iftoggle{crypto}{}{first review classical FHE, and then }define~QHE in the public-key setting. Sec.~\ref{sec:security-QHE} carefully defines the security of QHE, \iftoggle{crypto}{giving}{by considering} two definitions for security under chosen plaintext attacks, \iftoggle{crypto}{shown in the full version \cite{BJ15} to be equivalent}{and showing that they are equivalent}.  Sec.~\ref{sec:QHE-correctness} defines correctness and compactness for QHE, culminating in a complete definition of quantum fully homomorphic encryption\iftoggle{crypto}{}{ (Def.~\ref{defn:QFHE})}.  Sec.~\ref{sec:indivisible} deals with an important subtlety that arises in the quantum case: due to the no-cloning theorem, when a large system is encrypted with some auxiliary quantum information needed for decryption, that auxiliary information cannot be copied and given to every subsystem, but rather, the system must now be decrypted as a whole, rather than subsystem-by-subsystem.  We also define compactness and quasi-compactness in this context. Finally, \iftoggle{crypto}{}{for technical reasons, }one of our schemes ($\AUX$) must be used in the symmetric-key setting, \iftoggle{crypto}{defined}{which we define} in Sec.~\ref{sec:symmetric-key}.
{We do not address the issue of \emph{circuit privacy}~\cite{GHV10}, leaving this question for future work}.

\vspace{-10pt}
\subsection{Classical and Quantum  Homomorphic Encryption}
\label{sec:QFHE}
\vspace{-5pt}

Our schemes rely on a classical fully homomorphic encryption scheme\iftoggle{crypto}{}{; for completeness, we include a definition in App.~\ref{appendix:Classical-FHE}}.
Since our adversaries are modelled as being \emph{quantum} polynomial-time, we need a further security guarantee on the classical scheme, namely that it is secure against \emph{quantum} adversaries (see Def.~\ref{defn:classical-cipher-q-IND-CPA}). Fortunately, much of classical fully homomorphic encryption uses lattice-based cryptography, which exploits one of the few conjectured ``quantum-safe'' assumptions~\cite{MG09}. Among all known solutions, the scheme of~\cite{BV11} appears to be the best for our purposes, as it bases its security on the \emph{learning with errors} (LWE) assumption \cite{Regev2005, Regev2009}, which has been equated with worst-case hardness of ``short vector problems" on arbitrary lattices.

\iftoggle{crypto}{\vspace{-5pt}}{}
\begin{definition} [q-IND-CPA]
\label{defn:classical-cipher-q-IND-CPA}
A classical homomorphic encryption scheme $\mathsf{HE}$ is q-IND-CPA secure if  for any \emph{quantum} polynomial-time adversary $\advA$, there exists a negligible function $\eta$ such that for $(pk, evk, sk) \leftarrow \mathsf{HE.Keygen(1^\kappa)}$:
\vspace{-5pt}
\begin{equation*}
\left|\mathrm{Pr}[\advA(pk, evk, \mathsf{HE.Enc}_{pk}(0)) = 1] - \mathrm{Pr}[\advA(pk, evk, \mathsf{HE.Enc}_{pk}(1)) = 1] \right|\leq \eta(\kappa)\,.  \end{equation*}
\end{definition}
\noindent\iftoggle{crypto}{Although a classical scheme that is q-IND-CPA is also IND-CPA, the converse may not be true.}{ We note that a number of recent works examine the security of classical schemes against quantum superposition attacks~\cite{Zandry12, BDFS11, BZ13b}.  In this context, our definition of q-IND-CPA above models security for classical plaintexts only (with an arbitrary learning phase, since the public key is given). Furthermore, we note that a classical homomorphic encryption scheme that is q-IND-CPA, is also IND-CPA. The converse, however, may not be true (in particular, if the IND-CPA property depends on a computational assumption that is broken by quantum computers).} Note, however,  that any proof that a scheme is IND-CPA can potentially be turned into a proof for q-IND-CPA if all statements still hold when ``probabilistic polynomial-time adversary'' is replaced by ``quantum polynomial-time adversary'' (see~\cite{Song14}).

We now give our new definitions for quantum homomorphic encryption. In our definitions, both $pk$, the public encryption key, and $sk$, the secret decryption key, are classical, whereas the evaluation key is allowed to be a quantum state. \iftoggle{crypto}{}{This choice is simply based on what is needed by our schemes.}

\iftoggle{crypto}{\vspace{-5pt}}{}
\begin{definition}[QHE]A quantum homomorphic encryption scheme is a $4$-tuple of quantum algorithms $(\sf{QHE.KeyGen},\sf{QHE.Enc},\sf{QHE.Eval},\sf{QHE.Dec})$:
\begin{description}
\item[{Key Generation.}]
 $\QHE.\KeyGen: 1^\kappa \rightarrow (pk, sk, \rho_{evk})$.
 This algorithm takes a unary representation of the security parameter as input and outputs a classical public encryption key~$\pk$,  a classical secret decryption key~$\sk$ and a quantum evaluation key $\rho_{evk}\in D(\regR_{evk})$.

\item[{Encryption.}]
$\QHE.\Enc_{pk}: D(\mathcal{M}) \rightarrow D(\regC)$. For every possible\iftoggle{crypto}{}{ value of} $pk$, the quantum channel $\mathsf{Enc}_{pk}$  maps a state in the message space $\cal M$ to a state (the \emph{cipherstate}) in the cipherspace ${\cal C}$.

 \item[{Homomorphic Evaluation.}] $\QHE.\Eval^{\mathsf{C}}: D(\regR_{evk}\otimes  \mathcal{C}^{\otimes n}) \rightarrow D(\mathcal{C'}^{\otimes m})$.
For every quantum circuit $\mathsf{C}$, with induced channel $\Phi_{\mathsf{C}}:D(\mathcal{M}^{\otimes n})\rightarrow D(\mathcal{M}^{\otimes m})$, we define a channel $\mathsf{Eval}^{\mathsf{C}}$ that maps an $n$-fold cipherstate to an $m$-fold cipherstate, consuming the evaluation key in the process.\iftoggle{crypto}{}{\footnote{Since we have not specified any requirement on the behaviour of this channel, we can define $\mathsf{Eval}^{\mathsf{C}}$ to have some trivial behaviour on some, or even all quantum circuits $\mathsf{C}$. However, for the scheme to have the $\mathscr{S}$-homomorphic property (Def.~\ref{defn:C-homomorphism}), this cannot be the case for any circuit in $\mathscr{S}$.}}

\item[{Decryption.}] $\mathsf{QHE.Dec}_{sk} : D(\regC') \rightarrow D(\mathcal{M})$. For every possible \iftoggle{crypto}{\!\!}{value of} $sk$,  $\mathsf{Dec}_{sk}$ is a quantum channel that maps the state in $D(\regC')$ to a quantum state in $D(\mathcal{M})$.

 \end{description}
\end{definition}

\vspace{-12pt}
\subsection{Security of Quantum Homomorphic Encryption}
\label{sec:security-QHE}
\vspace{-5pt}

We now define a notion of security for QHE analogous to the classical notion of indistinguishability under chosen plaintext attack. \iftoggle{crypto}{}{As in the classical case, there are several possible definitions, ranging from a relatively simple experiment (Def.~\ref{def:q-IND-CPA1}) to multiple messages (Def.~\ref{def:q-IND-CPA-mult}). As evidence of the robustness of these definitions, we show that they  are equivalent; this strengthens our results since security in the most general case follows from security for the simplest definition.
The proof of equivalence is similar to the classical case (see, \emph{e.g.}~\cite{KL08}), and is included in App.~\ref{sec:appendix-equivalent-q-IND-CPA}\iftoggle{crypto}{}{ for completeness}.} We note that, by taking the evaluation key to be empty, our definitions\iftoggle{crypto}{}{ and theorems} are trivially applicable to the scenario of quantum public-key encryption (\emph{i.e.}~without a homomorphic property).

\iftoggle{crypto}{}{\paragraph{CPA security}} The CPA indistinguishability experiment is given below and illustrated in Fig.~\ref{fig:q-IND-CPA1}. The experiment interacts with an adversary $\advA$, which is a pair of \iftoggle{crypto}{poly-nomial-time}{polynomial-time} quantum algorithms $(\advA_1,\advA_2)$ (which we also \iftoggle{crypto}{call}{refer to as} adversaries). \iftoggle{crypto}{}{The first algorithm $\advA_1$ implements a quantum channel $D(\regR_{evk})\rightarrow D(\mathcal{M}\otimes\mathcal{E})$ conditioned on $pk$, where $\mathcal{E}$ is an arbitrary environment. The second algorithm $\advA_2$ maps $D(\mathcal{C}\otimes\mathcal{E})$ to a bit.}

\begin{figure}[h]
\centering
\begin{tikzpicture}

\node at (0,0) {
\begin{tikzpicture}
\draw (-.25,1.54) -- (2,1.54);
\draw (-.25,1.46) -- (2,1.46);
\draw (.5,0)--(3.5,0);
\draw (.5,1)--(3.5,1);
\draw (-.25,0.04)--(.5,0.04);
\draw (-.25,-.04)--(.5,-.04);
\draw (-.25,1) -- (.5,1);
\draw (3.9,.46)--(4.4,.46);
\draw (3.9,.54)--(4.4,.54);

\filldraw[fill=white] (-.75,-.25) rectangle (-.25,1.75);
\node[rotate=90] at (-.5,.75) {\small$\KeyGen(1^\kappa)$};
\node at (0,1.7) {$pk$};
\node at (.1,1.2) {\small $\mathcal{R}_{evk}$};
\node at (0,.2) {$pk$};

\filldraw[fill=white] (.5,1.25) rectangle (1,-.25);
\node at (.75,.5) {$\advA_1$};

\node at (1.25,1.2) {${\small\cal M}$};
\node at (1.25,.2) {${\small\cal E}$};

\filldraw[fill=white] (1.7,1.75) rectangle (2.7,.75);
\node at (2.2,1.25) {$\Xi_{\QHE}^{{\sf cpa},r}$};

\node at (2.95,1.2) {$\small\cal C$};

\filldraw[fill=white] (3.4,1.25) rectangle (3.9,-.25);
\node at (3.65,.5) {$\advA_2$};

\node at (4.65,.5) {$r'$};
\end{tikzpicture}
};

\node at (6,0) {
\begin{tikzpicture}
\node at (0,1) {$\Xi_{\QHE}^{{\sf cpa},0}$:};
\draw (.75,1)--(1.5,1);
\node at (1.65,1) {\meas};
\node at (2.5,1) {$\ket{\mathbf{0}}$};
\draw (2.75,1)--(4.75,1);
\node at (3,1.2) {\small $\cal M$};
\filldraw[fill=white] (3.25,1.25) rectangle (4.25,.75);
\node at (3.75,1) {$\mathsf{Enc}_{pk}$};
\node at (4.5,1.2) {\small $\cal C$};
\node at (1,1.2) {\small $\cal M$};

\node at (0,0) {$\Xi_{\QHE}^{{\sf cpa},1}$:};
\draw (.75,0)--(4.75,0);
\filldraw[fill=white] (3.25,.25) rectangle (4.25,-.25);
\node at (3.75,0) {$\mathsf{Enc}_{pk}$};
\node at (1,.2) {\small $\cal M$};
\node at (4.5,.2) {\small $\cal C$};

\draw[dashed] (-.75,1.5) rectangle (5,.5);

\draw[dashed] (-.75,-.5) rectangle (5,.5);

\end{tikzpicture}
};

\end{tikzpicture}
\caption{The quantum CPA indistinguishability experiment.}\label{fig:q-IND-CPA1}
\end{figure}

\noindent\textbf{The quantum CPA indistinguishability experiment $\mathsf{PubK^{cpa}_{\advA, QHE}} (\kappa)$}
\begin{enumerate}
\item $\mathsf{KeyGen}(1^\kappa)$ is run to obtain keys $(pk,sk,\rho_{evk})$.
\item Adversary $\advA_1$ is given $(pk,\rho_{evk})$ and outputs a quantum state on $\mathcal{M} \otimes \cal E$.
\item For $r\in \{0,1\}$, let   $\Xi_{\QHE}^{{\sf cpa},r}: D(\mathcal{M}) \rightarrow D(\mathcal{C})$ be:
$\Xi_{\QHE}^{{\sf cpa},0}(\rho)=  {\sf QHE.Enc}_{pk}(\egoketbra{\textbf{0}})$ and $\Xi_{\QHE}^{{\sf cpa},1}(\rho)=  {\sf QHE.Enc}_{pk}(\rho)$.
A random bit $r \in \{0,1\}$ is chosen and $\Xi_{\QHE}^{{\sf cpa},r}$ is applied to the state in  $\mathcal{M}$ (the output being a state in $\mathcal{C}$).
\item Adversary $\advA_2$ obtains the system in $\mathcal{C} \otimes \mathcal{E}$ and outputs a bit $r'$.
\item The output of the experiment is defined to be~1 if~$r'=r$ and~$0$~otherwise.  In case $r=r'$, we say that $\advA$ \emph{wins} the experiment.
\end{enumerate}

\iftoggle{crypto}{\vspace{-5pt}}{}
\begin{definition}[Quantum Indistinguishability under Chosen Plaintext Attack (q-IND-CPA)] \label{def:q-IND-CPA1}
A quantum homomorphic encryption scheme $\sf{QHE}$ is \emph{q-IND-CPA} secure if for any quantum \iftoggle{crypto}{polynomial-time}{poly-nomial-time} adversary $\advA = (\advA_1, \advA_2)$ there exists a negligible function $\eta$ such that\iftoggle{crypto}{ $\Pr[\mathsf{PubK^{cpa}_{\advA, QHE}} (\kappa) =1] \leq \frac{1}{2} +  \eta(\kappa)$.}{:
\begin{equation*}
\Pr[\mathsf{PubK^{cpa}_{\advA, QHE}} (\kappa) =1] \leq \frac{1}{2} +  \eta(\kappa)\,.
\end{equation*}}
\end{definition}

\iftoggle{crypto}{
In the case of classical cryptosystems, it is known that IND-CPA security, the classical analogue of Def.~\ref{defn:classical-cipher-q-IND-CPA}, implies a seemingly stronger security against an adversary who can send multiple messages to a challenger. In the quantum case, we can analogously define an experiment similar to $\mathsf{PubK}_{\advA,\QHE}^{\sf cpa}$, but where the adversary prepares a state in $\mathcal{M}^{\otimes t}\otimes \mathcal{M}^{\otimes t}$ and sends it to the challenger, who traces out either the first half or the second half of the system, before applying an encryption map to each of the remaining subspaces. The adversary must then decide which system was traced out.
In the full version \cite{BJ15}, we give a formal definition of this notion of security, which we call q-IND-CPA-mult, and prove the equivalence of q-IND-CPA and q-IND-CPA-mult. This strengthens our results since security in the most general case (q-IND-CPA-mult) follows from security for the simplest definition (q-IND-CPA).
}
{\paragraph{CPA-mult security} The CPA-mult indistinguishability experiment is similar to the CPA scenario above, but in this case the adversary chooses two $t$-tuples of messages, for any $t\geq 1$, and the challenger returns encryptions corresponding to one of the $t$-tuples. The adversary's task is then to guess which of the two $t$-tuples of messages has been encrypted. The experiment is given below\iftoggle{crypto}{}{; the illustration follows closely the one in Fig.~\ref{fig:q-IND-CPA-2} of App.~\ref{sec:appendix-equivalent-q-IND-CPA} (but with single messages replaced by $t$-fold messages)}.

\noindent\textbf{The quantum CPA-mult indistinguishability experiment $\mathsf{PubK^{cpa\text{-}mult}_{\advA, QHE}} (\kappa)$}
\begin{enumerate}
\item $\mathsf{KeyGen}(1^\kappa)$ is run to obtain keys $(pk,sk,\rho_{evk})$.
\item For $r \in \{0,1\}$, and $t\in O(\mathrm{poly}(\kappa))$, let $\mathcal{M}_r =\mathcal{M}_r^1\otimes  \cdots \otimes \mathcal{M}_r^t$, where  $\mathcal{M}_0^i \equiv\mathcal{M}_1^i\equiv\mathcal{M}$ (for all~$i$).
 Adversary $\advA_1$ is given $(pk,\rho_{evk})$ and outputs a quantum state $\rho$ in $\mathcal{M}_0 \otimes \mathcal{M}_1  \otimes \mathcal{E}$.
\item For $r\in \{0,1\}$, let   $\Xi^{{\sf cpa\text{-}mult},r}_{\QHE}: D(\mathcal{M}_0 \otimes \mathcal{M}_1) \rightarrow D(\mathcal{C}^1 \otimes \cdots \otimes \mathcal{C}^t)$ be given by
$\Xi_{\QHE}^{{\sf cpa\text{-}mult},0}(\rho)=  \Tr_{\mathcal{M}_1}({\sf Enc}_{pk}^{\otimes t} \otimes \id_{\mathcal{M}_1})(\rho)$ and $\Xi_{\QHE}^{\textsf{cpa-mult},1}(\rho)=  \Tr_{\mathcal{M}_0}(\id_{\mathcal{M}_0} \otimes {\sf Enc}_{pk}^{\otimes t})(\rho)$.
A random bit $r \in \{0,1\}$ is chosen and $(\Xi_{\QHE}^{\textsf{cpa-mult},r} \otimes \id_{\cal E})$ is applied to $\rho$ (the output being a state in $\mathcal{C}^{\otimes t} \otimes \mathcal{E}$).
\item Adversary $\advA_2$ obtains the system in $\mathcal{C}^{\otimes t} \otimes \mathcal{E}$ and outputs a bit $r'$.
\item The output of the experiment is defined to be~1 if~$r'=r$ and~$0$~otherwise. In case $r=r'$, we say that $\advA$ \emph{wins} the experiment.
\end{enumerate}

\begin{definition}[Quantum Indistinguishability under Multiple Chosen Plaintext Attack]\label{def:q-IND-CPA-mult}
A quantum homomorphic scheme $\sf{QHE}$ is \emph{q-IND-CPA-mult} secure if for all quantum polynomial-time adversaries \iftoggle{crypto}{$\advA$}{$\advA = (\advA_1, \advA_2)$} there exists a negligible function $\eta$ such that\iftoggle{crypto}{ $\Pr[\mathsf{PubK^{{cpa\text{-}mult}}_{\advA, QHE}} (\kappa) =1] \leq \frac{1}{2} +  \eta(\kappa)$.}{:
\begin{equation*}
\Pr[\mathsf{PubK^{{cpa\text{-}mult}}_{\advA, QHE}} (\kappa) =1] \leq \frac{1}{2} +  \eta(\kappa)\,.
\end{equation*}}
\end{definition}

\begin{theorem}[Equivalence of q-IND-CPA and q-IND-CPA-mult] \label{thm:equiv-IND-CPA}
Let $\QHE$ be a quantum homomorphic encryption scheme. Then $\QHE$ is q-IND-CPA if and only if $\QHE$ is q-IND-CPA-mult.
\end{theorem}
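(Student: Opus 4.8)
The plan is to prove the two directions separately, mirroring the classical hybrid-argument proof (as in \cite{KL08}), and checking that every step survives the substitution of QPT adversaries for PPT ones.

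The direction q-IND-CPA-mult $\Rightarrow$ q-IND-CPA is immediate: it is just the special case $t=1$. Given a q-IND-CPA adversary $\advA=(\advA_1,\advA_2)$, note that $\mathcal{M}_0 \otimes \mathcal{M}_1 = \mathcal{M}\otimes\mathcal{M}$ with $t=1$, and the experiment $\mathsf{PubK^{cpa\text{-}mult}}$ with $t=1$ coincides with $\mathsf{PubK^{cpa}}$ once we identify the single ``challenge'' message slot appropriately. So $\QHE$ being q-IND-CPA-mult forces the winning probability to be $\leq \tfrac12+negl(\kappa)$, which is exactly q-IND-CPA.

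For the nontrivial direction, q-IND-CPA $\Rightarrow$ q-IND-CPA-mult, I would run a standard hybrid argument. Suppose towards a contradiction that there is a QPT adversary $\advA=(\advA_1,\advA_2)$ against q-IND-CPA-mult with non-negligible advantage. For $\rho$ the state produced by $\advA_1$ on registers $\mathcal{M}_0 \otimes \mathcal{M}_1 \otimes \mathcal{E}$, define hybrid channels $\Xi^{(j)}$ for $j=0,1,\dots,t$ that encrypt the first $j$ coordinates of the $\mathcal{M}_0$-tuple and the last $t-j$ coordinates of the $\mathcal{M}_1$-tuple (tracing out the unused slots); then $\Xi^{(0)} = \Xi_{\QHE}^{\textsf{cpa-mult},1}$ and $\Xi^{(t)} = \Xi_{\QHE}^{\textsf{cpa-mult},0}$, so a non-negligible gap between the $r=0$ and $r=1$ outputs means, by the triangle inequality and $t\in O(\mathrm{poly}(\kappa))$, that some adjacent pair $(\Xi^{(j-1)},\Xi^{(j)})$ is distinguished by $\advA_2$ with non-negligible advantage $1/\mathrm{poly}(\kappa)$. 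The two channels $\Xi^{(j-1)}$ and $\Xi^{(j)}$ differ only in whether the $j$-th slot gets an encryption of the message in $\mathcal{M}_0^j$ or in $\mathcal{M}_1^j$. I would then build a q-IND-CPA adversary $\advB=(\advB_1,\advB_2)$: $\advB_1$ runs $\advA_1$, uses its own public key $pk$ to encrypt the appropriate $j-1$ slots from $\mathcal{M}_0$ and $t-j$ slots from $\mathcal{M}_1$ itself (it has $pk$, so it can do this), forwards one of the two candidate $j$-th messages to the q-IND-CPA challenger as its chosen plaintext, and packs all the already-encrypted slots plus the other candidate message plus $\mathcal{A}_1$'s environment into its environment register $\mathcal{E}'$; then $\advB_2$ receives the challenge ciphertext, assembles the full $t$-fold cipherstate in $\mathcal{C}^{\otimes t}\otimes\mathcal{E}$, runs $\advA_2$, and outputs its bit. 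One checks that $\advB$'s advantage equals $\advA$'s advantage on the $(j-1,j)$ hybrid pair, up to the bookkeeping of which of the two q-IND-CPA challenge modes corresponds to which hybrid — here one uses that the q-IND-CPA challenge is between $\mathsf{Enc}_{pk}(\rho)$ and $\mathsf{Enc}_{pk}(\ketbra{\mathbf 0}{\mathbf 0})$, so a small additional reduction (or the CPA-style definition with two adversary-chosen messages, which by the same hybrid trick is equivalent) is needed to turn ``encrypt $\rho$ vs. encrypt $\mathbf 0$'' into ``encrypt message $A$ vs. encrypt message $B$.''

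The main obstacle — and the only place the quantum setting demands care beyond the classical template — is the handling of quantum side information: the ``message'' slots may be entangled with each other and with $\mathcal{E}$, so the reduction $\advB_1$ must encrypt some slots while holding the rest coherently, and one must verify that tracing out the unused tuple (as the experiment's $\Xi$ channels do via $\Tr_{\mathcal{M}_0}$ / $\Tr_{\mathcal{M}_1}$) commutes appropriately with the partial encryptions so that the hybrid channels are well-defined and $\advB$ faithfully simulates the hybrid experiment. Since all the operations involved ($\mathsf{Enc}_{pk}$ applied tensored with identity, partial traces, and the QPT adversary algorithms) are valid quantum channels and composition of QPT maps is QPT, there is no genuine difficulty, but the argument must be phrased at the level of channels rather than sampled plaintexts. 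I expect this to be routine given the definitions already set up, and indeed the authors defer it to Appendix~\ref{sec:appendix-equivalent-q-IND-CPA}.
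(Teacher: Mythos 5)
Your proposal is correct and follows essentially the same route as the paper: the paper likewise first bridges the ``$\mathsf{Enc}_{pk}(\rho)$ vs.\ $\mathsf{Enc}_{pk}(\ket{\mathbf 0}\bra{\mathbf 0})$'' challenge to a two-adversary-chosen-message challenge (formalized as the intermediate notion q-IND-CPA-2, proved equivalent by running the adversary and its bit-complement and summing the two bounds), and then runs the same hybrid over the $t$ slots with the channels $\Psi_i$. The only cosmetic difference is that the paper has the reduction choose the hybrid index $i$ uniformly at random (paying the factor $t\in O(\mathrm{poly}(\kappa))$ in the advantage), which keeps the reduction uniform, rather than arguing that some fixed adjacent pair must be distinguished.
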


\noindent The proof of Thm.~\ref{thm:equiv-IND-CPA} is given in App.~\ref{sec:appendix-equivalent-q-IND-CPA}.
} 

\vspace{-10pt}
\subsection{Correctness and Compactness of \iftoggle{crypto}{QHE}{Quantum Homomorphic Encryption}}
\label{sec:QHE-correctness}
\vspace{-5pt}
Next, we give\iftoggle{crypto}{}{, in Def.~\ref{defn:C-homomorphism},} a notion that encapsulates correctness of both encryption and evaluation, with respect to a class
$\mathscr{S}$ of quantum circuits\iftoggle{crypto}{}{ (when $\mathscr{S}$ is a strict subset of all computations, the literature  sometimes  refers to this as a ``somewhat homomorphic'' scheme)}.   In the classical context, it is common to restrict attention to circuits that output a single bit, since any deterministic string can be computed bit-by-bit. We cannot do this quantumly, as a quantum state cannot be described\iftoggle{crypto}{}{, or generated,} qubit-by-qubit. We therefore consider correctness
as a global property of the output.
 Furthermore, as quantum data can be entangled, we require that a correct scheme preserve this entanglement and thus explicitly include an auxiliary space in the definition below.

\iftoggle{crypto}{\vspace{-5pt}}{}
\begin{definition}[$\mathscr{S}$-homomorphic]
 \label{defn:C-homomorphism}
  Let $\mathscr{S} = \{\classS_\kappa\}_{\kappa \in \mathbb{N}}$ be a class of quantum circuits.
A quantum encryption scheme  \textsf{QHE} is $\mathscr{S}$-homomorphic (or homomorphic for \iftoggle{crypto}{}{the class }$\mathscr{S}$) if
for any sequence of circuits $\{\mathsf{C}_\kappa \in \mathscr{S}_\kappa\}_{\kappa}$ with induced channels $\Phi_{\mathsf{C}_\kappa}:{\cal M}^{\otimes n(\kappa)}\rightarrow {\cal M}^{\otimes m(\kappa)}$, and  input $\rho\in D({\mathcal{M}^{\otimes n(\kappa)}\otimes \regE})$, there exists a negligible function $\eta$ such that for $(pk,  sk, \rho_\evk) \leftarrow \mathsf{QHE.Keygen(1^\kappa)}$:
\iftoggle{crypto}{\vspace{-5pt}}{}
\begin{equation}
\label{eqn:C-homomorphism}
\Delta\left( \mathsf{QHE.Dec}^{\otimes m(\kappa)}_{sk} \(\mathsf{QHE.Eval}^{\mathsf{C}_\kappa} \(\rho_{evk},  \mathsf{QHE.Enc}^{\otimes n}_{pk} (\rho)\)\) , \Phi_{\mathsf{C}_\kappa} (\rho)\right) = \eta(\kappa)\,.
\end{equation}

\end{definition}

We point out two properties of the above definition. First, we do not require that ciphertexts be  decryptable themselves, only that they become
decryptable after homomorphic evaluation, however, as long as $\QHE$ is homomorphic for the class of identity circuits, we can effectively decrypt a ciphertext by first homomorphically evaluating the identity.  Second, we do not require that the output of $\QHE.\Eval$
be able to undergo additional homomorphic evaluations; indeed, \iftoggle{crypto}{if}{in the case that} the evaluation key $\rho_{evk}$ is quantum, it will in general be ``consumed'' by the $\QHE.\Eval$ process, rendering any future applications of $\QHE.\Eval$ impossible.

Analogously to the classical case, we define compactness\iftoggle{crypto}{}{ (also parametrized by a class of circuits~$\mathscr{S}$)}, which requires that the complexity of $\QHE.\Dec$ be independent of the evaluated circuit, ruling out \iftoggle{crypto}{}{trivial quantum fully homomorphic encryption }schemes where applying the circuit is delayed until after decryption\iftoggle{crypto}{}{ (see the text following Def.~\ref{defn:compact-indivisible} for an informal description of the trivial scheme, $\TRIV$)}.

\iftoggle{crypto}{\vspace{-5pt}}{}
\begin{definition}[$\mathscr{S}$-compactness]
Let $\mathscr{S}=\{\mathscr{S}_{\kappa}\}_{\kappa\in\mathbb{N}}$ be a class of quantum circuits. A quantum encryption scheme $\sf{QHE}$ is $\mathscr{S}$-\emph{compact} if there exists a polynomial $p$ such that for any sequence of circuits $\{\mathsf{C}_\kappa \in \mathscr{S}_\kappa\}_\kappa$,
the circuit complexity of applying $\QHE.\Dec$ to the output of $\QHE.\Eval^{\mathsf{C}_{\kappa}}$ is at most $p(\kappa)$.
\iftoggle{crypto}{}{(That is, the circuit complexity of decryption does not depend on the circuit complexity of $\mathsf{C}_{\kappa}$).}

If $\QHE$ is $\mathscr{S}$-compact for $\mathscr{S}$ the class of all quantum circuits over some universal gate set, then we simply say that $\QHE$ is \emph{compact}.
\end{definition}
\iftoggle{crypto}{\vspace{-5pt}}{}

Although this work leaves open the \iftoggle{crypto}{question}{central problem} of quantum fully homomorphic encryption, we have established all the machinery relevant for a formal definition\iftoggle{crypto}{:}{, which we include below.}

\iftoggle{crypto}{\vspace{-5pt}}{}
\begin{definition}[Quantum Fully Homomorphic Encryption] \label{defn:QFHE}
 A scheme is \emph{a quantum fully homomorphic encryption scheme} if it is both compact and homomorphic for the class of all quantum circuits over some universal gate set.
\end{definition}
\iftoggle{crypto}{\vspace{-5pt}}{}

\vspace{-10pt}
\subsection{Indivisible Schemes}
\label{sec:indivisible}
\vspace{-5pt}

In general, a quantum system is not equal to the sum of its parts. Because of this, for one of our schemes (as given in Sec.~\ref{sec:scheme-EPR}), it is convenient (if not necessary, by the no-cloning theorem~\cite{WZ82}) to define the output of $\QHE.\Eval$ as containing, in addition to a series of cipherstates corresponding to each qubit, some auxiliary quantum register, possibly entangled with each cipherstate. Then the decryption operation,  $\QHE.\Dec$ must operate on the entire quantum system, rather than qubit-by-qubit. This is in contrast to a classical scheme, in which we could make a copy of the auxiliary register for each encrypted bit, enabling the decryption of individual bits, without decrypting the entire system.

\iftoggle{crypto}{\vspace{-5pt}}{}
\begin{definition}An \emph{indivisible} quantum homomorphic encryption scheme is a \iftoggle{crypto}{QHE}{quantum homomorphic encryption} scheme with  $\QHE.\Eval$ and $\QHE.\Dec$ re-defined as:
\begin{description}
\item[{Homomorphic Evaluation.}] $\QHE.\Eval^{\mathsf{C}}: D(\regR_{evk}\otimes  \mathcal{C}^{\otimes n}) \rightarrow D(\regR_{aux} \otimes \mathcal{C'}^{\otimes m})$.
 Compared to $\QHE.\Eval$ in a standard QHE, this algorithm outputs an additional auxiliary quantum register $\regR_{aux}$. This extra information is used in the decryption phase. Since the state of $\mathcal{R}_{aux}$ may be entangled with the state of each $\cal C'$, the system in $\mathcal{R}_{aux}\otimes {\cal C'}^{\otimes  m}$ can no longer be considered subsystem-by-subsystem.

\item[{Decryption.}] $\mathsf{QHE.Dec}_{sk} : D(\regR_{aux} \otimes  \mathcal{C'}^{\otimes m}) \rightarrow D(\mathcal{M}^{\otimes m})$.
  For every possible value of $sk$, $\mathsf{Dec}_{sk}$ is a quantum channel that maps an auxiliary register, together with an $m$-fold cipherstate, to an $m$-fold message in $D(\mathcal{M}^{\otimes m})$.

\end{description}
\end{definition}
\iftoggle{crypto}{\vspace{-5pt}}{}

\noindent We need to define compactness for an indivisible scheme\iftoggle{crypto}{}{ (recall that here, there is no notion of separating the individual output systems)}.

\iftoggle{crypto}{\vspace{-5pt}}{}
\begin{definition}[$\mathscr{S}$-compactness for an indivisible scheme]
 \label{defn:compact-indivisible}
Fix a class of quantum circuits, $\mathscr{S}=\{\mathscr{S}_{\kappa}\}_{\kappa\in\mathbb{N}}$. An indivisible \iftoggle{crypto}{QHE}{quantum homomorphic encryption} scheme $\sf{QHE}$ is $\mathscr{S}$-\emph{compact} if there exists a polynomial $p$ such that for any  sequence of circuits $\{\mathsf{C}_{\kappa} \in \mathscr{S}_{\kappa}\}_{\kappa}$
with \iftoggle{crypto}{}{induced }channels $\Phi_{\mathsf{C}_\kappa}:\mathcal{M}^{\otimes n(\kappa)}\rightarrow \mathcal{M}^{\otimes m(\kappa)}$,
the circuit complexity of applying $\QHE.\Dec^{\otimes m(\kappa)}$ to the output of $\QHE.\Eval^{\mathsf{C}_\kappa}$ is at most $p(\kappa,m(\kappa))$. \iftoggle{crypto}{}{(That is, the circuit complexity of decryption does not depend on the circuit complexity of $\mathsf{C}_{\kappa}$).}
\end{definition}
\iftoggle{crypto}{\vspace{-5pt}}{}

The trivial quantum fully homomorphic encryption scheme, $\TRIV$, is easily phrased as an indivisible scheme. \iftoggle{crypto}{Informally, $\TRIV$ is defined by taking $\TRIV.\KeyGen$ and $\TRIV.\Enc$ from any public-key encryption scheme, letting $\TRIV.\Eval^{\mathsf{C}}$ append a description of $\mathsf{C}$ to the cipherstate, and $\TRIV.\Dec$ decode the cipherstate, and then apply $\mathsf{C}$.}{Informally, $\TRIV$ is the following:
\begin{enumerate}
\item The algorithms $\TRIV.\KeyGen$ and $\TRIV.\Enc$ are taken from any quantum public-key encryption scheme.
\item  The algorithm $\TRIV.\Eval$ simply sets $\regR_{aux}$ to be the target circuit, $\mathsf{C}$, and otherwise outputs the cipherstates corresponding to the encrypted inputs.
\item The algorithm $\TRIV.\Dec$ first decrypts the cipherstates, then applies $\mathsf{C}$ and outputs the result.
\end{enumerate}}
Clearly, $\TRIV$ is homomorphic, but it is not compact, since $\TRIV.\Dec$ must evaluate the quantum circuit $\mathsf{C}$, and so its complexity scales with $G(\mathsf{C})$, the number of gates in $\mathsf{C}$.

Although  a decryption procedure with any dependence on $G$, or any other property of $\mathsf{C}$, is not compact, it is still interesting  to consider schemes whose decryption procedure has complexity that scales sublinearly in $G$ (such schemes are called \emph{quasi-compact} schemes~\cite{GentryThesis}). We give a formal definition that quantifies this notion for indivisible quantum homomorphic encryption schemes.

\iftoggle{crypto}{\vspace{-5pt}}{}
\begin{definition}[quasi-compactness]
Let $\classS=\{\classS_{\kappa}\}_{\kappa}$ be the set of all quantum circuits over some fixed universal gate set. \iftoggle{crypto}{For any $f:\classS\rightarrow\mathbb{R}_{\geq 0}$, an indivisible QHE}{Let $f:\classS\rightarrow \mathbb{R}_{\geq 0}$ be some function on the circuits in $\classS$. An indivisible quantum homomorphic encryption} scheme $\QHE$ is \emph{$f$-quasi-compact} if there exists a polynomial $p$ such that for any sequence of circuits $\{\mathsf{C}_{\kappa}\in\classS_{\kappa}\}_{\kappa}$ with induced channels
$\Phi_{\mathsf{C}_{\kappa}}:\mathcal{M}^{\otimes n(\kappa)}\rightarrow \mathcal{M}^{\otimes m(\kappa)}$, the circuit complexity of decrypting the output of $\QHE.\Eval^{\mathsf{C}_{\kappa}}$ is at most $f(\mathsf{C}_{\kappa})p(\kappa,m(\kappa))$.
\end{definition}
\iftoggle{crypto}{\vspace{-5pt}}{}
This definition allows us to consider schemes whose decryption complexity scales with some property of the evaluated circuit. We consider such a scaling non-trivial when it is smaller than~$G(\mathsf{C})$, the number of gates in $\mathsf{C}$.

\vspace{-10pt}
\iftoggle{crypto}{\subsection{Symmetric-Key Quantum Homomorphic Encryption}}
{\subsection{Quantum Homomorphic Encryption in the Symmetric-Key Setting}}
\label{sec:symmetric-key}
\vspace{-5pt}

We have defined quantum homomorphic encryption as a \emph{public-key} encryption scheme.  For technical reasons, our final scheme, $\AUX$ is given in the symmetric-key setting, so in this section we define \iftoggle{crypto}{\emph{symmetric-key}}{functionality and security for symmetric-key} quantum homomorphic encryption. In the case of classical \iftoggle{crypto}{FHE}{fully homomorphic encryption}, symmetric-key encryption is known to be \emph{equivalent} to public-key encryption~\cite{R11}. In the quantum case, this is not known.
\iftoggle{crypto}{}{
}This section also contains the definition of a \emph{bounded} QHE scheme, which we again require for technical reasons in our symmetric-key scheme, $\AUX$.

\iftoggle{crypto}{\vspace{-5pt}}{}
\begin{definition}
A \emph{symmetric-key} \iftoggle{crypto}{QHE}{quantum homomorphic encryption} scheme is a quantum homomorphic encryption scheme  with  $\QHE.\KeyGen$ and $\QHE.\Enc$ re-defined as:
\begin{description}

\item[{Key Generation.}]
 $\QHE.\KeyGen: 1^\kappa \rightarrow (sk, \rho_{evk})$.
 This algorithm takes a unary representation of the security parameter as input and outputs  a secret encryption/decryption key~$\sk$ and a quantum evaluation key $\rho_{evk}\in D(\regR_{evk})$.

\item[{Encryption.}]
$\QHE.\Enc_{sk}: D(\mathcal{M}) \rightarrow D(\mathcal{C})$. For every possible value of $sk$, the quantum channel $\mathsf{Dec}_{sk}$ maps a state in the message space $\cal M$ to a state (the \emph{cipherstate}) in the cipherspace ${\cal C}$.

\end{description}
\end{definition}
\iftoggle{crypto}{\vspace{-5pt}}{}

 Next, we define a quantum homomorphic encryption scheme that
is \emph{bounded} by $n$, which forces the number of ciphertexts encrypted by $sk$ to be  at most~$n$. Furthermore, the scheme maintains a counter, $d$, of the number of previous encryptions, which can be thought of as allowing the scheme to avoid key reuse.

\iftoggle{crypto}{\vspace{-5pt}}{}
\begin{definition}A \emph{bounded} symmetric-key \iftoggle{crypto}{QHE}{quantum homomorphic encryption} scheme is a \iftoggle{crypto}{symmetric-key}{sym-metric-key} \iftoggle{crypto}{QHE}{quantum homomorphic encryption} scheme with  $\QHE.\KeyGen$, $\QHE.\Enc$, and $\QHE.\Dec$ re-defined as:
\begin{description}

\item[{Key Generation.}]
 $\QHE.\KeyGen: (1^\kappa, 1^n) \rightarrow (sk, \rho_{evk})$.

\item[{Encryption.}]
$\QHE.\Enc_{sk,d}: D(\mathcal{M}) \rightarrow D(\regC )$. Every time $\QHE.\Enc_{sk,d}$ is called, the register containing $d$ is incremented: $d\leftarrow d+1$. If $d>n$, $\QHE.\Enc_{sk,d}$ outputs $\bot$, indicating an error.

\item[{Decryption.}]
$\QHE.\Dec_{sk,d}:D(\mathcal{C}')\rightarrow D(\mathcal{M})$.
\end{description}
\end{definition}
\iftoggle{crypto}{\vspace{-5pt}}{}

\iftoggle{crypto}{
We can define q-IND-CPA security for the symmetric-key setting by allowing the adversary access to an encryption oracle $\mathsf{Enc}_{sk}(\cdot)$. We give details in \cite{BJ15}.
}
{\paragraph{Security of Symmetric Key Schemes}
In order to define \emph{indistinguishability under chosen plaintext attacks} in the symmetric-key setting,  we must equip the adversary with an encryption oracle $\Enc_{sk}(\cdot)$.
An \emph{adversary with access to an encryption oracle}, $\advA$ is a tuple of quantum channels $(\advA^{(1)}, \dots, \advA^{(q+1)})$, such that $\advA^{(1)}:D({\cal X})\rightarrow D(\mathcal{M}\otimes\mathcal{E})$ for some space $\cal X$, for $i=2,\dots,q$, $\advA^{(i)}:D(\mathcal{C}\otimes\mathcal{E})\rightarrow D(\mathcal{M}\otimes\mathcal{E})$, and $\advA^{(q+1)}:D(\mathcal{C}\otimes\mathcal{E})\rightarrow D(\mathcal{Y})$ for some space $\mathcal{Y}$. The interaction of the adversary and the encryption oracle is shown in Fig.~\ref{fig:adv-with-oracle}, and for the case of a bounded encryption scheme, in which the oracle also updates a counter, in Fig.~\ref{fig:adv-with-oracle-bounded}.

\iftoggle{crypto}{}{
\begin{figure}[h]
\centering
\begin{tikzpicture}
\node at (0,0){
\begin{tikzpicture}
\draw (-.5,2.04) -- (1.5,2.04);
\draw (-.5,1.96) -- (1.5,1.96);
\draw (-.5,.125)--(1.5,.125);

\filldraw[fill=white] (.25,2.25) rectangle (.75,.75);
\node[rotate=90] at (.5,1.5) {\small$\QHE.\Enc$};

\draw[->] (.35,.5)--(.35,.75);
\draw[->] (.65,.75)--(.65,.5);

\filldraw[fill=white] (.25,-.25) rectangle (.75,.5);
\node at (.5,.125) {$\advA$};

\node at (-.25,2.2) {$sk$};
\node at (1,2.2) {$sk$};


\node at (-.25,.325) {\small $\cal X$};
\node at (1,.325) {\small $\cal Y$};
\end{tikzpicture}
};

\node at (1.7,0) {$:=$};

\node at (7.75,0) {
\begin{tikzpicture}[xscale=1.1]
\draw (0,1.79)--(9.75,1.79);
\draw (0,1.71)--(9.75,1.71);
	\draw (1,.75)--(9,.75);
\draw (0,.375) -- (1,.375);	\draw (9.25,.375)--(9.75,.375);
	\draw (1,0)--(9,0);

\node at (.25,1.95) {$sk$};
\node at (.25,.575) {\small $\cal X$};

\filldraw[fill=white] (.5,-.25) rectangle (1.25,1);
\node at (.875,.375) {$\advA^{(1)}$};

\node at (1.5,.95) {\small $\cal M$};
\node at (1.5,.2) {\small $\cal E$};

\filldraw[fill=white] (2,.5) rectangle (2.5,2);
\node[rotate=90] at (2.25,1.25) {\small$\QHE.\Enc$};

\node at (2.75,1.95) {$sk$};
\node at (2.75,.95) {\small $\cal C$};

\filldraw[fill=white] (3.25,-.25) rectangle (4,1);
\node at (3.625,.375) {$\advA^{(2)}$};

\node at (4.25,.95) {\small $\cal M$};
\node at (4.25,.2) {\small $\cal E$};

\filldraw[fill=white] (4.75,.5) rectangle (5.25,2);
\node[rotate=90] at (5,1.25) {\small$\QHE.\Enc$};

\node at (5.5,1.95) {$sk$};
\node at (5.5,.95) {\small $\cal C$};

\fill[fill=white] (5.75,-.25) rectangle (6.25,2);
\node at (6,1.75) {$\dots$};
\node at (6,.75) {$\dots$};
\node at (6,0) {$\dots$};

\node at (6.5,1.95) {$sk$};
\node at (6.5,.95) {\small $\cal M$};
\node at (6.5,.2) {\small $\cal E$};

\filldraw[fill=white] (6.75,.5) rectangle (7.25,2);
\node[rotate=90] at (7,1.25) {\small$\QHE.\Enc$};

\node at (7.5,1.95) {$sk$};
\node at (7.5,.95) {\small $\cal C$};

\filldraw[fill=white] (8,-.25) rectangle (9.25,1);
\node at (8.625,.375) {$\advA^{(q+1)}$};

\node at (9.5,.575) {\small $\cal Y$};

\end{tikzpicture}
};

\end{tikzpicture}
\caption{An adversary $\advA$ that makes at most $q$ encryption oracle calls is a list of quantum channels $\advA^{(1)},\dots,\advA^{(q+1)}$ such that for $j=1,\dots,q$, $\advA^{(j)}$ sends a message to an encryption oracle, and $\advA^{(j+1)}$ receives the output. The full interaction is shown on the right, but we use the figure on the left as a short-hand for this interaction.}\label{fig:adv-with-oracle}
\end{figure}
}

\begin{figure}[h]
\centering
\begin{tikzpicture}
\node at (0,0){
\begin{tikzpicture}
\draw (-.5,2.04) -- (1.5,2.04);
\draw (-.5,1.96) -- (1.5,1.96);
\draw (-.5,1.04)--(1.5,1.04);
\draw (-.5,.96)--(1.5,.96);
\draw (-.5,.125)--(1.5,.125);

\filldraw[fill=white] (.25,2.25) rectangle (.75,.75);
\node[rotate=90] at (.5,1.5) {\small$\QHE.\Enc$};

\draw[->] (.35,.5)--(.35,.75);
\draw[->] (.65,.75)--(.65,.5);

\filldraw[fill=white] (.25,-.25) rectangle (.75,.5);
\node at (.5,.125) {$\advA$};

\node at (-.25,2.2) {$sk$};
\node at (1,2.2) {$sk$};

\node at (-.25,1.2) {$d$};
\node at (1.25,1.2) {$d+q$};


\node at (-.25,.325) {\small $\cal X$};
\node at (1,.325) {\small $\cal Y$};
\end{tikzpicture}
};

\node at (1.7,0) {$:=$};

\node at (7.75,0) {
\begin{tikzpicture}[xscale=1.1]
\draw (0,1.79)--(9.75,1.79);
\draw (0,1.71)--(9.75,1.71);
\draw (0,1.29)--(9.75,1.29);
\draw (0,1.21)--(9.75,1.21);
	\draw (1,.75)--(9,.75);
\draw (0,.375) -- (1,.375);	\draw (9.25,.375)--(9.75,.375);
	\draw (1,0)--(9,0);

\node at (.25,1.95) {$sk$};
\node at (.25,1.45) {$d$};
\node at (.25,.575) {\small $\cal X$};

\filldraw[fill=white] (.5,-.25) rectangle (1.25,1);
\node at (.875,.375) {$\advA^{(1)}$};

\node at (1.5,.95) {\small $\cal M$};
\node at (1.5,.2) {\small $\cal E$};

\filldraw[fill=white] (2,.5) rectangle (2.5,2);
\node[rotate=90] at (2.25,1.25) {\small$\QHE.\Enc$};

\node at (2.75,1.95) {$sk$};
\node at (3,1.45) {$d+1$};
\node at (2.75,.95) {\small $\cal C$};

\filldraw[fill=white] (3.25,-.25) rectangle (4,1);
\node at (3.625,.375) {$\advA^{(2)}$};

\node at (4.25,.95) {\small $\cal M$};
\node at (4.25,.2) {\small $\cal E$};

\filldraw[fill=white] (4.75,.5) rectangle (5.25,2);
\node[rotate=90] at (5,1.25) {\small$\QHE.\Enc$};

\node at (5.5,1.95) {$sk$};

\node at (5.5,.95) {\small $\cal C$};

\fill[fill=white] (5.75,-.25) rectangle (6.25,2);
\node at (6,1.75) {$\dots$};
\node at (6,1.25) {$\dots$};
\node at (6,.75) {$\dots$};
\node at (6,0) {$\dots$};

\node at (5.75,1.45) {$d+2$};

\node at (6.5,1.95) {$sk$};
\node at (6.5,.95) {\small $\cal M$};
\node at (6.5,.2) {\small $\cal E$};

\filldraw[fill=white] (6.75,.5) rectangle (7.25,2);
\node[rotate=90] at (7,1.25) {\small$\QHE.\Enc$};

\node at (7.5,1.95) {$sk$};
\node at (7.75,1.45) {$d+q$};
\node at (7.5,.95) {\small $\cal C$};

\filldraw[fill=white] (8,-.25) rectangle (9.25,1);
\node at (8.625,.375) {$\advA^{(q+1)}$};

\node at (9.5,.575) {\small $\cal Y$};

\end{tikzpicture}
};

\end{tikzpicture}
\caption{An adversary $\advA$ that makes at most $q$ encryption oracle calls to a bounded encryption oracle is a list of quantum channels $\advA^{(1)},\dots,\advA^{(q+1)}$ with the interaction shown on the right. We use the figure on the left as a short-hand for this interaction.}\label{fig:adv-with-oracle-bounded}
\end{figure}

Just as in the public-key setting, we can define a quantum CPA indistinguishability experiment for the symmetric-key setting, $\mathsf{SymK}_{\advA,\QHE}^{\sf cpa}(\kappa)$. An adversary for $\mathsf{SymK}_{\advA,\QHE}^{\sf cpa}(\kappa)$ is a pair of adversaries with access to an encryption oracle $\advA=(\advA_1,\advA_2)=(\advA_1^{(1)},\dots,\advA_1^{(q+1)},\advA_2^{(1)},\dots,\advA_2^{(q'+1)})$ ($q$~is the number of oracle calls before the challenger is called, and $q'$ is the number of oracle calls after the challenger is called). The experiment $\mathsf{SymK}_{\advA,\QHE}^{\sf cpa}(\kappa)$ is defined below, and shown in Fig.~\ref{fig:symmetric-cpa}.\looseness=-1

\noindent\textbf{The quantum symmetric-key CPA indistinguishability experiment} $\mathsf{SymK^{cpa}_{\advA,\QHE}}(\kappa)$
\begin{enumerate}
\item $\KeyGen(1^\kappa)$ is run to obtain keys $(sk,\rho_{evk})$.
\item $\advA_1$ is given $\rho_{evk}$, and may make a polynomial number of calls to an encryption oracle $\QHE.\Enc_{sk}$ before outputting a quantum state in message space $\cal M$ and environment register $\cal E$.
\item A random bit $r\in\{0,1\}$ is chosen and $\Xi_{\QHE}^{{\sf cpa},r}$ is applied to the state in $\cal M$ (the output being a state in $\cal C$).
\item Adversary $\advA_2$ obtains the system $\mathcal{C}\otimes \mathcal{E}$ and may make a polynomial number of calls to an encryption oracle $\QHE.\Enc_{sk}$ before outputting a bit $r'$.
\item The output of the experiment is defined to be 1 if $r=r'$ and 0 otherwise. In case $r=r'$, we say that $\advA$ \emph{wins} the experiment.
\end{enumerate}

\begin{figure}[h]
\centering
\begin{tikzpicture}
\node at (0,0){
\begin{tikzpicture}
\draw (0,2.54)--(5.5,2.54);
\draw (0,2.46)--(5.5,2.46);
\draw (0,.75)--(5,.75);
		\draw (5,.415)--(5.5,.415);
		\draw (5,.335)--(5.5,.335);
\draw (1,0)--(5,0);

\filldraw[fill=white] (-.5,.5) rectangle (0,2.75);
\node[rotate=90,align=center] at (-.25,1.625) {\small ${\sf QHE.KeyGen}$};

\node at (.25,2.7) {$sk$};
\node at (.4,.95) {\small $\mathcal{R}_{evk}$};

\filldraw[fill=white] (1,1.25) rectangle (1.5,2.75);
\node[rotate=90] at (1.25,2) {\small $\QHE.\Enc$};
\draw[->] (1.1,1)--(1.1,1.25);
\draw[->] (1.4,1.25)--(1.4,1);
\filldraw[fill=white] (1,-.25) rectangle (1.5,1);
\node at (1.25,.375) {$\advA_1$};

\node at (1.75,2.7) {$sk$};
\node at (1.75,.95) {\small $\cal M$};
\node at (1.75,.2) {\small $\cal E$};

\filldraw[fill=white] (2.5,.5) rectangle (3.5,2.75);
\node at (3,1.625) {$\Xi_{\QHE}^{\textsf{cpa},r}$};

\node at (3.75,2.7) {$sk$};
\node at (3.75,.95) {\small $\cal C$};

\filldraw[fill=white] (4.5,1.25) rectangle (5,2.75);
\node[rotate=90] at (4.75,2) {\small $\QHE.\Enc$};
\draw[->] (4.6,1)--(4.6,1.25);
\draw[->] (4.9,1.25)--(4.9,1);
\filldraw[fill=white] (4.5,-.25) rectangle (5,1);
\node at (4.75,.375) {$\advA_2$};

\node at (5.25,2.7) {$sk$};

\node at (5.75,.375) {$r'$};

\end{tikzpicture}
};

\node at (8,0){
\begin{tikzpicture}
\draw (0,2.54)--(5.5,2.54);
\draw (0,2.46)--(5.5,2.46);
\draw (.6,1.54)--(5.5,1.54);
\draw (.6,1.46)--(5.5,1.46);
\draw (0,.75)--(5,.75);
		\draw (5,.415)--(5.5,.415);
		\draw (5,.335)--(5.5,.335);
\draw (1,0)--(5,0);

\filldraw[fill=white] (-.5,.5) rectangle (0,2.75);
\node[rotate=90,align=center] at (-.25,1.625) {\small ${\sf QHE.KeyGen}$};

\node at (.25,2.7) {$sk$};
\node at (.5,1.5) {$1$};
\node at (.4,.95) {\small $\mathcal{R}_{evk}$};

\filldraw[fill=white] (1,1.25) rectangle (1.5,2.75);
\node[rotate=90] at (1.25,2) {\small $\QHE.\Enc$};
\draw[->] (1.1,1)--(1.1,1.25);
\draw[->] (1.4,1.25)--(1.4,1);
\filldraw[fill=white] (1,-.25) rectangle (1.5,1);
\node at (1.25,.375) {$\advA_1$};

\node at (1.75,2.7) {$sk$};
\node at (2,1.7) {\small$q+1$};
\node at (1.75,.95) {\small $\cal M$};
\node at (1.75,.2) {\small $\cal E$};

\filldraw[fill=white] (2.5,.5) rectangle (3.5,2.75);
\node at (3,1.625) {$\Xi_{\QHE}^{\textsf{cpa},r}$};

\node at (3.75,2.7) {$sk$};
\node at (4,1.7) {\small$q+2$};
\node at (3.75,.95) {\small $\cal C$};

\filldraw[fill=white] (4.5,1.25) rectangle (5,2.75);
\node[rotate=90] at (4.75,2) {\small $\QHE.\Enc$};
\draw[->] (4.6,1)--(4.6,1.25);
\draw[->] (4.9,1.25)--(4.9,1);
\filldraw[fill=white] (4.5,-.25) rectangle (5,1);
\node at (4.75,.375) {$\advA_2$};

\node at (5.25,2.7) {$sk$};
\node at (5.85,1.7) {\small$q'+q+2$};

\node at (5.75,.375) {$r'$};

\end{tikzpicture}
};
\end{tikzpicture}
\caption{The quantum CPA experiment for symmetric-key systems (left) and bounded symmetric-key systems (right).
}\label{fig:symmetric-cpa}
\end{figure}

\begin{definition}[Quantum Indistinguishability under Chosen Plaintext Attack (q-IND-CPA) for Symmetric Key Schemes]
\label{def:q-ind-cpa-Symmetric}
A symmetric-key quantum homomorphic encryption scheme $\QHE$ is q-IND-CPA secure if for all quantum polynomial-time adversaries with oracle access, \iftoggle{crypto}{$\advA$}{$\advA=(\advA_1^{(1)},\dots,\advA_1^{(q+1)},\advA_2^{(1)},\dots,\advA_2^{(q'+1)})$}, there exists a negligible function $\eta$ such that: \iftoggle{crypto}{$\Pr[\mathsf{SymK_{\advA,\QHE}^{cpa}}(\kappa)=1]\leq \frac{1}{2}+\eta(\kappa)$.}{
$$\Pr[\mathsf{SymK_{\advA,\QHE}^{cpa}}(\kappa)=1]\leq \frac{1}{2}+\eta(\kappa).$$}
\end{definition}

Similar to the case of public-key encryption (Sec.~\ref{sec:security-QHE}), it is straightforward to give the seemingly stronger variant of q-IND-CPA, \emph{q-IND-CPA-mult}, which is defined identically to the public-key case (Def.~\ref{def:q-IND-CPA-mult}) but with an adversary having access to an encryption oracle. However, just as in the public-key case, it turns out that these definitions are equivalent.

\begin{theorem}[Equivalence of q-IND-CPA and q-IND-CPA-mult in symmetric-key schemes] \label{thm:equiv-IND-CPA-sym}
Let $\QHE$ be a symmetric-key quantum homomorphic scheme. Then $\QHE$ is q-IND-CPA if and only if $\QHE$ is q-IND-CPA-mult.
\end{theorem}
\noindent The proof of Thm.~\ref{thm:equiv-IND-CPA-sym} is virtually identical to that of Thm.~\ref{thm:equiv-IND-CPA}, given in App.~\ref{sec:appendix-equivalent-q-IND-CPA}.
}

\vspace{-5pt}
\section{Main Contributions}
\label{sec:main-contributions}
\vspace{-5pt}
We now formally state our main results\iftoggle{crypto}{}{ (formal schemes and proofs are given in Sec.~\ref{sec:scheme-Clifford}--\ref{sec:scheme-AUX})}.
Our first theorem, Thm.~\ref{thm:main-Clifford}, establishes quantum homomorphic encryption for Clifford circuits.

\iftoggle{crypto}{\vspace{-5pt}}{}
\begin{theorem}(Clifford scheme, $\CL$)
\label{thm:main-Clifford}
Let $\classS$ be the class of Clifford circuits. Then assuming the existence of a classical fully homomorphic encryption scheme that is q-IND-CPA secure, there exists a quantum homomorphic encryption scheme that is q-IND-CPA, compact and $\mathscr{S}$-homomorphic.
\end{theorem}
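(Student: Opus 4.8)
The plan is to realize $\CL$ as the hybrid of the quantum one-time pad (Section~\ref{sec:prelim-QOTP}) and the assumed classical FHE scheme $\HE$ (Definition~\ref{defn:classical-cipher-q-IND-CPA}) sketched in Section~\ref{sec:intro-summary}. Concretely: $\CL.\KeyGen(1^\kappa)$ runs $\HE.\KeyGen(1^\kappa)$ to obtain $(pk,sk,evk)$ and outputs $pk$, $sk$, and $\rho_{evk}$ equal to a classical encoding of $evk$; $\CL.\Enc_{pk}$ encrypts qubit $i$ of the message as $\QEnc_{a_i,b_i}$ with freshly sampled uniform $a_i,b_i\in\{0,1\}$ and appends the classical ciphertexts $\tilde a_i=\HE.\Enc_{pk}(a_i)$, $\tilde b_i=\HE.\Enc_{pk}(b_i)$; for a Clifford circuit $\mathsf{C}$ written over $\{\xgate,\zgate,\pgate,\cnot,\hgate\}$ (with $\ket 0$-preparations and computational-basis measurements), $\CL.\Eval^{\mathsf{C}}$ applies each gate directly to the encrypted register while maintaining the invariant that the register equals $\bigotimes_i(\xgate^{a_i}\zgate^{b_i})$ applied to the correct intermediate plaintext and that the $\tilde a_i,\tilde b_i$ are $\HE$-encryptions of the \emph{current} key bits; and $\CL.\Dec_{sk}$ decrypts the final $\tilde a_i',\tilde b_i'$ and applies $\QDec_{a_i',b_i'}$ to qubit $i$. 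The invariant survives each gate because conjugating a tensor product of Paulis by any generator produces another tensor product of Paulis whose exponents are an affine function over $\mathbb{F}_2$ of the old ones --- e.g.\ $\hgate$ swaps $a_i\leftrightarrow b_i$, $\pgate$ sends $(a_i,b_i)\mapsto(a_i,a_i\oplus b_i)$, and $\cnot$ with control $i$, target $j$ sends $(a_i,b_i,a_j,b_j)\mapsto(a_i,b_i\oplus b_j,a_i\oplus a_j,b_j)$, ignoring global phases --- and this affine update is carried out \emph{on the encrypted key bits} using $\HE.\Eval$; a $\ket 0$-preparation just introduces a fresh encrypted key pair, and measuring qubit $i$ yields an outcome $c$ whose corrected value $c\oplus a_i$ is computed homomorphically.

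\textbf{Correctness and compactness.} That $\CL$ is $\mathscr{C}$-homomorphic follows by induction on the gates of $\mathsf{C}$: the base case is the definition of $\CL.\Enc$, each step is one of the conjugation identities above, so after $\CL.\Eval^{\mathsf{C}}$ the register is $\Phi_{\mathsf{C}}(\rho)$ encrypted under a Pauli one-time pad whose key bits are exactly those encrypted by the $\tilde a_i',\tilde b_i'$, which $\CL.\Dec_{sk}$ then removes. The only error is a decryption failure of $\HE$, which, over the $\mathrm{poly}(\kappa)$ classical ciphertexts involved, stays negligible, so the trace distance in \eqref{eqn:C-homomorphism} is negligible. ($\CL.\Eval$ calls $\HE.\Eval$ on classical circuits of size $\mathrm{poly}(\kappa)$ times the gate count of $\mathsf{C}$; this is fine because $\HE$ is \emph{fully} homomorphic.) Compactness is immediate: on one output qubit, $\CL.\Dec$ runs a single $\HE$-decryption (cost $\mathrm{poly}(\kappa)$, independent of $\mathsf{C}$) plus $O(1)$ Pauli gates.

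\textbf{Security.} For q-IND-CPA I would show the adversary's view in $\mathsf{PubK^{cpa}_{\advA,\CL}}(\kappa)$ is essentially independent of the challenge bit $r$; crucially $\CL.\Eval$ never appears in this experiment, so a reduction need only simulate $\KeyGen$ and $\Enc$, both computable from $(pk,evk)$. In the modified experiment where every classical ciphertext $\tilde a_i,\tilde b_i$ of the challenge is replaced by $\HE.\Enc_{pk}(0)$, the quantum part of the challenge becomes, for \emph{either} value of $r$, the completely mixed state on $\mathcal{M}$ tensored with the reduced state of the adversary's environment (the quantum one-time pad with uniform keys is the completely depolarizing channel, here acting on a subsystem possibly entangled with $\mathcal{E}$; see Section~\ref{sec:prelim-QOTP}), so in that experiment $\advA$ wins with probability exactly $\tfrac12$. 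Bounding the gap between the real and modified experiments is a standard hybrid argument over the $\mathrm{poly}(\kappa)$ classical ciphertexts: consecutive hybrids differ by replacing one $\HE.\Enc_{pk}(0)$ with an honest $\HE.\Enc_{pk}$ of a \emph{uniformly random} key bit, which is negligibly detectable by q-IND-CPA security of $\HE$ (the randomness of the bit only halves the advantage relative to the pair $\HE.\Enc_{pk}(0)$ versus $\HE.\Enc_{pk}(1)$), with the reduction faithfully simulating everything else. Hence $\Pr[\mathsf{PubK^{cpa}_{\advA,\CL}}(\kappa)=1]\le\tfrac12+negl(\kappa)$; by Theorem~\ref{thm:equiv-IND-CPA} the multi-message variant follows.

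\textbf{Main obstacle.} The gate-by-gate correctness bookkeeping is routine; the delicate point is the security reduction, which must apply the \emph{true} Pauli one-time pad to the message register --- so it samples all key bits itself --- yet argue obliviousness to one key bit at the level of the classical ciphertexts, while also coping with the key bits being random rather than adversarially chosen and with possible entanglement between the message and the adversary's environment. The depolarizing-channel computation above is exactly what reconciles these once the classical keys are hidden.
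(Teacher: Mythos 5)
Your proposal is correct and follows essentially the same route as the paper: the same hybrid of the quantum one-time pad with a classical FHE scheme (key-update rules applied via $\HE.\Eval$), the same correctness and compactness observations, and the same security argument that replaces the classical ciphertexts of the one-time-pad keys by $\HE.\Enc_{pk}(0)$ and then notes that the Pauli pad with hidden uniform keys acts as the completely depolarizing channel on the (possibly entangled) message register, making the experiment independent of $r$. The only cosmetic differences are that you invoke $\HE.\Eval$ gate-by-gate rather than accumulating the key-polynomials symbolically, and you bound the ciphertext-replacement step by a per-ciphertext hybrid against Definition~\ref{defn:classical-cipher-q-IND-CPA} (with the factor-of-two loss for the random key bit) where the paper runs a single reduction to a two-message classical CPA experiment with message pair $(a,b)$ versus $(0,0)$.
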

\iftoggle{crypto}{\vspace{-5pt}}{}

Next, we consider two variants of the scheme given by Thm.~\ref{thm:main-Clifford}. Each variant deals with non-Clifford \iftoggle{crypto}{$\tgate$-gates}{group gates --- in our case $\tgate$-gates ---} in a different way. The first scheme, described in Thm.~\ref{thm:main-EPR} and formally defined in Sec.~\ref{sec:scheme-EPR}, uses entanglement to implement $\tgate$-gates, resulting in a \iftoggle{crypto}{QHE}{quantum homomorphic encryption} scheme in which the complexity of decryption scales with the number of $\tgate$-gates in the homomorphically evaluated circuit.

\iftoggle{crypto}{\vspace{-5pt}}{}
\begin{theorem}(entanglement-based scheme, $\EPR$) \label{thm:main-EPR}
Let $\classS$ be the set of all quantum circuits over the universal gate set $\{\xgate,\zgate,\pgate,\hgate,\cnot,\tgate\}$\iftoggle{crypto}{}{ (as well as single-qubit preparation and measurement)}. Then assuming {the existence of} a classical fully homomorphic encryption scheme that is q-IND-CPA secure, there exists an indivisible quantum homomorphic encryption scheme that is q-IND-CPA, $\classS$-homomorphic and $R^2$-quasi-compact, where $R(\mathsf{C})$ is the number of $\tgate$-gates in a circuit~$\mathsf{C}$.
\end{theorem}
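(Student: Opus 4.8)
The plan is to build $\EPR$ directly on top of the Clifford scheme $\CL$ of Theorem~\ref{thm:main-Clifford}: $\QHE.\KeyGen$, $\QHE.\Enc$, and the treatment of Clifford gates inside $\QHE.\Eval$ are inherited verbatim. Recall that in $\CL$, encryption applies a quantum one-time pad $\xgate^a\zgate^b$ with fresh uniform keys $(a,b)$ to each qubit and appends a classical-FHE encryption of $(a,b)$, while evaluation of a Clifford gate pushes the Pauli keys through it using the Pauli-to-Pauli conjugation relations of Section~\ref{sec:prelim-circuits}, updating the encrypted keys homomorphically. Since $\EPR$ generates the EPR pairs it needs on the fly during $\QHE.\Eval$, the evaluation key $\rho_{evk}$ stays equal to the classical evaluation key, exactly as in $\CL$. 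Consequently the q-IND-CPA indistinguishability experiment for $\EPR$ is literally the one for $\CL$, so q-IND-CPA follows from Theorem~\ref{thm:main-Clifford}; concretely, since the quantum one-time pad is perfectly hiding (Section~\ref{sec:prelim-QOTP}), any distinguisher of cipherstates is turned into a distinguisher of classical-FHE encryptions of the pad keys, contradicting q-IND-CPA of the classical scheme.

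All of the new content is in evaluating a $\tgate$-gate. By the identity $\tgate\,\xgate^a\zgate^b = \xgate^a\zgate^{a\oplus b}\pgate^a\,\tgate$, applying $\tgate$ to an encrypted qubit updates the Pauli keys correctly (these updates being pushed onto the encrypted keys via the classical FHE) but leaves behind a $\pgate^a$, and since $a$ is known only in encrypted form the evaluator cannot apply $\pgate^{-a}$ itself. The idea is to \emph{defer} this correction into the decryption stage with the help of entanglement: for the $i$-th $\tgate$-gate, the evaluator introduces a fresh EPR pair and applies a fixed Clifford gadget (a $\cnot$ together with a computational-basis measurement) that teleports the still-erroneous data qubit onto one half of the pair, with a Pauli byproduct that it can record, and retains the other half as an auxiliary qubit $\regG^{(i)}$, which it places into the register $\regR_{aux}$. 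The gadget is designed so that, once the bit $a$ governing that $\tgate$-gate is known, a suitable Pauli together with $\pgate^a$ applied to $\regG^{(i)}$, followed by a Bell-type coupling of $\regG^{(i)}$ back into the computation, completes the $\tgate$. The evaluator then updates the encrypted keys using the gadget outcome and continues; later gates act on the teleported qubit, while $\regG^{(1)},\dots,\regG^{(R)}$ sit in $\regR_{aux}$ entangled with the rest. Because the gadget is Clifford and adds only a constant number of qubits and gates (plus a constant number of classical-FHE operations), $\QHE.\Eval$ runs in time $\mathrm{poly}(\kappa,|\mathsf{C}|)$, so evaluation is efficient for every circuit. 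Producing this gadget explicitly, checking that it uses only operations available to an evaluator who does not know $a$, and verifying that the deferred corrections compose correctly across arbitrarily many — possibly sequential or ``nested'' — $\tgate$-gates is the technical heart of the argument, and I expect it to be the main obstacle.

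Finally, $\QHE.\Dec$ receives $\regR_{aux}\otimes\mathcal{C}'^{\otimes m}$, decrypts the classical ciphertexts to obtain the Pauli keys, reads off the recorded gadget outcomes, and then resolves the $R$ gadgets in the order in which they arose: for gadget $i$ it computes the bit $a_i$ that was active on that qubit at the time of the gate — an XOR of original pad keys and of the outcomes of gadgets $1,\dots,i-1$ — applies the appropriate Pauli and $\pgate^{a_i}$ to $\regG^{(i)}$, performs the Bell-type correction that folds $\regG^{(i)}$ back in, and updates its running Pauli frame; after all $R$ gadgets are resolved it applies the quantum one-time-pad decryption to the $m$ output qubits. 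Correctness is then proved by induction on $\mathsf{C}$ from the gadget identity: the output equals $\Phi_{\mathsf{C}}(\rho)$ exactly, up to the negligible failure probability of the classical FHE, which gives $\classS$-homomorphism. For quasi-compactness, note that resolving gadget $i$ costs $\mathrm{poly}(\kappa,m)$ for the classical-FHE decryptions plus work for evaluating $a_i$ and propagating the constant-size quantum correction through the current Pauli frame — work that grows with the $O(i)$ earlier gadgets already folded in — so summing over $i=1,\dots,R$ the total circuit complexity of decryption is $R^2\cdot\mathrm{poly}(\kappa,m)$; hence $\EPR$ is $R^2$-quasi-compact. The bookkeeping of the ``current $X$-key'' at each $\tgate$-gate is simultaneously the crux of correctness and the reason the decryption cost is quadratic rather than linear in $R$.
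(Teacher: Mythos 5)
Your overall architecture coincides with the paper's: inherit $\CL.\KeyGen$, $\CL.\Enc$ and the Clifford gadgets; consume one EPR pair per $\tgate$-gate and keep one half in $\regR_{aux}$ (whence indivisibility); resolve the $R$ auxiliary qubits sequentially at decryption, each outcome feeding the key computation for the next; and charge $O(t)$ per gadget for recomputing the relevant key bit, giving the $R^{2}$ term. The security argument (key generation and encryption unchanged from $\CL$, so q-IND-CPA is inherited) and the quasi-compactness accounting also match the paper's.

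The gap is exactly where you predicted it, and your sketch of how the deferred correction is applied is not right. In the paper's gadget (Figure~\ref{fig:T-gadget-EPR}), $\tgate$ is applied directly to the encrypted qubit (producing the $\pgate^{f_{a,i}}$ error), the qubit is $\xgate$-teleported onto one half of $\ket{\Phi^{+}}$ via a single $\cnot$ and computational-basis measurement, and the retained half $\regR_t$ is \emph{never coupled back into the computation}: the deferred correction is effected entirely by a local single-qubit operation on $\regR_t$ at decryption time (apply $\pgate^{f_t}$, then $\hgate$, then measure), whose random outcome $k_t$ enters the data wire only as a $\zgate^{k_t}$ byproduct, i.e.\ as a new \emph{unknown} one-time-pad key variable. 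Your ``Bell-type coupling of $\regG^{(i)}$ back into the computation'' cannot work as stated: by decryption time the teleported qubit has passed through the remainder of the circuit, so no fixed joint operation between $\regG^{(i)}$ and the output registers implements the missing $\pgate^{a}$. What actually makes the scheme go through is that (i) $\zgate^{k_t}$ commutes through the remaining Clifford$+\tgate$ gates as a Pauli-key update, so $k_t$ can simply be adjoined to the set of key variables, and (ii) each key polynomial therefore splits as $f=f^{ab}+f^{k}$ with $f^{ab}\in\mathbb{F}_2[a_1,\dots,b_n]$ homomorphically evaluable during $\Eval$ and $f^{k}\in\mathbb{F}_2[k_1,\dots,k_R]$ evaluable only during decryption, in the order $t=1,\dots,R$. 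This symbolic bookkeeping is simultaneously the correctness argument you flag as missing (``verifying that the deferred corrections compose correctly across arbitrarily many $\tgate$-gates'') and the precise source of the $R^{2}$ bound (each $f_t^{k}$ has up to $t-1$ terms). As written, the proposal leaves the central gadget unconstructed and mis-specifies its decryption-time use, so it does not yet constitute a proof.
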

\iftoggle{crypto}{\vspace{-5pt}}{}

The compactness of the scheme $\EPR$ is nontrival for all circuits in which $R^2\ll G$, where $G$ is the number of gates.

Our second scheme, formally defined in Sec.~\ref{sec:scheme-AUX}, is based on the use of auxiliary qubits to implement $\tgate$-gates, resulting in a \iftoggle{crypto}{QHE}{quantum homomorphic encryption} scheme that is homomorphic for circuits with constant $\tgate$-depth, as described in the following theorem:

\iftoggle{crypto}{\vspace{-5pt}}{}
\begin{theorem}(auxiliary-qubit scheme, $\AUX$) \label{thm:main-AUX}
Fix a constant $L$. Let $\classS$ be the set of quantum circuits over the universal gate set $\{\xgate,\zgate,\pgate,\hgate,\cnot,\tgate\}$\iftoggle{crypto}{}{ (as well as single-qubit preparation and measurement)} with \mbox{$\tgate$-depth} at most~$L$. Then assuming the existence of a classical fully homomorphic encryption scheme that is q-IND-CPA secure, there exists a bounded symmetric-key quantum homomorphic encryption scheme that is q-IND-CPA, $\classS$-homomorphic and compact.
\end{theorem}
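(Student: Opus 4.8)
The plan is to build $\AUX$ on top of the Clifford scheme $\CL$ of Theorem~\ref{thm:main-Clifford}. Recall that in $\CL$ a qubit $\rho$ is encrypted as $\QEnc_{a,b}(\rho)$ with uniformly random pad keys $a,b$, the pair $(a,b)$ is encrypted under a q-IND-CPA classical FHE scheme, Clifford gates are applied directly to the encrypted register while the encrypted pad keys are updated homomorphically, and decryption FHE-decrypts the pad keys and undoes the pad. The only obstruction to also evaluating a $\tgate$-gate is the relation $\tgate\xgate^a\zgate^b = \xgate^a\zgate^{a\oplus b}\pgate^a\tgate$ from Section~\ref{sec:intro-summary}: applying $\tgate$ to the encrypted qubit leaves a qubit padded by $\xgate^a\zgate^{a\oplus b}$ \emph{together with an unwanted non-Pauli factor $\pgate^a$}, where $a$ is the current $\xgate$-key of that wire and is known to the evaluator only in its classically-encrypted form $\widetilde a$. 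Since $\pgate^2=\zgate$ is Pauli, what must be engineered is a way to effect the correction $\pgate^{-a}$ on the encrypted register given only $\widetilde a$ together with auxiliary quantum resources supplied in the evaluation key.

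First I would isolate a $\pgate$-correction gadget of gate-teleportation type: the evaluator $\cnot$s the errored qubit into an auxiliary qubit that $\KeyGen$ has prepared in an appropriate stabilizer state, measures the auxiliary qubit in the computational basis, and obtains a qubit that is once again properly one-time padded, with the residual $\pgate$-power changed in a controlled way and the pad keys updated by a linear function of the old (encrypted) keys, the auxiliary qubit's (encrypted) keys, and the public measurement outcome --- all updates the evaluator can carry out homomorphically on the classical ciphertexts. The subtlety is that the auxiliary state one needs in order to cancel the $\pgate^a$ factor depends on $a$, which $\KeyGen$ cannot predict, both because it does not know the circuit and because at the $\ell$-th $\tgate$-layer this bit depends on measurement outcomes produced by the gadgets used in layers $1,\dots,\ell-1$. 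This is why $\AUX$ is given in the bounded symmetric-key model: there, $\KeyGen(1^\kappa,1^n)$ may sample all $n$ one-time pads in advance (the counter $d$ records which ciphertext uses which pad), so the evaluation key may be built around them. Concretely, the evaluation key will contain, for each prospective $\tgate$-gate, a family of auxiliary qubits --- supplied in one-time-padded, hence pad-hiding, form --- covering every value the relevant key bits of that wire can take when the gate is reached; since the set of relevant histories grows with each successive $\tgate$-layer, a careful accounting (an induction on the layer index, with invariant ``after processing the first $\ell$ $\tgate$-layers the register is a valid $\CL$-encryption of the corresponding partial computation, with pad keys held as classical ciphertexts'') shows the evaluation key to have the size described in Section~\ref{sec:intro-summary}, namely a polynomial in $\kappa$ of degree exponential in $L$, so that $\KeyGen$, $\Eval$ and $\Dec$ are polynomial-time on the class $\classS$ of Theorem~\ref{thm:main-AUX}.

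Given this, the remaining requirements follow along standard lines. Composing the per-layer invariant over the $L$ $\tgate$-layers and the final Clifford layer yields $\classS$-homomorphism (Definition~\ref{defn:C-homomorphism}), with only the negligible classical-FHE error. Compactness is immediate, since $\QHE.\Dec$ merely FHE-decrypts a constant number of classical ciphertexts per output wire and applies the indicated Pauli, with complexity independent of $\mathsf C$. For q-IND-CPA (Definition~\ref{def:q-ind-cpa-Symmetric}) I would reduce to q-IND-CPA of the classical FHE: the cipherstate is a quantum one-time pad of the message and so, by Section~\ref{sec:prelim-QOTP}, maximally mixed once the pad keys are removed; a hybrid argument replacing every encrypted pad key --- in the challenge ciphertext, in the oracle answers, and throughout the evaluation key --- by an encryption of $0$ renders the adversary's entire view independent of the challenge bit, so any non-negligible advantage yields one against the classical scheme, and Theorem~\ref{thm:equiv-IND-CPA-sym} then upgrades this to the multi-message variant.

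The hard part will be the first two steps: exhibiting an explicit correction gadget and auxiliary-state family, and proving that for every value of the encrypted keys and every pattern of gadget measurement outcomes the $\pgate$-factors introduced by the $\tgate$-gates are removed using only the pre-prepared auxiliary qubits and homomorphic updates of the classical ciphertexts --- all while keeping the auxiliary qubits in a form that leaks nothing about the pads, and while bounding their number by a polynomial once the $\tgate$-depth is constant. The Clifford part, compactness, and the security reduction are routine given Theorem~\ref{thm:main-Clifford} and the framework of Section~\ref{sec:QFHE-Defs}.
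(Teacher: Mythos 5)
You have the right architecture (build on $\CL$, use gate-teleportation with specially prepared, key-hiding auxiliary qubits to absorb the $\pgate^{a}$ error, put the scheme in the bounded symmetric-key model so that $\KeyGen$ can sample all pads in advance, then reduce security to the classical FHE via a hybrid that replaces ciphertexts by encryptions of $0$ and observes that the rest of the view is maximally mixed). The compactness and security portions match the paper's argument. But the step you defer as ``the hard part'' is precisely where the one genuinely new idea lives, and the sketch you give for it does not work as stated. You propose that the evaluation key contain, for each prospective $\tgate$-gate, auxiliary qubits ``covering every value the relevant key bits of that wire can take.'' If this means one auxiliary state per possible value of the bit $a$, the evaluator cannot select the correct one without knowing $a$ (which would break security); if it means one auxiliary state per possible key-\emph{polynomial} $f_{a,i}$, there are already $2^{\Theta(n)}$ linear polynomials over $a_1,\dots,a_n,b_1,\dots,b_n$ at the first $\tgate$-layer, so the evaluation key is exponential in $n$, not polynomial.

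The paper's resolution is an auxiliary-state \emph{addition} gadget: a $\cnot$ and measurement map $\ket{+_{f_1,k_1}}\otimes\ket{+_{f_2,k_2}}$ to $\ket{+_{f_1\oplus f_2,\,k_1\oplus k_2\oplus(f_1\oplus c)f_2}}$ for a known outcome $c$. This lets $\KeyGen$ supply auxiliary states only for the individual \emph{monomial terms} that can occur in a key-polynomial (a set $T_\ell$ per layer), and lets the evaluator obliviously assemble $\ket{+_{f_{a,i},k}}$ for whatever polynomial the circuit and prior measurement outcomes have produced. The price is the cross-terms $f_1f_2$ in the new hidden key: the key-polynomials become nonlinear, the term set obeys $|T_\ell|=|T_{\ell-1}|+\binom{|T_{\ell-1}|}{2}+n|T_{\ell-1}|$, hence $|T_\ell|=O(n^{2^{\ell-1}})$, which is simultaneously the reason the evaluation key has size $O(n^{2^{L-1}+1})$ and the reason $L$ must be constant. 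Without this combination mechanism and the accompanying induction (the paper's Lemma~\ref{lem:terms}, showing every key-polynomial entering layer $\ell$ is a sum of terms from $T_\ell$), your invariant cannot be established and the size bound you cite has no derivation. A secondary, fixable slip: in the teleportation gadget it is the data qubit (after the $\cnot$) that is measured, and the auxiliary qubit that becomes the new data carrier, not the other way around.
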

\iftoggle{crypto}{\vspace{-5pt}}{}

The QHE scheme in Thm.~\ref{thm:main-AUX} can be seen as somewhat analogous to an important building block in classical fully homomorphic encryption: a \emph{levelled} fully homomorphic scheme, which is a scheme that takes a parameter $L$, which is an a-priori bound on the \emph{depth} of the circuit that can be evaluated. However, we note that in contrast to a levelled fully homomorphic scheme, in which operations are polynomial in $L$, the complexity of our scheme is a polynomial of degree exponential in $L$, so we really require $L$ to be constant.

As previously noted, Thm.~\ref{thm:main-EPR} and~\ref{thm:main-AUX} are complementary: the scheme $\EPR$ becomes less compact as the \emph{number} of $\tgate$-gates increases, while the scheme~$\AUX$ becomes inefficient as the \emph{depth} of $\tgate$-gates increases.

\vspace{-10pt}
\iftoggle{crypto}{\section{Homomorphic Encryption for Clifford Circuits: $\CL$}}
{\section{Scheme $\CL$: Homomorphic Encryption for Clifford Circuits}}
\label{sec:scheme-Clifford}
\vspace{-5pt}

In this section, we present $\CL$, a compact quantum homomorphic encryption scheme for \iftoggle{crypto}{Clifford circuits}{stabilizer circuits, which consist of Clifford circuits combined with measurements and single-qubit preparation}. This is a building block for the schemes that follow in Sec.~\ref{sec:scheme-EPR} and~\ref{sec:scheme-AUX}. \iftoggle{crypto}{In the full version \cite{BJ15}, we prove that $\CL$ is q-IND-CPA secure, and homomorphic for Clifford circuits, hence proving Thm.~\ref{thm:main-Clifford}.}{The main theorem we prove is Thm.~\ref{thm:main-Clifford}, which follows directly from Thm.~\ref{thm:correctness:Clifford},~\ref{thm:compactness:Clifford} and~\ref{thm:security:Clifford}.}

By definition, Clifford circuits  conjugate Pauli operators to Pauli operators~\cite{Got98}. In other words, for any Clifford $\mathsf{C}$, and any Pauli, $\mathsf{Q}$, there exists a Pauli $\mathsf{Q}'$ such that $\mathsf{C}\mathsf{Q}=\mathsf{Q}'\mathsf{C}$.
Furthermore, applying a random Pauli operator is a perfectly secure symmetric-key quantum encryption scheme: the quantum one-time pad\iftoggle{crypto}{}{ (see Sec.~\ref{sec:prelim-QOTP})}. \iftoggle{crypto}{Thus,}{Combining these observations, we see that} it is possible to perform any Clifford circuit on quantum data that is encrypted using the quantum one-time pad. We can apply the desired Clifford, $\mathsf{C}$, to the encrypted state $\mathsf{Q}\ket{\psi}$ to get $\mathsf{Q}'(\mathsf{C}\ket{\psi})$. Now decrypting the state requires applying the Pauli $\mathsf{Q}'$. If $\mathsf{Q}$ can be described by the encryption key $(a_1,\dots,a_n,b_1,\dots,b_n)$ --- that is, $\mathsf{Q}=\xgate^{a_1}\zgate^{b_1}\otimes \dots\otimes \xgate^{a_n}\zgate^{b_n}$ --- then $\mathsf{Q}'$ can be described by some key $(a_1',\dots,a_n',b_1',\dots,b_n')$ depending on $\mathsf{C}$ and $(a_1,\dots,a_n,b_1,\dots,b_n)$. We describe this dependence by a function $f^{\mathsf{C}}:\mathbb{F}_2^{2n}\rightarrow \mathbb{F}_2^{2n}$, which we call a \emph{key update rule}.
We need only consider key update rules for each gate in our gate set, which consists of \iftoggle{crypto}{}{single-qubit measurement, single-qubit preparation, and }the one- and two-qubit gates in $\{\xgate,\zgate,\pgate,\cnot,\hgate\}$. For a single-qubit gate $\mathsf{C}$, since the only keys that are affected are those corresponding to the wire to which $\mathsf{C}$ is applied, an update rule can be more succinctly described by a pair of functions $f_a^{\mathsf{C}},f_b^{\mathsf{C}}:\mathbb{F}_2^2\rightarrow \mathbb{F}_2$ such that when $\mathsf{C}$ is applied to the \th{i} wire, $a_i'=f_a^{\mathsf{C}}(a_i,b_i)$ and $b_i'=f_b^\mathsf{C}(a_i,b_i)$:
\begin{center}
\noindent\begin{tikzpicture}
\node at (-1.3,0) {$\xgate^{a_i}\zgate^{b_i}\ket{\psi}$};
\draw (-.45,0)--(.75,0);
\filldraw[fill = white] (-.1,.25) rectangle (.4,-.25);
\node at (.15,0) {$\mathsf{C}$};
\node at (1.75,0) {$\xgate^{a_i'}\zgate^{b_i'}\mathsf{C}\ket{\psi}$};

\node at (6.5,0) {$a_i\leftarrow a_i'=f_{a}^{\mathsf{C}}(a_i,b_i),\;\; b_i\leftarrow b_i'=f_{b}^{\mathsf{C}}(a_i,b_i)$};
\end{tikzpicture}
\end{center}
\noindent For the \iftoggle{crypto}{}{two-qubit }$\cnot$-gate, the update rule is described by a 4-tuple of functions, since $\cnot$ acts on two wires. We give the key update rules for all gates in \iftoggle{crypto}{the full version \cite[App.~C]{BJ15}}{App.~\ref{appendix:Key-update-rules-stabilizer}}.\iftoggle{crypto}{ (We also give key update rules for single-qubit measurement and qubit preparation, so that our scheme is actually homomorphic for stabilizer circuits.) }{ }By applying these rules after each gate, we can update the key so that the output is correctly decrypted\iftoggle{crypto}{ (since we are actually carrying out computations on encrypted quantum data---in contrast to merely simulating a quantum computation---we note that all gates except the Pauli gates require  quantum operations)}{}. Such a technique was already used, \emph{e.g.} in~\cite{Chi05,QCED,B15}.

This solution, however, requires that the key updates be executed by the party holding the encryption keys:  an ``easy''  classical computation, but nevertheless a computation that is polynomial in the \emph{size} of the circuit. In the context of quantum homomorphic encryption, the challenge is therefore to allow the execution of \emph{arbitrary} Clifford circuits, while maintaining the compactness condition. Here, we present a quantum public-key encryption scheme which is a hybrid of the quantum one-time pad and of a classical fully homomorphic encryption scheme. This encryption scheme is used to perform key updates on encrypted quantum one-time pad keys, enabling the computation of arbitrary Clifford group circuits on the encrypted quantum states, while maintaining the compactness condition. More precisely, to homomorphically evaluate a Clifford circuit consisting of a sequence of gates $\mathsf{c}_1,\dots,\mathsf{c}_G$, we apply the gates to the quantum one-time pad encrypted message, and homomorphically evaluate the function $f^{{\sf c}_1}\circ\dots\circ f^{{\sf c}_G}$ on the encrypted one-time pad keys $a_1,\dots,a_n,b_1,\dots,b_n$, where $\circ$ denotes function composition. To accomplish this, we keep track of functions for each bit of the quantum one-time pad encryption key, $\{f_{a,i},f_{b,i}\}_{i=1}^n$. Since each of the key update rules \iftoggle{crypto}{(see \cite{BJ15})}{presented in App.~\ref{appendix:Key-update-rules-stabilizer}} is linear, each $f_{a,i}$ and $f_{b,i}$ is a linear polynomial in $\mathbb{F}_2[a_1,\dots,a_n,b_1,\dots,b_n]$ (from the perspective of the evaluation procedure, $a_1,\dots,a_n,b_1,\dots,b_n$ are unknowns), so we refer to them as \emph{key-polynomials}. Before we begin to evaluate the circuit, the key polynomials are the monomials $f_{a,i}=a_i$ and $f_{b,i}=b_i$.
As we evaluate each gate~$\mathsf{c}_j$, we update the {key-polynomials} corresponding to the affected wires by composing them with the key update rules. To compute the new encrypted one-time pad keys once the circuit is complete, we homomorphically evaluate each key-polynomial on the old encrypted one-time pad keys.
\iftoggle{crypto}{We}{
It is interesting to} note that since the key update rules (\iftoggle{crypto}{see \cite{BJ15}}{App.~\ref{appendix:Key-update-rules-stabilizer}})\iftoggle{crypto}{}{ for stabilizer circuit elements} are all linear, for the scheme~$\CL$, the underlying classical fully homomorphic scheme  only needs to be additively homomorphic.

We define our scheme $\CL$ as a QHE scheme. Here and throughout, we assume $\mathsf{HE}$ to be a classical \iftoggle{crypto}{FHE}{fully homomorphic encryption} scheme that is q-IND-CPA secure (see Def.~\ref{defn:classical-cipher-q-IND-CPA}\iftoggle{crypto}{}{ and App.~\ref{appendix:Classical-FHE}}). As noted, such a scheme \iftoggle{crypto}{}{(based on the LWE assumption) }could be derived  from~\cite{BV11}. All of our schemes operate on qubit circuits, and encrypt qubit-by-qubit. Thus we fix $\mathcal{M}=\mathbb{C}^{\{0,1\}}$. Ciphertexts  consist of quantum states in $\mathbb{C}^{\{0,1\}}$, combined with classical strings. Specifically, if $C$ is the output space of $\HE.\Enc$, and $C'$ is the output space of $\HE.\Eval$, then we define $\mathcal{C}= \mathbb{C}^{C\times C}\otimes {\cal X}$, where ${\cal X}\equiv\mathbb{C}^{\{0,1\}}$, and $\mathcal{C}'=\mathbb{C}^{C'\times C'}\otimes \cal X$.
\begin{description}
\item[{Key Generation}.] $\CL.\KeyGen(1^\kappa)$. For key generation, execute $(\pk, \sk, \evk) \leftarrow \mathsf{HE.Keygen(1^\kappa)}$. Output the obtained secret key, $\sk$, and public key, $\pk$. The evaluation key $\rho_\evk$ takes the value of the classical state $\rho({\evk})$.

\item[{Encryption}.] $\mathsf{CL.Enc}_{pk}: D(\mathcal{M})\rightarrow D(\mathcal{C})$.
Encryption is defined as\iftoggle{crypto}{}{ the quantum channel that outputs the classical-quantum state}:
\iftoggle{crypto}{\\$\displaystyle\mathsf{CL.Enc}_{pk}(\rho^\mathcal{M})=\sum_{a,b\in\{0,1\}}\frac{1}{4}\rho(\mathsf{HE.Enc}_{pk}(a),\mathsf{HE.Enc}_{pk}(b))\otimes
\QEnc_{a,b}(\rho^{\cal M}).$}
{\begin{equation*}
\mathsf{CL.Enc}_{pk}(\rho^\mathcal{M})=\sum_{a,b\in\{0,1\}}\frac{1}{4}\rho(\mathsf{HE.Enc}_{pk}(a),\mathsf{HE.Enc}_{pk}(b))\otimes
\QEnc_{a,b}(\rho^{\cal M}).
\end{equation*}}

\item[{Homomorphic Evaluation}.]
    $\mathsf{CL.Eval}^{\mathsf{C}}:D(\mathcal{R}_{evk}\otimes \mathcal{C}^{\otimes n})\rightarrow D(\mathcal{C'}^{\otimes m})$.

Suppose $\mathsf{C}=\mathsf{c}_1,\dots,\mathsf{c}_G$ is a Clifford circuit. \iftoggle{crypto}{}{For every $j=1,\dots, {G}$ such that $\mathsf{c}_j$ initializes a fresh qubit, we initialize a new qubit $\CL.\Enc_{pk}(\ket{0}\bra{0})$ and append it to the system. Let $\rho \in D(\mathcal{X}_1\otimes\dots\otimes\mathcal{X}_{m})$, be the composite system consisting of the input quantum system and the  initialized qubits.
\\}
\begin{enumerate}
\item For all $i\in [n]$, \iftoggle{crypto}{set}{set $f_{a,i},f_{b,i}\in\mathbb{F}_2[a_1,\dots,a_n,b_1,\dots,b_n]$ as} $f_{a,i}\leftarrow a_i$, $f_{b,i}\leftarrow b_i$.
\item For $j=1,\dots,G$ such that $\mathsf{c}_j$ is a gate or a measurement:
\begin{enumerate}
\item Apply the gate $\mathsf{c}_j$ to the state: $\rho\leftarrow \mathsf{c}_j\rho\mathsf{c}_j^{-1}$.
\item Compose the key update rules with the key-polynomials of the affected wires:
if $\mathsf{c}_j$ is a single qubit gate or measurement acting on the \th{i} wire, update as $(f_{a,i},f_{b,i})\leftarrow ( f_{a,i}\circ f_a^{\mathsf{c}_j}, f_{b,i}\circ f_b^{\mathsf{c}_j})$. \iftoggle{crypto}{If}{Otherwise, if} $\mathsf{c}_j$ is a $\cnot$-gate acting on wires $i$ and $i'$, update $(f_{a,i},f_{a,i'},f_{b,i},f_{b,i'})$\iftoggle{crypto}{}{ analogously}.
\end{enumerate}
\item Update the classical encryptions by computing
$$c_i = (\HE.\Eval_{evk}^{f_{a,i}}(\tilde{a}_i), \HE.\Eval_{evk}^{f_{b,i}}(\tilde{b}_i)).$$
\item Output $(c_1,\dots,c_m,\rho)$\iftoggle{crypto}{}{ (with registers permuted to fit the prescribed form)}.
\end{enumerate}

\item[Decryption.] $\mathsf{CL.Dec}_{sk}: D (\mathcal{C'})\rightarrow D(\mathcal{M})$.
For $\tilde{a},\tilde{b} \in C'$, \iftoggle{crypto}{decryption is defined:}{the output space of $\HE.\Eval$, decryption is given by the conditional quantum channel:}
\iftoggle{crypto}{\\$\displaystyle\CL.\Dec_{sk}:\ket{\tilde{a}}\bra{\tilde{a}}\otimes \ket{\tilde{b}}\bra{\tilde{b}}\otimes \rho^{\cal X}\mapsto \QDec_{\HE.\Dec_{sk}(\tilde{a}),\HE.\Dec_{sk}(\tilde{b})}(\rho^{\cal X})$.}
{\begin{equation*}
\CL.\Dec_{sk}:\ket{\tilde{a}}\bra{\tilde{a}}\otimes \ket{\tilde{b}}\bra{\tilde{b}}\otimes \rho^{\cal X}\mapsto \QDec_{\HE.\Dec_{sk}(\tilde{a}),\HE.\Dec_{sk}(\tilde{b})}(\rho^{\cal X}),
\end{equation*}
which can be implemented by first decoding the classical registers to obtain $a=\HE.\Enc_{sk}(\tilde{a})$ and $b=\HE.\Enc_{sk}(\tilde{b})$, applying $\QDec_{a,b}$, and then tracing out $\mathbb{C}^{C'\times C'}$.}
\end{description}
\iftoggle{crypto}{}{We have chosen to present  $\mathsf{CL.Enc}_{pk}$ and $\mathsf{CL.Dec}_{sk}$ as quantum channels, since they are easily seen to be polynomial-time implementable. Note, however, that for more complicated quantum channels such as $\CL.\Eval$ we will generally prefer their description in terms of a high-level algorithmic description.}

\iftoggle{crypto}{
\noindent We prove the homomorphic and security properties of $\CL$ in \cite{BJ15}.
}{
\subsection{Analysis of $\CL$}
\label{sec:analysis-CL}

\iftoggle{crypto}{}{We now analyse the various properties of $\CL$.}

\begin{theorem}
\label{thm:correctness:Clifford}
Let $\mathscr{S}$ be the class of Clifford circuits. Then $\CL$ is  $\mathscr{S}$-homomorphic.
\end{theorem}
\begin{proof}
This follows from the circuits in App.~\ref{appendix:Key-update-rules-stabilizer}, as well as the homomorphic property of $\mathsf{HE}$. In particular, since the decrypted values of the ciphertexts are correct (except with exponentially small probability), then \iftoggle{crypto}{}{Equation~}\eqref{eqn:C-homomorphism} is satisfied.
\end{proof}

\begin{theorem}
\label{thm:compactness:Clifford}
$\CL$ is compact.
\end{theorem}
\begin{proof}
Let $p$ be a polynomial such that the complexity of applying $\HE.\Dec$ to the output of $\HE.\Eval$ is at most $p(\kappa)$ --- such a polynomial exists by the compactness of $\HE$. Then decrypting a single qubit of the output of $\CL.\Eval$ has complexity at most $2p(\kappa)+2$, since we must decrypt two keys $a$ and $b$ and then apply $\xgate^a$ and $\zgate^b$, so $\CL$ is also compact.
\end{proof}

\begin{theorem}
\label{thm:security:Clifford}
\iftoggle{crypto}{If $\HE$ is q-IND-CPA then $\CL$ is q-IND-CPA.}{Assuming a classical fully homomorphic encryption scheme $\HE$ that is q-IND-CPA secure, the quantum homomorphic scheme $\CL$  is q-IND-CPA secure.}
\end{theorem}
\begin{proof}
The main part of this proof will be to show that the classical ciphertexts $\HE.\Enc_{pk}(a)$ and $\HE.\Enc_{pk}(b)$ give at most a negligible advantage. We will then see that without these classical ciphertexts, the quantum CPA Indistinguishability experiment is independent of $r$ from the  perspective of the adversary.

Let $\CL'$ be the quantum homomorphic encryption scheme with $\CL'.\KeyGen=\CL.\KeyGen$, $\CL'.\Eval=\CL.\Eval$, $\CL'.\Dec=\CL.\Dec$, and
\begin{align*}
\CL'.\Enc_{pk}(\rho) &=\sum_{a,b\in\{0,1\}}\frac{1}{4}\rho(\HE.\Enc_{pk}(0),\HE.\Enc_{pk}(0))\otimes (\xgate^a\zgate^b\rho\zgate^b\xgate^a)\\
&=\rho(\HE.\Enc_{pk}(0),\HE.\Enc_{pk}(0))\otimes \frac{1}{2}\mathbb{I}_2.
\end{align*}

Let $\advA=(\advA_1,\advA_2)$ be an adversary for $\mathsf{PubK}_{\advA,\CL}^{\textsf{cpa}}(\kappa)$.
We will define an adversary $\advA'=(\advA_1',\advA_2')$ for $\mathsf{PubK_{\advA',\HE}^{cpa\text{-}mult}}(\kappa)$. Essentially, $\advA'$ will simulate $\mathsf{PubK}_{\advA,\CL}^{\textsf{cpa-mult}}(\kappa)$, except that when it simulates $\Xi_{\CL}^{{\sf cpa},r}$, it will use $\Xi_{\HE}^{\textsf{cpa-mult},s}$ in place of $\HE.\Enc$, so that it will actually be running either $\Xi_{\CL}^{{\sf cpa},r}$ (if $s=1$) or $\Xi_{\CL'}^{{\sf cpa},r}$ (if $s=0$) (see Fig.~\ref{fig:cl-security}).

\begin{figure}[H]
\centering
\begin{tikzpicture}
\draw (0,2.54)--(5,2.54);
\draw (0,2.46)--(5,2.46);
		\draw (4.5,2.04)--(5,2.04);
		\draw (4.5,1.96)--(5,1.96);
\draw (2,1.54)--(6.25,1.54);	
\draw (2,1.46)--(6.25,1.46);
\draw (0,.79)--(1,.79);	\draw (1,.75)--(6.5,.75);	\draw(6.75,.79)--(7.25,.79);
\draw (0,.71)--(1,.71);					\draw(6.75,.71)--(7.25,.71);
\draw (0,0.04)--(1,0.04);	\draw (1,0)--(6.5,0);
\draw (0,-.04)--(1,-.04);

\filldraw[fill=white] (-.5,-.25) rectangle (0,2.75);
\node[rotate=90] at (-.25,1.25) {$\KeyGen$};

\node at (.25,2.7) {$pk$};
\node at (.25,.95) {$pk$};
\node at (.35,.2) {$evk$};

\filldraw[fill=white] (.75,-.25) rectangle (1.25,1);
\node at (1,.375) {$\advA_1$};

\node at (1.5,.95) {\small $\cal M$};
\node at (1.5,.2) {\small $\cal E$};

\filldraw[fill=white] (2,1.75) rectangle (2.5,1.25);
\node at (2.25,1.5) {$\maxmix$};
\filldraw[fill=white] (2,1) rectangle (2.5,.5);
\node at (2.25,.75) {$\Psi^{\bar{r}}$};

\node at (2.8,1.7) {$a,b$};
\node at (2.75,.95) {\small $\cal M$};

\filldraw[fill=white] (3.25,1.75) rectangle (4.25,.5);
\node at (3.75,.75) {$\xgate^a\zgate^b$};

\node at (4.15,2) {$0,0$};
\node at (4.55,1.7) {$a,b$};
\node at (4.5,.95) {\small $\cal X$};

\filldraw[fill=white] (5,2.75) rectangle (5.5,1.25);
\node[rotate=90] at (5.25,2) {\small$\Xi_{\HE}^{\textsf{cpa-mult},s}$};

\node at (5.75,1.7) {$c$};

\filldraw[fill=white] (6.25,-.25) rectangle (6.75,1.75);
\node at (6.5,.75) {$\advA_2$};

\node at (7.5,.75) {$r'$};

\draw (7.75,.79)--(8.25,.79);
\draw (7.75,.71)--(8.25,.71);
\node at (9.5,.75) {$s'=\bar{r}\oplus r'$};

\draw[dashed] (6.15,-.35) rectangle (7.75,1.85);
\draw[dashed] (.65,-.35) rectangle (4.9,2.25);
\node at (2.775,-.6) {$\advA_1'$};
\node at (6.95,-.6) {$\advA_2'$};

\draw[dashed] (1.9,.4) rectangle (5.6,2.85);
\node at (3.75,3.1) {$\Xi_{\CL}^{{\sf cpa},r}$/$\Xi_{\CL'}^{{\sf cpa},r}$};
\end{tikzpicture}
\caption{The new adversary $\advA'$ for $\mathsf{PubK_{\advA',\HE}^{\textsf{cpa-mult}}}$, where $\Psi$ is the channel that replaces the system with $\ket{0}\bra{0}$. Here $\maxmix$ denotes the channel that outputs a completely mixed state, or equivalently, a uniform random variable. If $s=1$, the middle dashed box is $\Xi_{\CL}^{{\sf cpa},r}$, and if $s=0$, $\Xi_{\CL'}^{{\sf cpa},r}$.}\label{fig:cl-security}
\end{figure}

\begin{description}
\item[$\advA_1'(pk,evk)$:] Run $\advA_1(pk,evk)$ to get a state $\rho^{\cal ME}$. Choose a uniform random bit $r$. If $r=0$, discard the $\mathcal{M}$ subsystem and replace it with the state $\ket{0}\bra{0}$. Choose uniform random bits $a$ and $b$, and apply $\QEnc_{a,b}$, the quantum one-time pad, to $\mathcal{M}$, relabelling the resulting system by $\cal X$. Input $(a,b)$ and $(0,0)$ to $\Xi_{\HE}^{\textsf{cpa-mult},s}$.
\item[$\advA_2'$:] Run $\advA_2$ to get a bit $r'$. Output $1$ if $r=r'$ and $0$ otherwise.
\end{description}
We now compute the probability that $\advA'$ correctly guesses $s$, which we know must be at most $\frac{1}{2}+\eta(\kappa)$ for some negligible function, since $\HE$ is q-IND-CPA. If $s=1$, then $\advA'$ is simulating $\mathsf{PubK}_{\advA,\CL}^{\mathsf{cpa}}$, so the probability that $r'=r$ (and thus that $s'=1=s$) is $\Pr[\mathsf{PubK_{\advA,\CL}^{cpa}}(\kappa)=1]$.

On the other hand, if $s=0$, $\advA_2$ gets encryptions of $0$ rather than $\HE.\Enc(a),\HE.\Enc(b)$,~so $\advA'$ is simulating $\mathsf{PubK}_{\advA,\CL'}^{\mathsf{cpa}}$, so the probability that $r\neq r'$, and thus $s'=0=s$, is \mbox{$\Pr[\mathsf{PubK_{\advA,\CL'}^{cpa}}(\kappa)=0]$}.\looseness = -1

Then since the total probability that $s=s'$ is at most $\frac{1}{2}+\eta(\kappa)$, we have:
\begin{align}
\frac{1}{2}\Pr[\mathsf{PubK_{\advA,\CL}^{cpa}}(\kappa)=1]+\frac{1}{2}\Pr[\mathsf{PubK_{\advA,\CL'}^{cpa}}(\kappa)=0] &\leq \frac{1}{2}+\eta(\kappa)\nonumber\\
\Pr[\mathsf{PubK_{\advA,\CL}^{cpa}}(\kappa)=1]+1-\Pr[\mathsf{PubK_{\advA,\CL'}^{cpa}}(\kappa)=1] &\leq 1+2\eta(\kappa)\nonumber\\
\Pr[\mathsf{PubK_{\advA,\CL}^{cpa}}(\kappa)=1]-\Pr[\mathsf{PubK_{\advA,\CL'}^{cpa}}(\kappa)=1] &\leq 2\eta(\kappa).\label{eq:negl}
\end{align}

We complete the proof by noting that when $s=0$, since $c=(\HE.\Enc_{pk}(0),\HE.\Enc_{pk}(0))$, it is independent of $a,b$ (see
 Fig.~\ref{fig:cl-security-s0}).
\begin{figure}[H]
\centering
\begin{tikzpicture}
\draw (0,2.54)--(5,2.54);
\draw (0,2.46)--(5,2.46);
		\draw (4.5,2.04)--(6.5,2.04);
		\draw (4.5,1.96)--(6.5,1.96);
\draw (2,1.54)--(4,1.54);	
\draw (2,1.46)--(4,1.46);
\draw (0,.79)--(1,.79);	\draw (1,.75)--(6.5,.75);	\draw(6.75,.79)--(7.25,.79);
\draw (0,.71)--(1,.71);					\draw(6.75,.71)--(7.25,.71);
\draw (0,0.04)--(1,0.04);	\draw (1,0)--(6.5,0);
\draw (0,-.04)--(1,-.04);

\filldraw[fill=white] (-.5,-.25) rectangle (0,2.75);
\node[rotate=90] at (-.25,1.25) {$\KeyGen$};

\node at (.25,2.7) {$pk$};
\node at (.25,.95) {$pk$};
\node at (.35,.2) {$evk$};

\filldraw[fill=white] (.75,-.25) rectangle (1.25,1);
\node at (1,.375) {$\advA_1$};

\node at (1.5,.95) {\small $\cal M$};
\node at (1.5,.2) {\small $\cal E$};

\filldraw[fill=white] (2,1.75) rectangle (2.5,1.25);
\node at (2.25,1.5) {$\maxmix$};
\filldraw[fill=white] (2,1) rectangle (2.5,.5);
\node at (2.25,.75) {$\Psi^{\bar r}$};

\node at (2.8,1.7) {$a,b$};
\node at (2.75,.95) {\small $\cal M$};

\filldraw[fill=white] (3.25,1.75) rectangle (4.25,.5);
\node at (3.75,.75) {$\xgate^a\zgate^b$};

\node at (4.15,2) {$0,0$};
\node at (4.5,.95) {\small $\cal X$};

\filldraw[fill=white] (5,2.85) rectangle (5.5,1.65);
\node[rotate=90] at (5.25,2.25) {\small$\HE.\Enc$};

\node at (5.75,2.2) {$c$};

\filldraw[fill=white] (6.25,-.25) rectangle (6.75,2.25);
\node at (6.5,.75) {$\advA_2$};

\node at (7.5,.75) {$r'$};

\draw[dashed] (1.9,1.85) -- (4.75,1.85)--(4.75,.4)--(3.15,.4)--(3.15,1.15)--(1.9,1.15)--(1.9,1.85);
\end{tikzpicture}
\caption{When $s=0$, $\advA'$ is simulating $\mathsf{PubK_{\advA,\CL'}^{cpa}}$. In this case, $c$ is independent of $a,b$, and so the only dependence on $a,b$ is the quantum-one-time-pad encrypted message in~$\cal X$.}\label{fig:cl-security-s0}
\end{figure}

\noindent Then from the perspective of $\advA_2$, since $a,b$ is uniform random, the system $\cal X$ just contains the completely mixed state $\maxmix$ (see Fig.~\ref{fig:cl-security-last}).

\begin{figure}[H]
\centering
\begin{tikzpicture}
\draw (0,2.54)--(5,2.54);
\draw (0,2.46)--(5,2.46);
		\draw (4.5,2.04)--(6.5,2.04);
		\draw (4.5,1.96)--(6.5,1.96);
\draw (0,.79)--(1,.79);	\draw (1,.75)--(6.5,.75);	\draw(6.75,.79)--(7.25,.79);
\draw (0,.71)--(1,.71);					\draw(6.75,.71)--(7.25,.71);
\draw (0,0.04)--(1,0.04);	\draw (1,0)--(6.5,0);
\draw (0,-.04)--(1,-.04);

\filldraw[fill=white] (-.5,-.25) rectangle (0,2.75);
\node[rotate=90] at (-.25,1.25) {$\KeyGen$};

\node at (.25,2.7) {$pk$};
\node at (.25,.95) {$pk$};
\node at (.35,.2) {$evk$};

\filldraw[fill=white] (.75,-.25) rectangle (1.25,1);
\node at (1,.375) {$\advA_1$};

\node at (1.5,.95) {\small $\cal M$};
\node at (1.5,.2) {\small $\cal E$};

\filldraw[fill=white] (2,1) rectangle (2.5,.5);
\node at (2.25,.75) {$\Psi^{\bar r}$};

\node at (2.75,.95) {\small $\cal M$};

\fill[fill=white] (3.5,.8) rectangle (4.25,.7);

\node at (3.5,.75) {\meas};

\filldraw[fill=white] (4.25,1) rectangle (4.75,.5);
\node at (4.5,.75) {$\maxmix$};

\node at (4.15,2) {$0,0$};
\node at (5,.95) {\small $\cal X$};

\filldraw[fill=white] (5,2.85) rectangle (5.5,1.65);
\node[rotate=90] at (5.25,2.25) {\small$\HE.\Enc$};

\node at (5.75,2.2) {$c$};

\filldraw[fill=white] (6.25,-.25) rectangle (6.75,2.25);
\node at (6.5,.75) {$\advA_2$};

\node at (7.5,.75) {$r'$};

\draw[dashed] (3.05,1.1) rectangle (5.25,.4);
\end{tikzpicture}

\caption{The circuit from Figure \ref{fig:cl-security-s0} is equivalent to the above circuit, in which the system in $\mathcal{M}$ is replaced with the completely mixed state. Then from the perspective of $\advA$, the experiment is independent of $r$.}\label{fig:cl-security-last}
\end{figure}

\noindent Since the experiment $\mathsf{PubK_{\advA,\CL'}^{cpa}}$ is independent of $r$ from the perspective of $\advA$, it follows that $\Pr[\mathsf{PubK_{advA,\CL'}^{cpa}}(\kappa)=1]=\frac{1}{2}$. Combining this with Equation \eqref{eq:negl}, we get
\begin{equation*}
\Pr[\mathsf{PubK_{\advA,\CL}^{cpa}}(\kappa)=1]\leq \frac{1}{2}+2\eta(\kappa),
\end{equation*}
which completes the proof, since $2\eta$ is still a negligible function.
\end{proof}

}

\vspace{-5pt}
\iftoggle{crypto}{\section{$\tgate$-gate Computation Using Entanglement: $\EPR$}}
{\section{Scheme $\EPR$: $\tgate$-gate Computation Using Entanglement}}
\label{sec:scheme-EPR}
\vspace{-5pt}
In order to achieve universality for quantum circuits, we need to add a non-Clifford group gate, such as the $\tgate$-gate. As noted in Sec.~\ref{sec:intro-summary}, if we apply the same technique as in Sec.~\ref{sec:scheme-Clifford} (\emph{i.e.} to apply the $\tgate$-gate on the encrypted quantum data)  we run into a problem,
\iftoggle{crypto}
{ since $\tgate \xgate^a \zgate^b = \xgate^a \zgate^{a \oplus b} \pgate^a \tgate$.}
{ since: \begin{equation}
\label{eqn:T-gate-P-error}
\tgate \xgate^a \zgate^b = \xgate^a \zgate^{a \oplus b} \pgate^a \tgate.
\end{equation}}
That is, conditioned on $a$, the output picks up an undesirable
$\pgate$ error, which cannot be corrected by applying Pauli
corrections. In~\cite{Chi05}, Childs arrives at the same conclusion,
and makes the observation that, in the case where $a=1$,
the evaluation algorithm could be made to \emph{correct} this erroneous
$\pgate$-gate\iftoggle{crypto}{}{ by executing a correction (which consists of
$\zgate\pgate$)}. As long as the evaluation algorithm does not find out if this
correction is being executed or not, security holds. The solution in~\cite{Chi05} involves quantum interaction; this was recently improved to a single auxiliary qubit, coupled with classical interaction~\cite{QCED,B15}. \iftoggle{crypto}{}{In this section, we base the evaluation of the $\tgate$-gate on a modification of this technique, as presented in Fig.~\ref{fig:T-gate-EPR}. The modification is that we allow the auxiliary qubit to be prepared in a state dependent on the $\xgate$-encryption key, whereas~\cite{QCED,B15} explicitly avoids  this since it requires the auxiliary qubits to be prepared independently of the computation. Correctness of Fig.~\ref{fig:T-gate-EPR} is proven in App.~\ref{appendix:Correctness-T-gate}.

}As a proof technique (for establishing security), \cite{QCED, B15} considers an equivalent, entanglement-based protocol. Here, we use the idea of exploiting entanglement in order to \emph{delay} the correction required for the evaluation of the $\tgate$-gate on encrypted data.
The protocol is illustrated in Fig.~\ref{fig:T-gadget-EPR}. \iftoggle{crypto}{Correctness of Fig.~\ref{fig:T-gadget-EPR} is proven in the full version \cite{BJ15}.}{}

\iftoggle{crypto}{}{
\begin{figure}
\centering
\begin{tikzpicture}

\draw[dashed] (0,0)--(8,0);
\draw (2.5,1.5)--(7,1.5);
\draw (5,.5)--(8,.5);
\draw (5,.5)--(5,-1);
\draw (2.5,-1)--(5,-1);
\draw (7,1.54)--(7.5,1.54);
\draw (7,1.46)--(7.5,1.46);
\draw (7.54,1.5)--(7.54,-.5);
\draw (7.46,1.5)--(7.46,-.5);
\draw (7.5,-.46)--(8,-.46);
\draw (7.5,-.54)--(8,-.54);

\filldraw[fill=white] (4,1.25) rectangle (4.5,1.75);
\node at (4.25,1.5) {$\tgate$};

\node at (1.6,1.5) {$\xgate^a\zgate^b\ket{\psi}$};

\filldraw[fill=white] (3,-1.25) rectangle (3.5,-.75);
\node at (3.25,-1) {$\pgate^a$};
\filldraw[fill=white] (4,-1.25) rectangle (4.5,-.75);
\node at (4.25,-1) {$\zgate^k$};

\node at (2,-1) {$\ket{+}$};

\node at (6.75,1.5) {\meas};
\node at (7.5,1.5) {\cntrl};
\node at (7.5,-.5) {\cntrl};

\node at (5.5,.5) {\cntrl};
\node at (5.5,1.5) {\target};
\draw (5.5,.5)--(5.5,1.5);

\node at (6.5,-1) {$(k\in_R\{0,1\})$};
\node at (8.4,-.5) {$c$};
\node at (10,.5) {$\xgate^{a\oplus c}\zgate^{a\oplus b\oplus k\oplus a\cdot c}\tgate\ket{\psi}$};

\end{tikzpicture}

 \caption{\label{fig:T-gate-EPR}Functionality of the $\tgate$-gate gadget.}
\end{figure}
}

\begin{figure}
\centering
\begin{tikzpicture}
\draw (0,5)--(2,5);
\draw (2,5.04)--(3,5.04);
\draw (2.5,4.96)--(3,4.96);
\node at (-1,5) {$\xgate^{f_{a,i}}\zgate^{f_{b,i}}\ket{\psi}$};
\node at (.25,5.2) {\small $\mathcal{X}_i$};

\filldraw[fill=white] (.5,5.25) rectangle (1,4.75);
\node at (.75,5) {$\tgate$};

\node at (1.5,5) {\target};
\node at (1.5,4.5) {\cntrl};
\draw (1.5,5)--(1.5,4.5);

\node at (2.25,5) {\meas};

\node at (3.2,5) {$c$};

\draw (.25,4.5)--(3,4.5);
\node at (5.25,4.5) {$\xgate^{f_{a,i}\oplus c}\zgate^{f_{a,i}\oplus f_{b,i}\oplus k_t\oplus cf_{a,i}}\tgate\ket{\psi}$};
\node at (2.75,4.3) {\small $\mathcal{X}_i$};
\node at (-.5,4.25) {$\ket{\Phi^+}$};
\draw (0,4.25)--(.25,4.5);
\draw (0,4.25)--(.25,4);
\draw (.25,4)--(3.5,4);
\node at (2.75,3.8) {{\small $\mathcal{R}_t$}};

\iftoggle{crypto}{
\node at (6.2,4.4) {\parbox{3in}{\raggedleft $f_t\leftarrow f_{a,i}$\\[5pt]
$V\leftarrow V\cup\{k_t\}$\\[5pt]
$f_{a,i}\leftarrow f_{a,i}\oplus c$\\[5pt]
$f_{b,i}\leftarrow (1\oplus c)f_{a,i}\oplus f_{b,i} \oplus k_t$}};
}{
\node at (9.5,3.75) {\parbox{3in}{\raggedleft $f_t\leftarrow f_{a,i}$\\[5pt]
$V\leftarrow V\cup\{k_t\}$\\[5pt]
$f_{a,i}\leftarrow f_{a,i}\oplus c$\\[5pt]
$f_{b,i}\leftarrow (1\oplus c)f_{a,i}\oplus f_{b,i} \oplus k_t$}};
}

\draw (3.5,4)--(3.5,3)--(6,3);
\filldraw[fill=white] (3.9,3.25) rectangle (4.6,2.75);
\node at (4.25,3) {$\pgate^{f_t}$};

\filldraw[fill=white] (5,3.25) rectangle (5.5,2.75);
\node at (5.25,3) {$\hgate$};

\node at (6.25,3) {\meas};

\draw (6.6,3.04)-- (7.25,3.04);
\draw (6.6,2.96)-- (7.25,2.96);
\node at (7.5,3) {$k_t$};

\draw[dashed] (3.25,3.35) rectangle (7.9,2.25);
\node at (5.4,2.5) {\small (Part of decryption)};

\end{tikzpicture}

  \caption{\label{fig:T-gadget-EPR}Evaluation protocol for the \th{t} $\tgate$-gate, \iftoggle{crypto}{on}{applied to} the \th{i} wire. The key-polynomials $f_{a,i}$ and $f_{b,i}$ are in $\mathbb{F}_2[V]$. After the protocol, $V$ gains a new variable corresponding to the unknown measurement result $k_t$. The dashed box shows part of the decryption\iftoggle{crypto}{}{ procedure}, which happens at some point in the future, after the complete evaluation is finished.}
\end{figure}

Fig.~\ref{fig:T-gadget-EPR} shows that, using the \iftoggle{crypto}{}{entangled }state~$\ket{\Phi^+} = \frac{1}{\sqrt{2}}(\ket{00} + \ket{11})$, the conditional~$\pgate$ correction can be delayed. The cost of this is that the value of the measurement result,~$k_t$, on auxiliary register $\regR_t$, is undetermined until later, when it is  measured as part of the decryption\iftoggle{crypto}{}{ algorithm}. Thus we view the key updates as a symbolic computation: each time a $\tgate$-gate is applied, an extra variable, $k_t$, is introduced.

For the first $\tgate$-gate evaluation ($t=1$), the evaluation procedure does not have the knowledge to evaluate  $f_1=f_{a,i}$, where $i$ is the wire upon which the gate is performed, in order to perform the correction. It is possible (using the classical scheme $\HE$), to compute a classical ciphertext $\widetilde{f_1}$ that decrypts to $f_1(a_1,b_1,\dots,a_n,b_n)$. Thus, for this $\tgate$-gate, the output  part of the auxiliary system contains both $\widetilde{f_1}$ and the register~$\regR_1$. As part of the decryption operation,  compute $f_1 \leftarrow \HE.\Dec(\widetilde{f_1})$, and apply $\pgate^{f_1}$ on $\regR_1$ before measuring in the Hadamard basis and obtaining~$k_1$. From the point of view of the evaluation procedure,~$k_1$ is unknown and so it becomes an \emph{unknown} part of the encryption key (in contrast with the previous keys, which are also ``unknown'', but to a lesser degree, since we have access to the classical encrypted values of these keys). The algorithm $\Eval$ continues in this fashion for values of $t$ up to~$R$; each time, the set of unknown variables increasing by~one.
Note that, according to Fig.~\ref{fig:T-gadget-EPR}, as well as the linearity of the key update rules,  for all $t$, $f_t\in \mathbb{F}_2[a_1,\dots,a_n,b_1,\dots,b_n,k_1,\dots,k_{t-1}]$ is linear (since $c$ is a known constant), so we can write $f_t=f_t^k+f_t^{ab}$ for $f_t^k\in \mathbb{F}_2[k_1,\dots,k_{t-1}]$ and $f_t^{ab}\in \mathbb{F}_2[a_1,\dots,a_n,b_1,\dots,b_n]$.

The cost of this construction is that each $\tgate$-gate adds to the complexity of the decryption procedure, since, in particular, for each $\tgate$-gate, we must perform a possible $\pgate$-correction and a measurement on an auxiliary qubit. In addition, we cannot evaluate the key-polynomials, nor the $f_t$, until the variables $k_t$ have been measured, so this evaluation must take place in the decryption phase, increasing the dependence on $R$, the number of $\tgate$-gates, to $O(R^2)$\iftoggle{crypto}{ (see full version \cite{BJ15}).}{} \iftoggle{crypto}{}{We make this dependence precise in Thm.~\ref{cor:EPR-dec-complexity}.}\looseness=-1

We now formally define the indivisible \iftoggle{crypto}{QHE}{quantum homomorphic encryption} scheme,~$\EPR$. As in $\CL$, we have message space ${\cal M}=\mathbb{C}^{\{0,1\}}$ and cipherspace ${\cal C}=\mathbb{C}^{C\times C}\otimes\cal X$, where $C$ is the output space of $\HE.\Enc$ and ${\cal X}\equiv \mathbb{C}^{\{0,1\}}$. Since $\EPR$ is indivisible, the output space of $\EPR.\Eval^{\mathsf{C}}$ has the form $\mathcal{R}_{aux}\otimes {\cal C'}^{\otimes m}$. \iftoggle{crypto}{}{We require an indivisible scheme, because decryption of any one of the output qubits requires access to the auxiliary system. }In our case, we have $\mathcal{R}_{aux}=\mathcal{R}_1\otimes\dots\otimes \mathcal{R}_R\otimes(\mathbb{C}^{\{0,1\}^{R+1}})^{\otimes R}\otimes (\mathbb{C}^{C'})^{\otimes R}$, where $R$ is the number of $\tgate$-gates\iftoggle{crypto}{}{ in $\mathsf{C}$}, $C'$ is the output space of $\HE.\Eval$, and $\mathcal{R}_t\equiv \mathbb{C}^{\{0,1\}}$\iftoggle{crypto}{}{ for each $t$}. The classical parts of the auxiliary space allow us to output $R$ linear polynomials in $\mathbb{F}_2[k_1,\dots, k_R]$ corresponding to $\{f_t^k\}_{t=1}^R$, each of which can be represented with $R+1$ bits; as well as $R$ $\HE.\Eval$ outputs, corresponding to encryptions of $\{f_t^{ab}(a_1,\dots,a_n,b_1,\dots,b_n)\}_{t=1}^R$.
Similarly, we have $\mathcal{C}'=(\mathbb{C}^{\{0,1\}^{R+1}})^{\otimes 2}\otimes\mathbb{C}^{C'\times C'} \otimes\cal X$.

\iftoggle{crypto}{
The key generation, $\EPR.\KeyGen$, and encryption, $\EPR.\Enc$, are defined exactly as $\CL.\KeyGen$ and $\CL.\Enc$. We now define $\EPR.\Eval$ and $\EPR.\Dec$.
}{
\noindent\textbf{Key Generation.} $\EPR.\KeyGen(1^\kappa)$. The key generation procedure is the same as $\mathsf{CL.KeyGen}(1^\kappa)$.

\noindent\textbf{Encryption.} $\EPR.\Enc_{pk}:D({\mathcal{M}})\rightarrow D({\cal C})$. The encryption procedure is the same as $\mathsf{CL.Enc}_{pk}$.
}

\noindent\textbf{Evaluation.} $\EPR.\Eval_{evk}$. As in $\mathsf{CL}$, apply gates in $\{\xgate, \zgate, \pgate, \hgate, \cnot\}$ directly on the encrypted quantum registers.
For the $\tgate$-gate, use the gadget defined in Fig.~\ref{fig:T-gadget-EPR}. This gadget differs from previous gadgets in that it uses an auxiliary Bell state,~$\ket{\Phi^+}$. After the system of the \th{i} wire, $\mathcal{X}_i$, is measured,  relabel half of the Bell state as $\mathcal{X}_i$, and the other half as~$\regR_t$, which is returned as part of~$\mathcal{R}_{aux}$.
The full evaluation procedure is as follows.

\begin{enumerate}
\item Set $V\leftarrow \{a_i,b_i\}_{i\in [n]}$, and
$\forall\,i\in [n]$, \iftoggle{crypto}{}{set $f_{a,i},f_{b,i}\in\mathbb{F}_2[V]$ as} $f_{a,i}\leftarrow a_i$, $f_{b,i}\leftarrow b_i$.
\item Let $\mathsf{g}_1,\dots,\mathsf{g}_G$ be a topological ordering of the gates in~$\mathsf{C}$. For $j=1,\dots,G$, evaluate $\mathsf{g}_j$ using the appropriate gadget.
\item Let $S$ be the set of output wires.
     Let $\mathcal{L}$ be the set of labels $\mathcal{L}=\{(a,i),(b,i):i\in S\}\cup\{1,\dots, R\}$. For each $\alpha\in \mathcal{L}$, we want to homomorphically evaluate $f_\alpha$ to  obtain the actual (encrypted) key, but we can only actually evaluate the part of $f_\alpha$ that is in the variables $\{a_i,b_i\}_i$ --- the $\{k_t\}_t$ are still unknown.  Recall that we can write  $f_\alpha=f_\alpha^k+f_\alpha^{ab}$ for $f_\alpha^k\in \mathbb{F}[k_1,\dots,k_R]$ and $f_\alpha^{ab}\in \mathbb{F}_2[a_1,\dots,a_n,b_1,\dots,b_n]$.
      Compute
 $\widetilde{f_\alpha^{ab}}\leftarrow\mathsf{HE.Eval}_{evk}^{f_\alpha^{ab}}(\tilde{a}_1,\dots,\tilde{a}_n,\tilde{b}_1,\dots,\tilde{b}_n)$.
\item Output: \iftoggle{crypto}{the $m=|S|$ qubit registers $\{\mathcal{X}_i:i\in S\}$ corresponding to the encrypted output of the circuit;
the $R$ qubit registers $\mathcal{R}_1,\dots,\mathcal{R}_R$ corresponding to auxiliary states created by $\tgate$-gadgets;
the polynomials $\{f_\alpha^k\}_{\alpha\in \mathcal{L}}\subset\mathbb{F}_2[k_1,\dots,k_R]$ and the homomorphically evaluated polynomials $\{\widetilde{f_\alpha^{ab}}\}_{\alpha\in \mathcal{L}}$.}{\begin{itemize}
\item The $m=|S|$ qubit registers $\{\mathcal{X}_i:i\in S\}$ corresponding to the encrypted output of the circuit;
\item The $R$ qubit registers $\mathcal{R}_1,\dots,\mathcal{R}_R$ corresponding to auxiliary states created by $\tgate$-gadgets;
\item The polynomials $\{f_\alpha^k\}_{\alpha\in \mathcal{L}}\subset\mathbb{F}_2[k_1,\dots,k_R]$ and the homomorphically evaluated polynomials $\{\widetilde{f_\alpha^{ab}}\}_{\alpha\in \mathcal{L}}$.
\end{itemize}}
\end{enumerate}

\noindent\textbf{Decryption.} $\EPR.\Dec_{sk}$. In order to decrypt,
measure the $\mathcal{R}_t$ in order from 1 to~$R$, computing $f_t(k_1,\dots,k_{t-1})$ as required.
Formally:
\begin{enumerate}
\item For $t=1,\dots, R$:
  \begin{enumerate}
  \item Decrypt $f^{ab}_t \leftarrow \mathsf{HE.Dec}_{sk}(\widetilde{f^{ab}_t})$.
  \item Compute $a \leftarrow f_t^k(k_1,\dots,k_{t-1})\oplus f^{ab}_t$ and apply $\hgate\pgate^a$ to $\mathcal{R}_t$.
  \item Measure $\mathcal{R}_t$ to get $k_t$.
  \end{enumerate}
\item Let $S$ be the set of indices of the output qubit registers. For $i\in S$:
  \begin{enumerate}
  \item Decrypt $f^{ab}_{a,i} \leftarrow \mathsf{HE.Dec}_{sk}(\widetilde{f^{ab}_{a,i}})$ and $f^{ab}_{b,i} \leftarrow \mathsf{HE.Dec}_{sk}(\widetilde{f^{ab}_{b,i}})$.
  \item Compute $a_i \leftarrow  f^k_{a,i}(k_1,\dots,k_t)\oplus f^{ab}_{a,i}$ and $b_i \leftarrow f^k_{b,i}(k_1,\dots,k_t)\oplus f^{ab}_{b,i}$.
  \end{enumerate}
\item  To each register $\mathcal{X}_i$, apply the map $\QDec_{a_i, b_i}$.
Output registers $\mathcal{X}_1, \ldots, \mathcal{X}_{m}$.
\end{enumerate}

\iftoggle{crypto}{
We prove that $\EPR$ is homomorphic for all quantum circuits in the universal gate set $\{\xgate,\zgate,\pgate,\cnot,\hgate,\tgate\}$, $R^2$-quasi-compact, and q-IND-CPA, in \cite{BJ15}.
}{
\subsection{Analysis of $\EPR$}

\iftoggle{crypto}{}{We now analyse the various properties of $\EPR$.}
Since the scheme $\EPR$ uses the same $\KeyGen$ and $\Enc$ procedures as~$\CL$, the following theorem follows from Thm.~\ref{thm:security:Clifford}.

\begin{theorem}
\label{thm:security:QHE}
If $\HE$ is q-IND-CPA secure, then $\EPR$ is q-IND-CPA secure.
\end{theorem}

The next theorem shows the  homomorphic property for all circuits (recall that this property is independent of compactness).

\begin{theorem}
\label{thm:correctness:QHE}
Let $\classS$ be the class of all quantum circuits. Then $\EPR$ is  $\classS$-homomorphic.
\end{theorem}
The proof follows from the circuits in App.~\ref{appendix:Key-update-rules-stabilizer}, Fig.~\ref{fig:T-gadget-EPR}, as well as the homomorphic property of~$\mathsf{HE}$.

Since the complexity of the decryption procedure depends on $R$, the number of $\tgate$-gates in the circuit, it is clear that the scheme $\EPR$ is not compact. However, by analysing the circuit's dependence on $R$, we can see that for a very large class of quantum circuits, $\EPR$ is non-trivially quasi-compact. The following theorem is immediate from the decryption procedure.

\begin{theorem}\label{cor:EPR-dec-complexity}
Let $p$ be a polynomial such that $\HE.\Dec$ has complexity $O(p(\kappa))$. Then \iftoggle{crypto}{}{the decryption procedure} $\EPR.\Dec$ has complexity $O(R^2+Rp(\kappa)+mp(\kappa)+mR)$.
\end{theorem}

\noindent Thus, the dependence of the complexity of $\EPR.\Dec$ on the evaluated circuit $\mathsf{C}$ is $R^2$:

\begin{corollary}
Let $R(\mathsf{C})$ denote the number of $\tgate$-gates in a circuit $\mathsf{C}$. Then $\EPR$ is $R^2$-quasi-compact.
\end{corollary}

This beats the compactness of the trivial scheme for all circuits $\mathsf{C}$ such that the number of $\tgate$-gates is less than the squareroot of the number of gates; that is $R\ll \sqrt{G}$.

}

\vspace{-5pt}
\iftoggle{crypto}{
\section{$\tgate$-gate Computation Using Auxiliary States: $\AUX$}
}{
\section{Scheme $\AUX$: $\tgate$-gate Computation Using Auxiliary States}
}
\label{sec:scheme-AUX}
\vspace{-5pt}
In the previous QHE scheme, we solved the problem of performing the $\pgate$ correction \iftoggle{crypto}{}{(Eq.~\eqref{eqn:T-gate-P-error})} by \emph{delaying} the correction via entanglement.
In this section, we present a quantum homomorphic encryption scheme, $\AUX$, that takes a more proactive approach to dealing with the $\pgate$ correction. At a high level, $\AUX$ can be understood as the following:  as part of the evaluation key, $\textsf{AUX.Keygen}$ outputs a number of auxiliary states. These states ``encode'' parts of the original encryption key, and are used to correct for the errors induced by the straightforward application of the $\tgate$-gate on the cipherstates. In more details, the auxiliary states encode hidden versions of $\pgate$ corrections, such as $\ket{+_{a,k}}:=\zgate^{k}\pgate^{a} \ket{+}$ (where $k$ is a random bit and~$a$ is an encryption key) that are useful for the evaluation of the $\tgate$-gate (see Fig.~\ref{fig:T-gate-AUX}). In general (after having applied prior gates), the exact auxiliary state will not be available; instead, the $\Eval$ procedure combines a number of auxiliary states in order to create a single copy of a state that is useful for performing the correction. This combination operation, however, is expensive as it introduces new unknowns (in terms of new variables as well as ``cross-terms''), that need to be corrected in any future $\tgate$-gate. Thus the size of the evaluation key grows rapidly, as  a polynomial whose degree is exponential in the $\tgate$-depth.  We can thus tolerate only a \emph{constant} $\tgate$-gate depth for this scheme to be efficient.
\begin{figure}[t]
\centering
\begin{tikzpicture}
\draw (0.5,1)--(3,1);
\node at (.75,.75) {\small $\mathcal{X}_i$};
\draw (3,1.04) -- (3.75,1.04);
\draw (3,.96) -- (3.75,.96);
\node at (-.3,1) {$\xgate^{f_{a,i}}\zgate^{f_{b,i}}$};
\filldraw[fill=white] (1,1.25) rectangle (1.5,.75);
\node at (1.25,1) {$\tgate$};
\node at (2.075,1) {\target};
\draw (2.075,1) -- (2.075,0);
\node at (2.075,0) {\cntrl};
\node at (3,1) {\meas};
\node at (4,1) {$c$};

\draw (0.5,0)--(3.75,0);
\node at (-.2,0) {$\ket{+_{f_{a,i},k}}$};
\node at (3.5,-.25) {\small $\mathcal{X}_i$};
\node at (5.75,0) {$\xgate^{f_{a,i}\oplus c}\zgate^{f_{a,i}\oplus f_{b,i}\oplus k\oplus cf_{a,i}}\tgate$};

\node at (8.1,.5) {\parbox{2in}{\raggedleft $f_{a,i}\leftarrow f_{a,i}\oplus c$\\[5pt]
$f_{b,i}\leftarrow f_{a,i}\oplus f_{b,i}\oplus k\oplus cf_{a,i}$\\[5pt]
$V\leftarrow V\cup \mathrm{var}(k)$}};
\end{tikzpicture}
\caption{A $\tgate$-gadget for the scheme $\AUX$ consists of the above circuit and key-update rules. We use $\mathrm{var}(k)$ to denote the set of variables in the polynomial $k$, which depends on the construction of the auxiliary state $\ket{+_{f_{a,i},k}}$, described below.}\label{fig:T-gate-AUX}
\end{figure}

We further specify  that $\AUX$ is a symmetric-key encryption scheme. This is because $\AUX.\KeyGen$  generates auxiliary qubits that depend on the quantum one-time pad encryption keys.  Also, $\textsf{KeyGen}$ takes an extra parameter $1^n$, where $n$ is  an upper bound on the total number of qubits that can be encrypted ($\AUX$ acts much like a classical one-time pad scheme that picks a fixed-length encryption key ahead of time).  After this bound on the number of encryptions has been attained, no further qubits can be encrypted. We will suppose without loss of generality that a circuit being homomorphically evaluated is on $n$ wires.
Furthermore, the number and type of auxiliary qubits will depend on the $\tgate$-depth of the circuit to be evaluated, $L$. The scheme will not be able to homomorphically evaluate circuits with $\tgate$-depth greater than $L$. \iftoggle{crypto}{}{We will see that the number of required auxiliary states grows super-exponentially in $L$, so we will require that $L$ be a constant. }Fix a constant $L$. We will now define a scheme $\AUX=\AUX_L$ that is homomorphic for all circuits with $\tgate$-depth at most $L$.

\iftoggle{crypto}{}{\noindent\textbf{Auxiliary Qubit Construction.}} \iftoggle{crypto}{P}{In general, p}roviding the necessary auxiliary states for each $\tgate$-gate  would  require advance knowledge of  the key $f_{a,i}$ at the time a $\tgate$-gate is applied to the \th{i} wire. Since this depends on both the circuit being applied and \iftoggle{crypto}{}{on the }prior measurement results, we appear to be at an impasse. The key observation that allows us to continue with this approach is that, given auxiliary states $\ket{+_{f_1,k_1}}$ and $\ket{+_{f_2,k_2}}$, we can combine them to get $\ket{+_{f_1\oplus f_2,k}}$, for some $k$, using the following circuit:
\vspace{-10pt}
{\begin{figure}[H]}
\centering
\begin{tikzpicture}
\node at (-.25,.5) {$\ket{+_{f_1,k_1}}$};
\node at (-.25,0) {$\ket{+_{f_2,k_2}}$};
\draw (.5,.5)--(2.75,.5);
\draw (.5,0)--(2,0);
\draw (2,0.04)--(2.75,0.04);
\draw (2,-0.04)--(2.75,-0.04);
\node at (4.75,.5) {$\ket{+_{f_1\oplus f_2,k_1\oplus k_2\oplus (f_1\oplus c)f_2}}$};
\node at (1,.5) {\cntrl};
\node at (1,0) {\target};
\draw (1,.5)--(1,0);
\node at (2,0) {\meas};
\node at (3,0) {$c$};
\end{tikzpicture}
{\end{figure}}
\vspace{-10pt}
\noindent By iterating this procedure, given auxiliary states $\ket{+_{f_1,k_1}},\dots,\ket{+_{f_r,k_r}}$, we can construct\iftoggle{crypto}{}{ the auxiliary state} $\ket{+_{f_1\oplus\dots\oplus f_r,k}}$, where $k=\bigoplus_{i=1}^m k_i\oplus \bigoplus_{i=2}^r c_if_i\oplus \bigoplus_{i=1}^r\bigoplus_{j=1}^{i-1}f_if_j$ for known values $c_i$. Thus, if we give many initial auxiliary states of the form $\{\ket{+_{a_i,k_{a,i}}},\ket{+_{b_i,k_{b,i}}}\}_i$ (with different keys for different copies), we can construct $\ket{+_{f,k}}$ for $f$ a linear function of $\{a_i,b_i\}_{i\in [n]}$. However, using an auxiliary state $\ket{+_{f_{a,i},k}}$ to facilitate a $\tgate$-gate on the \th{i} wire introduces the unknown $k$ into $f_{b,i}$. In particular, suppose $f_{a,i}=\bigoplus_{j=1}^r t_j$ for some monomial terms $t_j\in\mathbb{F}_2[V]$. Then we will need to construct it from auxiliary states $\ket{+_{t_1,k_1}},\dots,\ket{+_{t_r,k_r}}$, to get $\ket{+_{f_{a,i},k}}$ for $k=\bigoplus_{i=1}^m k_i\oplus \bigoplus_{i=2}^r c_it_i\oplus \bigoplus_{i=1}^r\bigoplus_{j=1}^{i-1}t_it_j$. Thus, after the $\tgate$-gadget, the new keys $f_{a,i}',f_{b,i}'$ are in  unknowns $V\cup\{k_1,\dots,k_r\}$. Furthermore, because of the cross terms $t_it_j$, the degree of the key-polynomials increases, so we can no longer assume they are linear. Since we can't produce $\ket{+_{f_1f_2,k}}$ from $\ket{+_{f_1,k_1}}$ and $\ket{+_{f_2,k_2}}$, we need to provide additional auxiliary states for every possible term. We discuss this more formally below and in \iftoggle{crypto}{the full version \cite{BJ15}.}{Sec.~\ref{sec:aux-analysis}.}

\iftoggle{crypto}{}{\paragraph{Spaces}} As in $\CL$ and $\EPR$, we work with qubits: $\mathcal{M}\equiv \mathbb{C}^{\{0,1\}}$. In contrast to our previous schemes, the classical encryptions of quantum one-time pad keys is part of the evaluation key (for convenience only), so we  have $\mathcal{C}\equiv \mathbb{C}^{\{0,1\}}$. However, after evaluation, the classical encryption of the new one-time pad keys is needed for decryption, so as in $\CL$, we  have $\mathcal{C}'\equiv \mathbb{C}^{C'\times C'}\otimes\mathcal{X}$, where $C'$ is the output space of $\HE.\Eval$, and $\mathcal{X}\equiv\mathbb{C}^{\{0,1\}}$.

\noindent\textbf{Key Generation.} $\mathsf{\AUX.Keygen}(1^\kappa, 1^n)$.
The evaluation key  contains auxiliary states that allow each of $L$ layers of $\tgate$-gates to be implemented. Thus, for each layer, since every wire must have the possibility to implement a $\tgate$-gate, for each wire, we need to be able to construct an auxiliary state $\ket{+_{f_{a,i},k}}$ for some $k$. Since we can add auxiliary states, we can construct this auxiliary state if we have an auxiliary state for each term in $f_{a,i}$. Since $f_{a,i}$ depends on the circuit, which we do not know in advance, we need to provide an auxiliary state for every term that could possibly be in $f_{a,i}$ at the \th{\ell} layer of $\tgate$-gates, for $\ell=1,\dots,L$.

We now define sets of monomials $T_1,\dots,T_L$ such that the keys in the \th{\ell} layer consist of sums of terms from $T_\ell$\iftoggle{crypto}{}{ (as proven in Lemma \ref{lem:terms})}. Let $V_1:={\{a_i,b_i\}_{i\in[n]}}$, and define $T_1\subset \mathbb{F}_2[V_1]$ by \iftoggle{crypto}{$T_1:=\{a_i,b_i\}_{i\in[n]}$.}{
$$T_1:=\{a_1,\dots,a_n,b_1,\dots,b_n\}.$$}
The monomials in $T_1$ represent the possible terms in the key-polynomials before the first layer of $\tgate$-gates. Each of the up to $n$ $\tgate$-gates in the first layer requires a copy of each of $\{\ket{+_{t,k^{(1)}_t}}\}_{t\in T_1}$, with independent random keys for each, for a total of $n|T_1|$ auxiliary states. More generally, for the \th{\ell} layer of $\tgate$-gates, we let $T_{\ell}$ be the set of possible terms in the key-polynomials before applying the \th{\ell} layer of $\tgate$-gates. We can see from the $\tgate$-gadget, as well as the construction for adding auxiliary states that the keys from the previous layer's auxiliary states, $\{k^{(\ell-1)}_{1,i},\dots,k^{(\ell-1)}_{|T_{\ell-1}|,i}\}_{i=1}^n$, may now be variables in the key-polynomials, and that products of terms from the previous layer may now be terms in the key-polynomials of the current layer. (This is caused by auxiliary state addition. See \iftoggle{crypto}{\cite{BJ15}}{Lemma \ref{lem:terms}} for details). Thus,
for $\ell> 1$, we can define $T_{\ell}\subset \mathbb{F}_2[V_{\ell}]$, where \iftoggle{crypto}{$V_{\ell}:=V_{\ell-1}\cup \{k^{(\ell-1)}_{1,i},\dots,k^{(\ell-1)}_{|T_{\ell-1}|,i}\}_{i=1}^n$,}{
$$V_{\ell}:=V_{\ell-1}\cup \left\{k^{(\ell-1)}_{1,i},\dots,k^{(\ell-1)}_{|T_{\ell-1}|,i}\right\}_{i=1}^n,$$}
by
$$T_{\ell}:=T_{\ell-1}\cup \{tt': t,t'\in T_{\ell-1}, t\neq t'\}\cup \left\{k^{(\ell-1)}_{1,i},\dots,k^{(\ell-1)}_{|T_{\ell-1}|,i}\right\}_{i=1}^n.$$
We then provide each of the $n$ wires with an auxiliary state for each term in $T_{\ell}$, for $\ell=1,\dots,L$. We now make this more precise.

To each $T_\ell$, we associate a family of strings $\{s^{(\ell)}(x)\}_{x\in\{0,1\}^{V_\ell}}$ in $\{0,1\}^{T_{\ell}}$, defined so that for every $f\in T_{\ell}$, the $f$-entry of $s^{(\ell)}(x)$ is $s^{(\ell)}_f(x)=f(x).$ That is, $s^{(\ell)}(x)$ represents evaluating every monomial in $T_{\ell}$ at $x$. For instance, we have, for any strings $a,b\in \{0,1\}^n$, $s^{(1)}(a,b)=(a_1,\dots,a_n,b_1,\dots,b_n)$.

For any strings $s,k\in \{0,1\}^n$, define \iftoggle{crypto}{$\sigma(s,k):=\bigotimes_{i=1}^n\ket{+_{s_i,k_i}}\bra{+_{s_i,k_i}}$.}{
$$\sigma(s,k):=\bigotimes_{i=1}^n\ket{+_{s_i,k_i}}\bra{+_{s_i,k_i}}.$$}

For any string $s$, let $s^{*n}$ denote the concatenation of $n$ copies of $s$. For any $a,b\in\{0,1\}^n$ and $k=(k^{(1)},\dots,k^{(L)})\in \{0,1\}^{n|T_1|}\times\dots\times\{0,1\}^{n|T_L|}$, define
$$\sigma_{aux}^{a,b,k}:=\sigma(s^{(1)}(a,b)^{*n},k^{(1)})\otimes \dots \otimes \sigma(s^{(L)}(a,b,k^{(1)},\dots,k^{(L-1)})^{*n},k^{(L)}).$$

\noindent We can now define the procedure $\AUX.\KeyGen(1^\kappa,1^n)$:
\begin{enumerate}
\item Execute $(pk,sk,evk)\leftarrow \HE.\KeyGen(1^{\kappa+n})$.
\item Choose uniform random $a,b\in\{0,1\}^n$ and $k=(k^{(1)},\dots,k^{(L)})\in \{0,1\}^{n|T_1|}\times\dots\times\{0,1\}^{n|T_L|}$.
\item Output secret key $(sk,a,b,k)$.
\item Output evaluation key: $pk$, $evk$, $\tilde{a}_1=\HE.\Enc_{pk}(a_1),\dots,\tilde{a}_n=\HE.\Enc_{pk}(a_n)$,\\ $\tilde{b}_1=\HE.\Enc_{pk}(b_1),\dots,\tilde{b}_n=\HE.\Enc_{pk}(b_n)$, $\(\tilde{k}^{(\ell)}_i=\HE.\Enc_{pk}\(k^{(\ell)}_{j,i}\)\)_{\substack{\ell\in[L]\\ i\in[n] \\ j\in[|T_\ell|]}}$, and $\sigma_{aux}^{a,b,k}$.
\end{enumerate}

\noindent\textbf{Encryption.} $\AUX.\Enc_{(sk,a,b,k),d}: D(\mathcal{M}) \rightarrow D(\mathcal{C})$. The encryption procedure takes an extra parameter~$d$ that keeps track of the number of qubits already encrypted (we assume $d$ is initially~$1$ and not modified outside of $\AUX.\Enc$). If $d \leq n $, \iftoggle{crypto}{}{for a single-qubit register $\mathcal{M}$, }it applies the quantum one-time pad channel $\QEnc_{a_d, b_d}: D(\mathcal{M}) \rightarrow D(\mathcal{C})$.
The output is the cipherstate in register $\mathcal{C}$; the parameter $d$ is updated as $d \leftarrow d+1$.
If $d >n $, then output $\bot$ to indicate an error.

\noindent\textbf{Decryption.} $\AUX.\Dec_{(sk,a,b,k),d}: D(\regC') \rightarrow D(\mathcal{M})$. The decryption is defined the same as $\CL.\Dec_{sk}$. 

\noindent\textbf{Homomorphic Evaluation.} $\AUX.\Eval^{\mathsf{C}}: D(\regR_{evk} \otimes \regC^{\otimes n}) \rightarrow  D(\regC'^{\otimes m})$.
For Clifford group gates, we apply the gadgets as in $\CL.\Eval$. For $\tgate$-gates, we apply the gadget in Fig.~\ref{fig:T-gate-AUX}. The full evaluation procedure is as follows:
\begin{enumerate}
\item Set $V\leftarrow \{a_i,b_i\}_{i\in [n]}$, and
$\forall\,i\in [n]$, \iftoggle{crypto}{}{set $f_{a,i},f_{b,i}\in\mathbb{F}_2[V]$ as} $f_{a,i}\leftarrow a_i$, $f_{b,i}\leftarrow b_i$.
\item Let $\mathsf{g}_1,\dots,\mathsf{g}_G$ be a topological ordering of the gates in $\mathsf{C}$. For $i=1,\dots,G$, evaluate $\mathsf{g}_i$ using the appropriate gadget.
\item Let $S$ be the set of output wire labels. For each $i\in S$:
  \begin{enumerate}
  \item Homomorphically evaluate $f_{a,i}$ and $f_{b,i}$ to obtain updated (encrypted) keys: $\tilde{a}_i\leftarrow \HE.\Eval_{evk}^{f_{a,i}}(\tilde{v}:v\in V)$ and $\tilde{b}_i\leftarrow \HE.\Eval_{evk}^{f_{b,i}}(\tilde{v}:v\in V)$.
  \end{enumerate}
\item Output in ${\cal C'}_i$ the classical-quantum system given by:
  \begin{itemize}
    \item The encrypted keys $\{\tilde{a}_i,\tilde{b}_i\}_{i\in S}$.
  \item The output\iftoggle{crypto}{}{ register}
                     corresponding to the encrypted output qubit $i$ of the circuit.

  \end{itemize}
\end{enumerate}
The correctness of this scheme depends on two facts, which we prove in \iftoggle{crypto}{\cite{BJ15}}{Sec.~\ref{sec:aux-analysis}}. First, for every unknown $v\in V$, we have an encrypted copy of $\tilde{v}$, encrypted using $\HE.\Enc$. We need these to compute the final keys $\{\tilde{a}_i,\tilde{b}_i\}$ using $f_{a,i},f_{b,i}\in\mathbb{F}_2[V]$. Finally, for each level $\ell$, for each wire label $i$, we need an auxiliary state $\ket{+_{t,k}}$ for every term that may appear in the key $f_{a,i}$ going into the \th{\ell} level. This allows us to construct the auxiliary qubit required to execute each $\tgate$-gadget.
\iftoggle{crypto}{In the full version \cite{BJ15}, we prove that $\AUX$ requires $O(n^{2^{L-1}+1})$ auxiliary qubits, from which it follows that $\AUX$ is homomorphic for quantum circuits with $\tgate$-depth $L$. We further show that $\AUX$ is q-IND-CPA and compact.}
{\vskip10pt}

We remark that if we only had a classical encryption scheme that was homomorphic over linear circuits, and not fully homomorphic, then we could get the same functionality from a slightly modified version of this scheme, in which we include with every auxiliary qubit $\ket{+_{s,k}}\bra{+_{s,k}}$, $\HE.\Enc_{pk}(s)$ --- at the moment we only include some of these, but not those auxiliary states arising from \emph{products} of terms, since we can compute products homomorphically. Since we have classical fully homomorphic encryption, we use this to slightly simplify the scheme, however the observation that the fully homomorphic property is not fully taken advantage of strengthens the idea that Clifford circuits are analogous to classical linear circuits in the context of QHE.

\iftoggle{crypto}{
}{
\subsection{Analysis of $\AUX$}\label{sec:aux-analysis}

We now analyse the various properties of $\AUX$. Consider a layered quantum circuit $\mathsf{C}$ with $L$ layers of $\tgate$-gates.
To simplify the analysis, we assume that the ordering of gates $\mathsf{g}_1,\dots,\mathsf{g}_G$ has the property that if $\mathsf{g}_i$ is a $\tgate$-gate in level $\ell$, and $\mathsf{g}_j$ is a $\tgate$-gate in level $\ell+1$, then $i<j$; that is, we completely evaluate level $\ell$ before we begin to evaluate level $\ell+1$.

\begin{lemma}\label{lem:terms}
Let $f_{a,i}$ be a key-polynomial going into the \th{\ell} layer of $\tgate$-gates. Then $f_{a,i}$ is a sum of terms in $T_\ell$.
\end{lemma}
\iftoggle{crypto}{}{
\begin{proof}
We prove this statement by induction on $\ell$. Before any gates have been applied, the key-polynomial are $f_{a,i}=a_i$ and $f_{b,i}=b_i$ for $i=1,\dots,n$. We can easily see from the update rules that applying Clifford gates results in keys of the form $f$ or $f+f'$, where $f$ and $f'$ were previous keys. Thus, after a Clifford circuit has been applied, all key-polynomial are sums of terms from $\{a_1,\dots,a_n,b_1,\dots,b_n\}=T_1$.

Let $f_{a,1},\dots,f_{a,n},f_{b,1},\dots,f_{b,n}$ be the key-polynomials going into the \th{\ell} layer, and suppose they are sums of terms in $T_{\ell}$. Let $f'_{a,1},\dots,f'_{a,n},f'_{b,1},\dots,f'_{b,n}$ be the key-polynomials right after the \th{\ell} layer of $\tgate$-gates has been applied. If no $\tgate$ is applied on the \th{i} wire, then $f_{a,i}'=f_{a,i}$ and $f_{b,i}'=f_{b,i}$, so $f_{a,i}',f_{b,i}'$ are both sums of terms in $T_{\ell}\subset T_{\ell+1}$. Suppose on the other hand that we apply a $\tgate$-gate to the \th{i} wire at level $\ell$.
From the $\tgate$-gadget (Fig.~\ref{fig:T-gate-AUX}), we  see that after applying a $\tgate$ to the \th{i} wire, we have new keys $f_{a,i}'=f_{a,i}\oplus c$ for a known constant $c$, so $f_{a,i}'$ is a sum of terms in $T_\ell\subset T_{\ell+1}$; and $f_{b,i}'=(1\oplus c)f_{a,i}\oplus f_{b,i}\oplus k$, where $k$ is the auxiliary state key of the auxiliary state used to implement the gadget. If $f_{a,i}=t_1\oplus \dots\oplus t_r$, for $t_1,\dots,t_r\in T_\ell$, then we construct $\ket{+_{f_{a,i},k}}$ from auxiliary states $\ket{+_{t_1,k_1}},\dots,\ket{+_{t_r,k_r}}$ for some $k_1,\dots,k_r\in\{k^{(\ell)}_{q,i}\}_{q=1}^{|T_{\ell}|}\subset T_{\ell+1}$, so we have
$k=\bigoplus_{j=1}^rk_j\oplus \bigoplus_{j=2}^rc_jt_j\oplus \bigoplus_{j=1}^r\bigoplus_{j'=1}^{j-1}t_jt_{j'}$ for known $c_2,\dots,c_r$, which is the sum of terms in $T_{\ell+1}$, since $t_1,\dots,t_r\in T_{\ell}$. Thus, $f_{b,i}'$ is the sum of terms in $T_{\ell+1}$.

Thus, after applying the \th{\ell} layer of $\tgate$-gates, all key-polynomials are sums of terms from $T_{\ell+1}$. To complete the proof, we simply observe again that Clifford circuits act additively on the keys, and so do not introduce new terms, so just before the \th{(\ell+1)} layer of $\tgate$-gates, the key-polynomials are still sums of terms in $T_{\ell+1}$.
\end{proof}}

\noindent The bottleneck in this scheme is the number of auxiliary states required:
\begin{lemma}
The number of auxiliary qubits output by $\AUX.\KeyGen(1^\kappa,1^n)$ grows as $O(n^{2^{L-1}+1})$ in $n$.
\end{lemma}
\iftoggle{crypto}{}{
\begin{proof}
The number of qubits encoded in $\sigma_{aux}^{a,b,k}$ is
$$|k^{(1)}|+|k^{(2)}|+\dots+|k^{(L)}|=n|T_1|+n|T_2|+\dots+n|T_L|=n\sum_{\ell=1}^{L}|T_\ell|.$$
From the definition of $T_{\ell}$, we  see that:
$$|T_1|=2n,\quad\mbox{and for $\ell>1$,}\quad |T_\ell|=|T_{\ell-1}|+\binom{|T_{\ell-1}|}{2}+n|T_{\ell-1}|.$$
So certainly for all $\ell>1$, $|T_{\ell}|\leq c|T_{\ell-1}|^2$ for some constant $c$, and thus $|T_{\ell}|\leq c^{\ell-1}(2n)^{2^{\ell-1}}\in O(n^{2^{\ell-1}})$. Thus $n\sum_{\ell=1}^L|T_{\ell}|\in O(n^{2^{L-1}+1})$.
\end{proof}}
\noindent We thus have the following theorem:
\begin{theorem}
Let $\classS_n$ be the class of all quantum circuits on $n$ wires with $\tgate$-depth at most~$L$, and let $\classS=\{\classS_n\}_{n\in\mathbb{N}}$. Then $\AUX$ is $\classS$-homomorphic and compact.
\end{theorem}

\noindent We now consider the security of the scheme.

\begin{theorem}
\label{thm:security:AUX}
If $\HE$ is q-IND-CPA secure, then $\AUX$ is q-IND-CPA secure.
\end{theorem}

We will prove Thm.~\ref{thm:security:AUX} in several parts. To begin, we will show that an adversary that interacts with $\AUX.\KeyGen$ can't do much better than an adversary that interacts instead with an altered version of $\AUX.\KeyGen$, $\KeyGen'$, in which every classical encryption has been replaced with $\HE.\Enc_{pk}(0)$ (Lemma \ref{lem:aux-security-1}). Then we will be able to complete the proof by showing that an adversary interacting with $\KeyGen'$ instead of $\AUX.\KeyGen$ can't win the q-IND-CPA experiment for $\AUX$ with probability better than $\frac{1}{2}$.

\begin{lemma}\label{lem:aux-security-1}
Define a QHE scheme $\AUX'$ such that $\AUX'.\KeyGen(1^{\kappa},1^n)=\KeyGen'(1^\kappa,1^n)$, where $\KeyGen'$ behaves identically to $\AUX.\KeyGen$, except it replaces every classical encryption $\HE.\Enc_{pk}(x)$ with $\HE.\Enc_{pk}(0)$. Let $\AUX'.\Enc=\AUX.\Enc$, $\AUX'.\Dec=\AUX.\Dec$ and $\AUX'.\Eval=\AUX.\Eval$. Then for any quantum polynomial-time adversary $\advA=(\advA_1,\advA_2)$ with encryption oracle access, there exists a negligible function $\eta$ such that:
$$\Pr[\mathsf{SymK_{\advA,\AUX}^{cpa}}(\kappa)=1]-\Pr[\mathsf{SymK_{\advA,\AUX'}^{cpa}}(\kappa)=1]\leq \eta(\kappa).$$
Thus, we can restrict our attention to adversaries that make no use of the classical encryptions, since they add at most a negligible advantage.
\end{lemma}
\iftoggle{crypto}{}{
\begin{proof}
We will define an adversary $\advA'=(\advA_1',\advA_2')$ for the quantum CPA-mult indistinguishability experiment for $\HE$, $\mathsf{PubK_{\advA',\HE}^{cpa\text{-}mult}}(\kappa)$.
Essentially, $\advA'$ will run $\AUX.\KeyGen$, except it will use the challenger $\Xi_{\HE}^{\textsf{cpa-mult}}$ in place of $\HE.\Enc_{pk}$, so that it is either running $\AUX.\KeyGen$ or $\KeyGen'$. It will then simulate the $\sf SymK$ experiment, and if $\advA$ wins, it will guess that it ran the original version of $\AUX.\KeyGen$, and otherwise it will guess that it ran $\KeyGen'$.

\begin{description}
\item[$\advA_1'(pk,evk)$:] $\advA_1'$ chooses uniform random bit strings $a,b\in\{0,1\}^n$ and $k\in\{0,1\}^N$, where $N=n|T_1|+\dots+n|T_L|$, and gives $m_0=\mathbf{0}=0^{2n+N}$ and $m_1=(a,b,k)$ to the challenger $\Xi_{\HE}^{\textsf{cpa-mult}}$, which outputs either $c_1=\HE.\Enc_{pk}(a,b,k)$, or $c_0=\HE.\Enc_{pk}(\mathbf{0})$.
\item[$\advA_2'(c)$:] $\advA_2'$ computes $\sigma_{aux}^{a,b,k}$ and gives $\sigma_{aux}^{a,b,k},pk,evk,c$ to $\advA_1$. $\advA_1$ may make several oracle calls, which $\advA_2'$ can simulate, because it has $a,b$ and so can run $\AUX.\Enc$. When $\advA_1$ outputs a message to the challenger, $\advA_2'$ samples a random bit $r$, and runs $\Xi_{\AUX}^{\mathsf{cpa},r}$, which it can simulate, since it has $a,b$, and so can run $\AUX.\Enc$. $\advA_2'$ then gives the challenge to $\advA_2$, and if $\advA_2$ outputs $r$, $\advA_2'$ outputs $1$, and otherwise, $\advA_2'$ outputs $0$.
\end{description}
We now calculate the probability that $\advA'$ correctly guesses which of $c_0$ and $c_1$ it received from the challenger, which we know must be less than $\frac{1}{2}+\eta(\kappa+n)$ for some negligible function, since $\HE$ is q-IND-CPA, $\kappa+n$ is the security parameter given to $\HE.\KeyGen$, and $|m_0|=|m_1|=2n+N = O(\mathrm{poly}(n))=O(\mathrm{poly}(n+\kappa))$. If $\advA'$ received $c_0$, then it acted as $\KeyGen'$, whereas if it received $c_1$, it acted as $\AUX.\KeyGen$. In the former case, the probability that $\advA'$ correctly guesses 0 is the probability that $\advA$ loses the $\sf SymK$ experiment when it interacts with $\AUX'$, $\Pr[\mathsf{SymK_{\advA,\AUX'}^{cpa}}(\kappa)=0]$. In the latter case, the probability that $\advA'$ correctly guesses 1 is the probability that $\advA$ wins the $\sf SymK$ experiment when it interacts with $\AUX$, $\Pr[\mathsf{SymK_{\advA,\AUX}^{cpa}}(\kappa)=1]$. Thus, since $\HE$ is q-IND-CPA, there exists a negligible function $\eta'$ such that
\begin{align*}
\frac{1}{2}\Pr[\mathsf{SymK_{\advA,\AUX'}^{cpa}}(\kappa)=0]+\frac{1}{2}\Pr[\mathsf{SymK_{\advA,\AUX}^{cpa}}(\kappa)=1] & \leq \frac{1}{2}+\eta'(\kappa)\\
1-\Pr[\mathsf{SymK_{\advA,\AUX'}^{cpa}}(\kappa)=1]+\Pr[\mathsf{SymK_{\advA,\AUX}^{cpa}}(\kappa)=1] &\leq 1+2\eta'(\kappa)\\
\Pr[\mathsf{SymK_{\advA,\AUX}^{cpa}}(\kappa)=1]-\Pr[\mathsf{SymK_{\advA,\AUX'}^{cpa}}(\kappa)=1] &\leq 2\eta'(\kappa).
\end{align*}
Setting $\eta=2\eta'$ completes the proof.
\end{proof}
}

The next lemma shows that the output of $\KeyGen'$ is actually $(pk,evk,\maxmix)$, which is independent of $a,b,k$. The proof of Lemma \ref{lem:aux-security-2} is mainly computational, and provides little insight, so we relegate it to App.~\ref{app:misc}.

\begin{lemma}\label{lem:aux-security-2}
Let $N=n|T_1|+\dots+n|T_L|$. For any $a,b\in\{0,1\}^{n}$, \iftoggle{crypto}{$\sum_{k\in\{0,1\}^{N}}\sigma_{aux}^{a,b,k}=\frac{1}{2^{N}}\mathbb{I}_{2^{N}}$.}{
$\sum_{k\in\{0,1\}^{N}}\sigma_{aux}^{a,b,k}=\frac{1}{2^{N}}\mathbb{I}_{2^{N}}.$}
\end{lemma}

To complete the proof of Thm.~\ref{thm:security:AUX}, we show that no adversary interacting with $\AUX.\KeyGen'$ can win the experiment $\sf SymK^{cpa}$ with probability better than $\frac{1}{2}$.

\begin{lemma}\label{lem:aux-security-3}
For any adversary $\advA=(\advA_1,\advA_2)$ with access to an encryption oracle,
$$\Pr[\mathsf{SymK_{\advA,\AUX'}^{cpa}}(\kappa)=1]= \frac{1}{2}.$$
\end{lemma}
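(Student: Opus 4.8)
The goal is to show that when the adversary interacts with $\AUX'$ — whose key generation produces $(pk, evk, \maxmix)$, with every classical encryption replaced by an encryption of $0$ and the auxiliary register being completely mixed by Lemma~\ref{lem:aux-security-2} — the adversary's view is genuinely independent of the challenge bit $r$, so it cannot win with probability exceeding $\frac{1}{2}$. The plan is to trace through the experiment $\mathsf{SymK_{\advA,\AUX'}^{cpa}}(\kappa)$ step by step and argue that the joint state held by $\advA_2$ at the point it must output $r'$ is the same density operator regardless of whether $r=0$ or $r=1$.

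First I would unpack what the evaluation key of $\AUX'$ actually is: by construction $\KeyGen'$ outputs $pk$, $evk$, a list of classical ciphertexts all equal to $\HE.\Enc_{pk}(0)$, and the auxiliary state $\sigma_{aux}^{a,b,k}$ where $(a,b,k)$ are uniformly random. Since $\advA_1$ (and the oracle $\AUX'.\Enc = \AUX.\Enc$, and the challenge operator $\Xi_{\AUX'}^{{\sf cpa},r} = \Xi_{\AUX}^{{\sf cpa},r}$) all depend on the secret key $(sk,a,b,k)$ only through the quantum one-time pad keys $a_d, b_d$ applied to the message registers, I would average over the randomness of $(a,b,k)$. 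Averaging $\sigma_{aux}^{a,b,k}$ over $k$ first (for fixed $a,b$) gives $\frac{1}{2^N}\mathbb I_{2^N}$ by Lemma~\ref{lem:aux-security-2}, which is now independent of $(a,b)$; so the evaluation key handed to $\advA_1$ is exactly $(pk, evk, \maxmix)$ with $(a,b)$ still uniform but now decoupled from the auxiliary register.

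Next I would observe that the entire interaction — the oracle calls made by $\advA_1$ and $\advA_2$, and the challenge — consists of applying quantum one-time pads $\QEnc_{a_d,b_d}$ to whatever message states the adversary submits, using \emph{fresh} indices $d$ each time (the bounded scheme increments $d$, so no pad key is reused). Since each $(a_d,b_d)$ is uniform and independent, each such $\QEnc_{a_d,b_d}$ applied to any input produces the completely mixed state $\frac{\id_2}{2}$, by the one-time-pad property recalled in Section~\ref{sec:prelim-QOTP}. In particular, the challenge ciphertext $\Xi_{\AUX'}^{{\sf cpa},r}(\rho)$ is $\frac{\id_2}{2}$ whether $r=0$ (encrypting $\egoketbra{\mathbf 0}$) or $r=1$ (encrypting the adversary's state $\rho$) — this is exactly where the dependence on $r$ vanishes. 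So $\advA_2$ receives a state that is, as a density operator, identical in the two cases, and $\Pr[r'=r \mid r=0] = \Pr[r'=r \mid r=1]$; averaging over the uniform $r$ gives $\Pr[\mathsf{SymK_{\advA,\AUX'}^{cpa}}(\kappa)=1] = \frac{1}{2}$.

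The one subtlety I would be careful about — and the step I expect to need the most care — is bookkeeping the counter $d$ and confirming that \emph{every} one-time pad key used anywhere in the experiment (learning-phase oracle calls for $\advA_1$, the challenge, post-challenge oracle calls for $\advA_2$) has an independent uniform $(a_d,b_d)$, and that the classical encryptions in the evaluation key, now all being $\HE.\Enc_{pk}(0)$, carry no residual information about any $a_d,b_d$. Once that is in place, the argument is essentially a composition of "one-time pad output is maximally mixed" and "a maximally mixed register is independent of everything," applied at each use, and the conclusion follows. This mirrors the structure of the final step in the proof of Theorem~\ref{thm:security:Clifford} (Figures~\ref{fig:cl-security-s0}--\ref{fig:cl-security-last}), where replacing the real classical ciphertexts with encryptions of $0$ makes the experiment independent of the challenge bit.
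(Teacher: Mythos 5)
Your proposal is correct and follows essentially the same route as the paper's proof: invoke Lemma \ref{lem:aux-security-2} to make the evaluation key $(pk,evk,\maxmix)$ independent of $(a,b)$, then replace every oracle call and the challenge itself by a channel outputting the completely mixed state, since each uses a fresh, uniform, unreused one-time-pad key. The counter bookkeeping you flag as the delicate step is exactly what the paper dispatches first (if $q\geq n$ the challenger outputs $\bot$ independently of $r$, and superfluous post-challenge oracle calls returning $\bot$ can be simulated, so one may assume $q+q'+1\leq n$).
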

\iftoggle{crypto}{}{
\begin{proof}
Let $q$ be the number of oracle calls made by $\advA_1$, and write $\advA_1=(\advA_1^{(1)},\dots,\advA_1^{(q+1)})$. Let $q'$ be the number of oracle calls made by $\advA_2$, and write $\advA_2=(\advA_2^{(1)},\dots,\advA_2^{(q'+1)})$. If $q\geq n$, then the challenger just outputs $\bot$, independent of $r$, so certainly in that case $\advA$ cannot win with probability more than $\frac{1}{2}$, so suppose $q<n$. If $q+q'+1>n$, then the last $q+q'+1-n$ oracle calls made by $\advA_2$ simply return $\bot$, which $\advA$ could simulate without actually making these oracle calls, so suppose without loss of generality that $q+q'+1\leq n$.

The output of $\AUX'.\KeyGen=\KeyGen'$ to $\advA_1^{(1)}$ is $(pk,evk,\sigma_{aux}^{a,b,k})$, and
by Lemma \ref{lem:aux-security-2}, for any $a,b$, this is equal to $(pk,evk,\frac{1}{2^{N}}\mathbb{I}_{2^{N}})$. Thus, the interaction of $\KeyGen'$ with the experiment is shown in part (a) of Fig.~\ref{fig:aux-security}. $\KeyGen'$ chooses random bits $a_1,b_1,\dots,a_{q+q'+1},b_{q+q'+1}$, for use in oracle calls and the challenge itself, but these are independent of the information given to $\advA$ by $\KeyGen'$. (The other random bits selected by $\KeyGen'$, $a_{q+q'+2},b_{q+q'+2},\dots,a_n,b_n$ and the string $k$, are independent of the interaction with the adversary, so we ignore them.)

It is then easy to see from Fig.~\ref{fig:aux-security} that every call to the encryption oracle can be replaced by a channel that discards the input and returns a completely mixed state, since for any input $\rho^{\mathcal{M}}$, the encryption oracle returns
$$\Tr_1\(\frac{1}{4}\sum_{a,b\in\{0,1\}}\ket{a,b}\bra{a,b}_1\otimes\xgate^{a}\zgate^{b}\rho^{\cal M}\zgate^{b}\xgate^{a}\)=\frac{1}{4}\sum_{a,b\in\{0,1\}}\xgate^{a}\zgate^{b}\rho^{\cal M}\zgate^{b}\xgate^{a}=\frac{1}{2}\mathbb{I}_{2}.$$
\newpage
In other words, we have:
\begin{figure}[H]
\centering
\begin{tikzpicture}
\node at (0,0){
\begin{tikzpicture}
\draw (0,.79)--(.75,.79);
\draw (0,.71)--(.75,.71);
\draw (0,0)--(2.75,0);

\filldraw[fill=white] (-.5,.5) rectangle (0,1);
\node at (-.25,.75) {$\maxmix$};

\node at (.4,.95) {$a,b$};
\node at (.5,.2) {\small $\cal M$};

\filldraw[fill=white] (.75,-.25) rectangle (2,1);
\node at (1.375,0) {\small $\QEnc_{a,b}$};

\node at (2.25,.2) {\small $\cal C$};
\end{tikzpicture}};
\node at (2.5,0) {$\equiv$};
\node at (5,0) {
\begin{tikzpicture}
\draw (0,0) -- (1,0); \draw (2,0) -- (2.75,0);

\node at (.4,.2) {\small $\cal M$};

\node at (1,0) {\meas};

\filldraw[fill=white] (1.7,-.25) rectangle (2.2,.25);
\node at (1.95,0) {$\maxmix$};

\node at (2.45,.2) {\small $\cal C$};
\end{tikzpicture}
};
\end{tikzpicture}
\end{figure}
\noindent Here $\maxmix$ denotes the channel that outputs a completely  mixed state, or equivalently, a uniform random variable.

For the same reason, the call to the challenger $\Xi_{\AUX}^{{\sf cpa},r}$ can also be replaced with the channel that discards the input and returns $\maxmix$, since $\Xi_{\AUX}^{{\sf cpa},r}$ applies a quantum one-time pad using random keys $a_{q+1},b_{q+1}$ to the input or to $\ket{0}\bra{0}$, and in either case, the resulting state is the completely mixed state. Thus, from the perspective of $\advA$, the experiment is independent of $r$, as shown in part (b) of Fig.~\ref{fig:aux-security}. Thus, an adversary cannot win with probability better than $\frac{1}{2}$.
\end{proof}
}

\begin{figure}
\centering
\begin{tikzpicture}
\node at (0,0) {
\begin{tikzpicture}
\draw (1,4.25)--(13,4.25);
\draw (1,3.25)--(13,3.25);
\draw (1,2.5)--(10.5,2.5);
\draw (1,1.5)--(3.5,1.5);	
\draw (0,.75)--(13,.75);
\draw (1,0) -- (13,0);

\filldraw[fill=white] (-.5,-.25) rectangle (0,2.75);
\node[rotate=90] at (-.25,1.25) {$\HE.\KeyGen(1^{\kappa+n})$};
\node at (.75,.95) {$pk,evk$};

\filldraw[fill=white] (.8,4) rectangle (1.3,4.5);
\node at (1.05,4.25) {$\maxmix$};
\node at (1.05,3.85) {$\vdots$};
\filldraw[fill=white] (.8,3) rectangle (1.3,3.5);
\node at (1.05,3.25) {$\maxmix$};
\filldraw[fill=white] (.8,2.25) rectangle (1.3,2.75);
\node at (1.05,2.5) {$\maxmix$};
\node at (1.05,2.1) {$\vdots$};
\filldraw[fill=white] (.8,1.25) rectangle (1.3,1.75);
\node at (1.05,1.5) {$\maxmix$};

\filldraw[fill=white] (.8,-.25) rectangle (1.3,.25);
\node at (1.05,0) {$\maxmix$};

\filldraw[fill=white] (1.8,-.25) rectangle (2.7,1);
\node at (2.25,.375) {$\advA_1^{(1)}$};

\node at (2.8,4.45) {\small $a_{q+q'+1},b_{q+q'+1}$};
\node at (2.35,3.45) {\small $a_{q+2},b_{q+2}$};
\node at (2.35,2.7) {\small $a_{q+1},b_{q+1}$};
\node at (2,1.7) {\small $a_1,b_1$};
\node at (2.95,.95) {\small $\cal M$};
\node at (2.95,.2) {\small $\cal E$};

\filldraw[fill=white] (3.25,.5) rectangle (4.75,1.75);
\node at (4,.75) {\small$\QEnc_{a_1,b_1}$};

\node at (6,2) {\small (encryption oracle)};
\draw[dashed,->] (6,1.8) -- (4.75,1.25);

\node at (4.95,.95) {\small $\cal C$};

\filldraw[fill=white] (5.3,-.25) rectangle (6.2,1);
\node at (5.75,.375) {$\advA_1^{(2)}$};

\node at (6.45,.95) {\small $\cal M$};
\node at (6.45,.2) {\small $\cal E$};

\fill[fill=white] (6.9,-.25) rectangle (7.85,1.75);
\node at (7.375,.75) {$\dots$};

\node at (8.325,.95) {\small $\cal C$};
\node at (8.325,.2) {\small $\cal E$};

\filldraw[fill=white] (8.55,-.25) rectangle (9.7,1);
\node at (9.125,.375) {$\advA_1^{(q+1)}$};

\node at (9.95,.95) {\small $\cal M$};
\node at (9.95,.2) {\small $\cal E$};

\filldraw[fill=white] (10.4,.5) rectangle (11.6,2.75);
\node at (11,1.625) {$\Xi_{\AUX}^{{\sf cpa},r}$};

\node at (11.85,.95) {\small $\cal C$};

\filldraw[fill=white] (12.3,1.25) rectangle (13.2,4.5);
\node[rotate=90] at (12.75,2.875) {$\AUX.\Enc$};
\draw[->] (12.4,1) -- (12.4,1.25);
\draw[->] (13.1,1.25) -- (13.1,1);
\filldraw[fill=white] (12.3,-.25) rectangle (13.2,1);
\node at (12.75,.375) {$\advA_2$};

\draw (13.2,.415)--(13.7,.415);
\draw (13.2,.335)--(13.7,.335);
\node at (13.95,.375) {$r'$};

\draw[dashed] (-.75, -.5) rectangle (1.5,5.25);
\node at (.375,5) {$\KeyGen'$};

\end{tikzpicture}
};

\node at (0,-3.25) {(a)};

\node at (0,-5){
\begin{tikzpicture}
\draw (0,.75)--(13,.75);
\draw (1,0) -- (13,0);

\filldraw[fill=white] (-.5,-.25) rectangle (0,2.5);
\node[rotate=90] at (-.25,1.125) {\small$\HE.\KeyGen(1^{\kappa+n})$};
\node at (.75,.95) {$pk,evk$};

\filldraw[fill=white] (.8,-.25) rectangle (1.3,.25);
\node at (1.05,0) {$\maxmix$};

\filldraw[fill=white] (1.8,-.25) rectangle (2.7,1);
\node at (2.25,.375) {$\advA_1^{(1)}$};

\node at (2.95,.95) {\small $\cal M$};
\node at (2.95,.2) {\small $\cal E$};

\fill[fill=white] (3.4,.5) rectangle (4.6,1.75);
\node at (3.6,.75) {\meas};
\filldraw[fill=white] (4.25,.5) rectangle (4.75,1);
\node at (4.5,.75) {$\maxmix$};

\node at (5,.95) {\small $\cal C$};

\filldraw[fill=white] (5.3,-.25) rectangle (6.2,1);
\node at (5.75,.375) {$\advA_1^{(2)}$};

\node at (6.45,.95) {\small $\cal M$};
\node at (6.45,.2) {\small $\cal E$};

\fill[fill=white] (6.9,-.25) rectangle (7.85,1.75);
\node at (7.375,.75) {$\dots$};

\node at (8.325,.95) {\small $\cal C$};
\node at (8.325,.2) {\small $\cal E$};

\filldraw[fill=white] (8.55,-.25) rectangle (9.7,1);
\node at (9.125,.375) {$\advA_1^{(q+1)}$};

\node at (9.95,.95) {\small $\cal M$};
\node at (9.95,.2) {\small $\cal E$};

\fill[fill=white] (10.4,.5) rectangle (11.6,1.75);
\node at (10.6,.75) {\meas};
\filldraw[fill=white] (11.25,.5) rectangle (11.75,1);
\node at (11.5,.75) {$\maxmix$};

\node at (12,.95) {\small $\cal C$};

\node at (12.3,1.5) {\meas};
\filldraw[fill=white] (12.85,1.25)rectangle(13.35,1.75);
\node at (13.1,1.5) {$\maxmix$};
\draw[->] (12.4,1)--(12.4,1.25);
\draw[->] (13.1,1.25)--(13.1,1);
\filldraw[fill=white] (12.3,-.25) rectangle (13.2,1);
\node at (12.75,.375) {$\advA_2$};

\draw (13.2,.415)--(13.7,.415);
\draw (13.2,.335)--(13.7,.335);
\node at (13.95,.375) {$r'$};

\end{tikzpicture}
};
\node at (0,-6.75) {(b)};
\end{tikzpicture}

\caption{Proof of Lemma \ref{lem:aux-security-3}. Part (a) shows how $\KeyGen'$ interacts with the experiment. The channel $\maxmix$ outputs a completely  mixed state, or equivalently, a uniform random variable. Since the random bits $a_i,b_i$ are independent of the other outputs of $\KeyGen'$, for each $i$, we can replace each of the oracle calls as well as the challenger with a channel that discards the input and returns a completely mixed state, as shown in part (b). Thus, the experiment is independent of $r$ from the perspective of $\advA$, and so $\advA$ can do no better than guessing $r$. }\label{fig:aux-security}
\end{figure}

\noindent Combining Lemma \ref{lem:aux-security-1} and Lemma \ref{lem:aux-security-3} proves Thm.~\ref{thm:security:AUX} immediately.
}

\iftoggle{crypto}{}{
\section{Conclusions and Open Problems}

In this work, we have presented three quantum homomorphic encryption schemes. The first, $\CL$, is a stepping stone to the other two, and is homomorphic and compact for the class of stabilizer circuits. The second, $\EPR$, is homomorphic for all quantum circuits, but the compactness property degrades with the number of $\tgate$-gates. In the third scheme, $\AUX$, the complexity of the evaluation key and the evaluation procedure scale doubly exponentially with the $\tgate$-depth, so that it is only homomorphic for circuits with constant $\tgate$-depth, but it is also compact.

The clear central open problem in this work is to come up with a quantum fully homomorphic encryption scheme satisfying Def.~\ref{defn:QFHE}, which must be homomorphic for all quantum circuits and compact. Our schemes $\EPR$ and $\AUX$ make progress towards this goal from two directions, but still leave open a full solution to this problem.

Our work can be seen as analogous to a number of classical results leading up to fully homomorphic encryption, including classical encryption schemes that were homomorphic for some limited classes of circuits, including limits in the multiplicative depth, as well as quasi-compact homomorphic schemes. In addition, we have attempted, in our security definitions and the theorems in App.~\ref{sec:appendix-equivalent-q-IND-CPA}, to set the groundwork for a rigorous treatment of quantum homomorphic encryption, hopefully leading, eventually to quantum fully homomorphic encryption.
}

\iftoggle{crypto}{}{
\section*{Acknowledgements}
\addcontentsline{toc}{section}{Acknowledgements}

The authors would like to thank Fang Song for helpful discussions about this paper, in particular regarding security definitions.

A.\,B.\ acknowledges support from the Canadian Institute for Advanced Research (\textsc{Cifar}); A.\,B.\ and S.\,J.\ acknowledge support from the Natural Sciences and Engineering Research Council of Canada (\textsc{Nserc}).
Part of this work was done while the authors were visitors at the Simons Institute for the Theory of Computing.
}

\addcontentsline{toc}{section}{References}
\iftoggle{crypto}{\bibliographystyle{plain}}
{\bibliographystyle{alpha}}
\small
\bibliography{references}

\iftoggle{crypto}{\end{document}}{}

\appendix

\section{Classical Fully Homomorphic Encryption}
\label{appendix:Classical-FHE}
Here we present the definitions from the full version of \cite{BV11}.

\begin{definition}A homomorphic encryption scheme is a $4$-tuple $(\sf{HE.KeyGen},\sf{HE.Enc},\sf{HE.Dec},\sf{HE.Eval})$ of PPT algorithms such that:

\begin{description}
\item[{Key Generation.}] The algorithm $(pk, evk, sk) \leftarrow \mathsf{HE.KeyGen}(1^\kappa)$ takes a unary representation of the security parameter and outputs a public encryption key $pk$, a public evaluation key $evk$ and a secret decryption key $sk$.

\item[{Encryption.}] The algorithm $c \leftarrow \mathsf{HE.Enc}_{pk}(\mu)$ takes the public key $pk$ and a single bit message $\mu \in \{0,1\}$ and outputs a ciphertext~$c$.

\item[{Decryption.}] The algorithm $\mu^* \leftarrow \mathsf{HE.Dec}_{sk}(c)$ takes the secret key $sk$ and a ciphertext $c$ and outputs a message $\mu^* \in \{0,1\}$.

 \item[{Homomorphic Evaluation.}] The algorithm $c_f \leftarrow \mathsf{HE.Eval}_{evk}^{f}(c_1, \ldots, c_\ell)$ takes the evaluation key $evk$, a classical circuit $f: \{0,1\}^\ell \rightarrow \{0,1\}$ and a set of $\ell$ ciphertexts  $c_1, \ldots, c_\ell$ and outputs a ciphertext $c_f$.

\end{description}

\end{definition}

We  define $\mathscr{S}$-homomorphic, which is homomorphism with respect to a specified class
$\mathscr{S}$ of circuits. This notion is sometimes also referred to as ``somewhat homomorphic''.

\begin{definition}[$\mathscr{S}$-homomorphic] Let $\mathscr{S} = \{\classS_\kappa\}_{\kappa \in \mathbb{N}}$ be a class of classical circuits. A scheme $\HE$ is $\mathscr{S}$-homomorphic (or, homomorphic for the class $\mathscr{S}$) if
for any sequence of circuits $\{f_\kappa \in \mathscr{S}_\kappa\}_{\kappa\in \mathbb{N}}$ and respective inputs $\mu_1, \ldots , \mu_\ell \in \{0,1\}$ (where $\ell = \ell(\kappa))$, there exists a negligible function $\eta$ such that
\begin{equation}
\mathrm{Pr}[\mathsf{HE.Dec}_{sk}(\mathsf{HE.Eval}_{evk}^{f}(c_1, \ldots ,c_\ell)) \neq  f(\mu_1, \ldots, \mu_\ell)] = \eta(\kappa)\,,
\end{equation}
where $(pk, evk, sk) \leftarrow \mathsf{HE.Keygen}(1^\kappa)$ and $c_i \leftarrow \mathsf{HE.Enc_{pk}}(\mu_i)$.
\end{definition}

\begin{definition}[compactness] A homomorphic scheme $\mathsf{HE}$ is \emph{compact} if there exists a polynomial
$p$ such that the circuit complexity of decrypting the output of $\mathsf{HE.Eval}^f(\cdots)$ is at most $p(\kappa)$ (regardless of $f$).
\end{definition}

\begin{definition}[fully homomorphic encryption] A scheme $\mathsf{HE}$ is fully homomorphic if it is both
compact and homomorphic for the class of all arithmetic circuits over~$\mathbb{F}_2$.
\end{definition}

\iftoggle{crypto}{
\section{Quantum Semantic Security}}
{\section{Equivalence of Definitions for q-IND-CPA}}
\label{sec:appendix-equivalent-q-IND-CPA}

\iftoggle{crypto}{
In this section we give more details about our notion of quantum semantic security, q-IND-CPA. Specifically, we show that it is equivalent to a seemingly stronger security definition involving multiple-message attacks, q-IND-CPA-mult, and give analogous definitions for the case of symmetric-key encryption.

\paragraph{CPA-mult security} The CPA-mult indistinguishability experiment is similar to the CPA scenario above, but in this case the adversary chooses two $t$-tuples of messages, for any $t\geq 1$, and the challenger returns encryptions corresponding to one of the $t$-tuples. The adversary's task is then to guess which of the two $t$-tuples of messages has been encrypted. The experiment is given below; the illustration follows closely the one in Fig.~\ref{fig:q-IND-CPA-2} (but with single messages replaced by $t$-fold messages).

\noindent\textbf{The quantum CPA-mult indistinguishability experiment $\mathsf{PubK^{cpa\text{-}mult}_{\advA, QHE}} (\kappa)$}
\begin{enumerate}
\item $\mathsf{KeyGen}(1^\kappa)$ is run to obtain keys $(pk,sk,\rho_{evk})$.
\item For $r \in \{0,1\}$, and $t\in O(\mathrm{poly}(\kappa))$, let $\mathcal{M}_r =\mathcal{M}_r^1\otimes  \cdots \otimes \mathcal{M}_r^t$, where  $\mathcal{M}_0^i \equiv\mathcal{M}_1^i\equiv\mathcal{M}$ (for all~$i$).
 Adversary $\advA_1$ is given $(pk,\rho_{evk})$ and outputs a quantum state $\rho$ in $\mathcal{M}_0 \otimes \mathcal{M}_1  \otimes \mathcal{E}$.
\item For $r\in \{0,1\}$, let   $\Xi^{{\sf cpa\text{-}mult},r}_{\QHE}: D(\mathcal{M}_0 \otimes \mathcal{M}_1) \rightarrow D(\mathcal{C}^1 \otimes \cdots \otimes \mathcal{C}^t)$ be given by
$\Xi_{\QHE}^{{\sf cpa\text{-}mult},0}(\rho)=  \Tr_{\mathcal{M}_1}({\sf Enc}_{pk}^{\otimes t} \otimes \id_{\mathcal{M}_1})(\rho)$ and $\Xi_{\QHE}^{\textsf{cpa-mult},1}(\rho)=  \Tr_{\mathcal{M}_0}(\id_{\mathcal{M}_0} \otimes {\sf Enc}_{pk}^{\otimes t})(\rho)$.
A random bit $r \in \{0,1\}$ is chosen and $(\Xi_{\QHE}^{\textsf{cpa-mult},r} \otimes \id_{\cal E})$ is applied to $\rho$ (the output being a state in $\mathcal{C}^{\otimes t} \otimes \mathcal{E}$).
\item Adversary $\advA_2$ obtains the system in $\mathcal{C}^{\otimes t} \otimes \mathcal{E}$ and outputs a bit $r'$.
\item The output of the experiment is defined to be~1 if~$r'=r$ and~$0$~otherwise. In case $r=r'$, we say that $\advA$ \emph{wins} the experiment.
\end{enumerate}

\begin{definition}[Quantum Indistinguishability under Multiple Chosen Plaintext Attack \mbox{(q-IND-CPA-mult)}]\label{def:q-IND-CPA-mult}
A quantum homomorphic scheme $\sf{QHE}$ is \emph{q-IND-CPA-mult} secure if for all quantum polynomial-time adversaries {$\advA = (\advA_1, \advA_2)$} there exists a negligible function $\eta$ such that:
\begin{equation*}
\Pr[\mathsf{PubK^{{cpa\text{-}mult}}_{\advA, QHE}} (\kappa) =1] \leq \frac{1}{2} +  \eta(\kappa)\,.
\end{equation*}
\end{definition}

}{In this section, we prove Thm.~\ref{thm:equiv-IND-CPA} of Sec.~\ref{sec:security-QHE}, restated below:}

\iftoggle{crypto}{
Clearly if a QHE is q-IND-CPA-mult it is also q-IND-CPA, but it is less obvious that the other direction holds. In order to prove that q-IND-CPA and q-IND-CPA-mult are actually equivalent security definitions,
}{
\vskip5pt
\noindent\textbf{Theorem~\ref{thm:equiv-IND-CPA}} (Equivalence of q-IND-CPA and q-IND-CPA-mult).\emph{
Let $\QHE$ be a quantum homomorphic encryption scheme. Then $\QHE$ is q-IND-CPA if and only if $\QHE$ is q-IND-CPA-mult.
}
\vskip5pt

In order to prove Thm.~\ref{thm:equiv-IND-CPA},}
 we will first introduce an intermediate security definition, q-IND-CPA-2, which is like q-IND-CPA-mult when we restrict to $t=1$. We will show that q-IND-CPA is equivalent to q-IND-CPA-2, and then that q-IND-CPA-2 is equivalent to q-IND-CPA-mult. \iftoggle{crypto}{}{We note that the proof given in this section is easily modified to a proof of Thm.~\ref{thm:equiv-IND-CPA-sym} --- a similar statement for the symmetric-key setting. The only difference in the symmetric-key case is that adversaries can make calls to an encryption oracle. The main techniques in this section involve constructing an adversary $\advA'$ that runs some other adversary $\advA$. If $\advA$ is an adversary with oracle access, then another adversary with oracle access, $\advA'$, can easily run $\advA$, so the same ideas go through in an identical manner in the symmetric-key case.}

\paragraph{CPA-2 security} The CPA-2 indistinguishability experiment is given below and illustrated in Fig.~\ref{fig:q-IND-CPA-2}.

\begin{figure}[h]
\centering
\begin{tikzpicture}
\node at (0,0) {
\begin{tikzpicture}
\draw (-.25,1.79) -- (2,1.79);
\draw (-.25,1.71) -- (2,1.71);
\draw (.5,0)--(4,0);
\draw (.5,.75)--(2,.75);
\draw (.5,1.25) -- (3.75,1.25);
\draw (-.25,0.04)--(.5,0.04);
\draw (-.25,-.04)--(.5,-.04);
\draw (-.25,1.25)--(.5,1.25);

\draw (4.25,.665)--(4.75,.665);
\draw (4.25,.585)--(4.75,.585);

\filldraw[fill=white] (-.75,2) rectangle (-.25,-.25);
\node[rotate=90] at (-.5,.875) {\small $\KeyGen(1^\kappa)$};
\node at (0,.2) {$pk$};
\node at (.1,1.45) {\small $\mathcal{R}_{evk}$};
\node at (0,1.95) {$pk$};

\filldraw[fill=white] (.5,1.5) rectangle (1,-.25);
\node at (.75,.625) {$\advA_1$};

\node at (1.25,.2) {\small $\cal E$};
\node at (1.3,.95) {\small $\mathcal {M}_1$};
\node at (1.3,1.45) {\small $\mathcal {M}_0$};

\filldraw[fill=white] (1.75,.5) rectangle (3.25,2);
\node at (2.5,1.25) {$\Xi_{\QHE}^{\textsf{cpa-2},r}$};

\node at (3.5,1.45) {\small $\cal C$};

\filldraw[fill=white] (3.75,1.5) rectangle (4.25,-.25);
\node at (4,.625) {$\advA_2$};

\node at (5,.625) {$b'$};

\end{tikzpicture}
};

\node at (5.5,0) {
\begin{tikzpicture}
\draw[dashed] (-1.25,-1) rectangle (2.65,1);

\node at (-.5,0) {$\Xi_{\QHE}^{\textsf{cpa-2},0}$:};
\draw (.5,-.5) -- (1.5,-.5);
\draw (.5,.5) -- (2.5,.5);
\node at (.65,.7) {\small $\mathcal {M}_0$};
\node at (.65,-.3) {\small $\mathcal {M}_1$};

\filldraw[fill=white] (1,.25) rectangle (2,.75);
\node at (1.5,.5) {${\sf Enc}_{pk}$};
\node at (2.25,.7) {\small $\cal C$};
\node at (1.5,-.5) {\meas};
\end{tikzpicture}
};
\node at (9.5,0) {
\begin{tikzpicture}
\draw[dashed] (-1.25,-1) rectangle (2.65,1);

\node at (-.5,0) {$\Xi_{\QHE}^{\textsf{cpa-2},1}$:};
\draw (.5,-.5) -- (2.5,-.5);
\draw (.5,.5) -- (1.5,.5);
\node at (.65,.7) {\small $\mathcal {M}_0$};
\node at (.65,-.3) {\small $\mathcal {M}_1$};

\node at (1.5,.5) {\meas};
\filldraw[fill=white] (1,-.25) rectangle (2,-.75);
\node at (1.5,-.5) {${\sf Enc}_{pk}$};
\node at (2.25,-.3) {\small $\cal C$};
\end{tikzpicture}
};
\end{tikzpicture}
\caption{The quantum IND-CPA-2 indistinguishability experiment.}\label{fig:q-IND-CPA-2}
\end{figure}

\noindent\textbf{The quantum IND-CPA-2 indistinguishability experiment $\mathsf{PubK^{cpa\text{-}2}_{\advA, QHE}} (\kappa)$}
\begin{enumerate}
\item $\mathsf{KeyGen}(1^\kappa)$ is run to obtain keys $(pk,sk,\rho_{evk})$.
\item Adversary $\advA_1$ is given $(pk,\rho_{evk})$ and outputs a quantum state $\rho$ in $\mathcal{M}_0\otimes \mathcal{M}_1 \otimes \mathcal{E}$, where $\mathcal{M}_0\equiv\mathcal{M}_1\equiv\mathcal{M} $
\item For $r\in \{0,1\}$, let   $\Xi_{\QHE}^{\textsf{cpa-2},r}: D(\mathcal{M}_0 \otimes \mathcal{M}_1) \rightarrow D(\mathcal{C})$ be given by
$\Xi_{\QHE}^{\textsf{cpa-2},0}(\rho)=  \Tr_{\mathcal{M}_1}\(({\sf Enc}_{pk}^{\mathcal{M}_0}\otimes \id_{{\cal M}_1})(\rho)\)$ and $\Xi_{\QHE}^{\textsf{cpa-2},1}(\rho)=  \Tr_{\mathcal{M}_0}\((\id_{{\cal M}_0} \otimes {\sf Enc}_{pk}^{\mathcal{M}_1})(\rho)\)$.
A random bit $r \in \{0,1\}$ is chosen and $\Xi_{\QHE}^{\textsf{cpa-2},r} \otimes \id_{\cal E}$ is applied to~$\rho$ (the output being a state in $\mathcal{C} \otimes \mathcal{E}$).
\item Adversary $\advA_2$ obtains the system in $\mathcal{C} \otimes \mathcal{E}$ and outputs a bit $r'$.
\item The output of the experiment is defined to be~1 if~$r'=r$ and~$0$~otherwise. In case $r=r'$, we say that $\advA$ \emph{wins} the experiment.
\end{enumerate}

\begin{definition}[q-IND-CPA-2] \label{def:q-IND-CPA-2}
A quantum homomorphic encryption scheme $\sf{QHE}$ is \emph{q-IND-CPA-2} secure if for all quantum polynomial-time adversaries \iftoggle{crypto}{$\advA$}{$\advA = (\advA_1, \advA_2)$} there exists a negligible function $\eta$ such that\iftoggle{crypto}{ $\Pr[\mathsf{PubK^{\mathsf{cpa\text{-}2}}_{\advA, QHE}} (\kappa) =1] \leq \frac{1}{2} +  \eta(\kappa)$.}{:
\begin{equation*}
\Pr[\mathsf{PubK^{\mathsf{cpa\text{-}2}}_{\advA, QHE}} (\kappa) =1] \leq \frac{1}{2} +  \eta(\kappa)\,.
\end{equation*}}
\end{definition}

\begin{theorem}[Equivalence of q-IND-CPA and q-IND-CPA-2]\label{th:equiv}
A quantum homomorphic encryption scheme is q-IND-CPA if and only if it is q-IND-CPA-2.
\end{theorem}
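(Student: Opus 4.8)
The plan is to prove the two implications separately by black-box reductions, in the spirit of the classical proof of~\cite{KL08}, with the direction q-IND-CPA-2 $\Rightarrow$ q-IND-CPA being a single simulation and the direction q-IND-CPA $\Rightarrow$ q-IND-CPA-2 a short two-step hybrid argument. For q-IND-CPA-2 $\Rightarrow$ q-IND-CPA: given a q-IND-CPA adversary $\advA=(\advA_1,\advA_2)$, build a q-IND-CPA-2 adversary $\mathscr{B}=(\mathscr{B}_1,\mathscr{B}_2)$ where $\mathscr{B}_1(pk,\rho_{evk})$ runs $\advA_1(pk,\rho_{evk})$ to obtain a state on $\mathcal{M}\otimes\mathcal{E}$, relabels $\mathcal{M}$ as $\mathcal{M}_1$, prepares a fresh $\ket{\mathbf 0}\bra{\mathbf 0}$ in $\mathcal{M}_0$, and outputs the joint state on $\mathcal{M}_0\otimes\mathcal{M}_1\otimes\mathcal{E}$; and $\mathscr{B}_2$ runs $\advA_2$ on the challenge and outputs its bit. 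Unpacking $\Xi_{\QHE}^{\textsf{cpa-2},r}$ (Figure~\ref{fig:q-IND-CPA-2}), the challenge delivered to $\mathscr{B}_2$ is $\QHE.\Enc_{pk}(\ket{\mathbf 0}\bra{\mathbf 0})\otimes\rho^{\mathcal E}$ when $r=0$ and $(\QHE.\Enc_{pk}\otimes\id_{\mathcal E})(\rho^{\mathcal M\mathcal E})$ when $r=1$, i.e.\ exactly what $\advA_2$ expects in $\mathsf{PubK}^{\mathrm{cpa}}_{\advA,\QHE}$ with challenge bit $r$. Hence $\Pr[\mathsf{PubK}^{\mathrm{cpa}\text{-}2}_{\mathscr{B},\QHE}(\kappa)=1]=\Pr[\mathsf{PubK}^{\mathrm{cpa}}_{\advA,\QHE}(\kappa)=1]$, so q-IND-CPA-2 security bounds the right-hand side by $\tfrac12+negl(\kappa)$.

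For q-IND-CPA $\Rightarrow$ q-IND-CPA-2: fix a q-IND-CPA-2 adversary $\advA$, and let $\xi_0,\xi_1$ be the states handed to $\advA_2$ when the cpa-2 challenge bit is $0$ and $1$ respectively. Introduce the hybrid $\xi_{\mathrm{mid}}:=\QHE.\Enc_{pk}(\ket{\mathbf 0}\bra{\mathbf 0})\otimes\rho^{\mathcal E}$, in which the plaintext is zeroed out but the reduced state of $\advA_1$'s output on $\mathcal E$ is retained. Construct two q-IND-CPA adversaries: $\mathscr{B}$ runs $\advA_1$, traces out $\mathcal M_1$, and submits the $\mathcal M_0$ register as its chosen message (keeping $\mathcal E$ as environment), so that its $s=1$ challenge is $\xi_0$ and its $s=0$ challenge is $\xi_{\mathrm{mid}}$; symmetrically $\mathscr{B}'$ traces out $\mathcal M_0$ and submits $\mathcal M_1$, so that its $s=1$ challenge is $\xi_1$ and its $s=0$ challenge is again $\xi_{\mathrm{mid}}$ (here one uses that $\Tr_{\mathcal M_0}\rho$ and $\Tr_{\mathcal M_1}\rho$ have the same reduction on $\mathcal E$). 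Running $\advA_2$ inside $\mathscr{B}_2$ and $\mathscr{B}'_2$ — both directly and in the bit-flipped variant, to get two-sided estimates — and invoking q-IND-CPA security shows that $\bigl|\Pr[\advA_2(\xi_0)=1]-\Pr[\advA_2(\xi_{\mathrm{mid}})=1]\bigr|$ and $\bigl|\Pr[\advA_2(\xi_1)=1]-\Pr[\advA_2(\xi_{\mathrm{mid}})=1]\bigr|$ are each negligible. By the triangle inequality $\bigl|\Pr[\advA_2(\xi_1)=1]-\Pr[\advA_2(\xi_0)=1]\bigr|$ is negligible, and since this quantity equals $2\bigl(\Pr[\mathsf{PubK}^{\mathrm{cpa}\text{-}2}_{\advA,\QHE}(\kappa)=1]-\tfrac12\bigr)$, the q-IND-CPA-2 bound follows.

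The main obstacle — essentially the only subtle point — is the register bookkeeping: one must verify carefully that ``trace out one of $\mathcal M_0,\mathcal M_1$, then encrypt the other'' reproduces exactly the q-IND-CPA challenge distribution on $\mathcal C\otimes\mathcal E$, entanglement with $\mathcal E$ included; that $\xi_{\mathrm{mid}}$ is genuinely obtainable as the $s=0$ challenge in both reductions; and that all constructed reductions run in quantum polynomial time whenever $\advA$ does. Everything else is the standard reduction-plus-hybrid template, and, as the excerpt notes, the identical argument transfers to the symmetric-key setting (Theorem~\ref{thm:equiv-IND-CPA-sym}), since the reductions only ever run the given adversary and relay its encryption-oracle queries unchanged.
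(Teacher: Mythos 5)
Your proposal is correct and follows essentially the same route as the paper: the easy direction is the standard embedding of the single-message challenge into one slot of the cpa-2 experiment, and the hard direction is the same hybrid through $\mathsf{Enc}_{pk}(\ket{\mathbf 0}\bra{\mathbf 0})\otimes\rho^{\mathcal E}$ realized by two reductions, one submitting $\mathcal M_0$ and one submitting $\mathcal M_1$. The only cosmetic difference is that the paper has the second reduction flip its output bit and then \emph{adds} the two one-sided winning-probability inequalities so that the middle-hybrid terms cancel, whereas you derive two-sided advantage bounds and apply the triangle inequality; these are the same calculation.
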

\begin{proof}
It is trivial to see that if a scheme is q-IND-CPA-2 then it is q-IND-CPA.

Suppose a scheme $\QHE$ is q-IND-CPA. Then let $\advA=(\advA_1,\advA_2)$ be any adversary for the experiment ${\sf PubK}_{\advA,\QHE}^{{\sf cpa\text{-}2},r}$, so $\advA_1$ implements a quantum channel from $D(\mathcal{R}_{evk})$ to $D(\mathcal{M}_0\otimes\mathcal{M}_1\otimes\mathcal{E})$ conditioned on $pk$, and $\advA_2$ implements a quantum channel on $D(\mathcal{C}\otimes\mathcal{E})$ that outputs a bit.
We will use $\advA$ to construct an adversary, $\advA'$ for the experiment $\mathsf{PubK}_{\advA',\QHE}^{{\sf cpa}}$ as shown in Fig.~\ref{fig:equiv}. We define $\advA_1'$ by $\advA'_1(pk,\rho^{\mathcal{R}_{evk}})=\Tr_{\mathcal{M}_0}(\advA_1(pk,\rho^{\mathcal{R}_{evk}}))$, and $\advA_2'$ by $\advA_2'=\advA_2$.

We now consider the probability that $\advA'$ wins the q-IND-CPA experiment: $\Pr[\mathsf{PubK^{cpa}_{\advA',\QHE}}(\kappa)=1]$.
If $r=0$, the probability that $r'=r$ is $\Pr[\advA_2((\mathsf{Enc}(\egoketbra{\mathbf{0}}))\otimes \rho^{\cal E})=0]$. If $r=1$, the probability that $r'=r$ is $\Pr[\advA_2( (\mathsf{Enc}_{pk}\otimes \mathbb{I}_{\cal E})(\rho^{\mathcal{M}_1\otimes \mathcal{E}}))=1]$. Thus, the probability that this adversary correctly predicts $r$ is
\begin{equation}
\frac{1}{2}\Pr\left[\advA_2((\mathsf{Enc}(\ket{0}\bra{0}))\otimes \rho^{\cal E})=0\right]+\frac{1}{2}\Pr\left[\advA_2( \mathsf{Enc}_{pk}\otimes \mathbb{I}_{\cal E}(\rho^{\mathcal{M}_1\otimes \mathcal{E}}))=1\right]\leq \frac{1}{2}+\eta'(\kappa),
\label{eq:first}
\end{equation}
for some negligible function $\eta'$, by the fact that $\QHE$ is q-IND-CPA.

Consider a slightly different strategy, $\advA''=(\advA''_1,\advA''_2)$, for the same experiment $\mathsf{PubK}_{\advA'',\QHE}^{{\sf cpa}}$. This strategy discards the \emph{second} message space, $\mathcal{M}_1$, and inputs the first, $\mathcal{M}_0$, into $\Xi_{\QHE}^{{\sf cpa},r}$, that is $\advA''_1=\Tr_{{\cal M}_1}(\advA_1(pk,\rho^{\mathcal{R}_{evk}}))$. The new adversary $\advA''$ outputs the complement of the output of $\advA$: $\advA''_2(\rho^{\cal CE})=\advA_2(\rho^{\cal CE})\oplus 1$.
We can then see that $\advA''$ correctly predicts $r$ with probability
\begin{equation}
\frac{1}{2}\Pr\left[\advA_2((\mathsf{Enc}(\ket{0}\bra{0}))\otimes \rho^{\cal E})=1\right]+\frac{1}{2}\Pr\left[\advA_2( \mathsf{Enc}_{pk}\otimes \mathbb{I}_{\cal E}(\rho^{\mathcal{M}_0\otimes \mathcal{E}}))=0\right]\leq \frac{1}{2}+\eta''(\kappa).
\label{eq:second}
\end{equation}
for some negligible function $\eta''$.
The addition of  \eqref{eq:first} and \eqref{eq:second}, gives:
\begin{equation*}
\frac{1}{2}+\frac{1}{2}\Pr\left[\advA_2( \mathsf{Enc}_{pk}\otimes \mathbb{I}_{\cal E}(\rho^{\mathcal{M}_1\otimes \mathcal{E}}))=1\right]+ \frac{1}{2}\Pr\left[\advA_2( \mathsf{Enc}_{pk}\otimes \mathbb{I}_{\cal E}(\rho^{\mathcal{M}_0\otimes \mathcal{E}}))=0\right]\leq  1+\eta'(\kappa)+\eta''(\kappa).
\end{equation*}
Since $\eta:=\eta'+\eta''$ is still negligible, we conclude that for some negligible function~$\eta$:
\begin{equation*}
\frac{1}{2}\Pr\left[\advA_2( \mathsf{Enc}_{pk}\otimes \mathbb{I}_{\cal E}(\rho^{\mathcal{M}_1\otimes \mathcal{E}}))=1\right]+ \frac{1}{2}\Pr\left[\advA_2( \mathsf{Enc}_{pk}\otimes \mathbb{I}_{\cal E}(\rho^{\mathcal{M}_0\otimes \mathcal{E}}))=0\right]\leq  \frac{1}{2}+\eta(\kappa). \iftoggle{crypto}{}{\qedhere}
\end{equation*}
\end{proof}

\begin{figure}[H]
\centering
\begin{tikzpicture}
\draw (-.5,2.04)--(3,2.04);
\draw (-.5,1.96)--(3,1.96);
\draw (-.5,0)--(4.75,0);
\draw (.5,.75)--(4.75,.75);
\draw (.5,1.25) -- (2,1.25);
\draw (-.5,1.29)--(.5,1.29);
\draw (-.5,1.21)--(.5,1.21);
	\draw (5.25,.665)--(5.75,.665);
	\draw (5.25,.585)--(5.75,.585);

\filldraw[fill=white] (-1,2.25) rectangle (-.5,-.25);
\node[rotate=90] at (-.75,1) {$\mathsf{KeyGen}(1^\kappa)$};

\node at (-.25,2.2) {$pk$};
\node at (-.25,1.45) {$pk$};
\node at (-.1,.2) {\small $\mathcal{R}_{evk}$};

\filldraw[fill=white] (.5,1.5) rectangle (1,-.25);
\node at (.75,.625) {$\advA_1$};

\node at (1.25,.2) {\small $\cal E$};
\node at (1.35,.95) {\small $\mathcal{M}_1$};
\node at (1.35,1.45) {\small $\mathcal{M}_0$};

\node at (2,1.25) {\meas};

\filldraw[fill=white] (2.75,.5) rectangle (4,2.25);
\node at (3.375,1.25) {$\Xi_{\QHE}^{{\sf cpa},r}$};

\node at (4.25,.95) {\small $\cal C$};

\filldraw[fill=white] (4.75,1.5) rectangle (5.25,-.25);
\node at (5,.625) {$\advA_2$};

\node at (6,.625) {$b'$};

\draw[dashed] (.3,1.7) rectangle (2.55,-.45);
\node at (1.425,-.7) {$\advA_1'$};

\draw[dashed] (4.55,1.7) rectangle (5.45,-.45);
\node at (5,-.7) {$\advA_2'$};

\end{tikzpicture}
\caption{The adversary $\advA'$ described in the proof of Thm.~\ref{th:equiv}.}\label{fig:equiv}
\end{figure}

\begin{theorem}[Equivalence of q-IND-CPA and q-IND-CPA-mult]\label{th:equiv-mult}\iftoggle{crypto}{\label{thm:equiv-IND-CPA}}{}
A quantum homomorphic encryption scheme is q-IND-CPA if and only if it is q-IND-CPA-mult.
\end{theorem}
\begin{proof}
It is trivial to see that if a scheme is q-IND-CPA-mult, then it is q-IND-CPA.

For the other direction, it is simple to adapt a similar classical proof (see for example \cite{KL08}) to the quantum setting. Suppose $\QHE$ is q-IND-CPA, so in particular, it is q-IND-CPA-2. Let $\advA=(\advA_1,\advA_2)$ be an adversary for the q-IND-CPA-mult experiment $\mathsf{PubK}_{\advA,\QHE}^{\textsf{cpa-mult}}$. We will construct an adversary, $\advA'$, for the q-IND-CPA-2 experiment $\mathsf{PubK}_{\advA',\QHE}^{\textsf{cpa-2}}$ from $\advA$.

For any $i\in \{0,\dots,t\}$, define $\Psi_i:D(\mathcal{M}_0^1\otimes \dots\otimes \mathcal{M}_0^t\otimes\mathcal{M}_1^1\otimes\dots\otimes\mathcal{M}_1^t\otimes \mathcal{E})\rightarrow D(\mathcal{C}^1\otimes\dots\otimes\mathcal{C}^t\otimes\mathcal{E})$ as the channel that applies $\mathsf{Enc}_{pk}$ to the systems $\mathcal{M}_0^1,\dots,\mathcal{M}_0^i,\mathcal{M}_1^{i+1},\dots,\mathcal{M}_1^t$, and traces out the systems $\mathcal{M}_1^1,\dots,\mathcal{M}_1^i,\mathcal{M}_0^{i+1},\dots,\mathcal{M}_0^t$.

Let $\advA=(\advA_1,\advA_2)$ be a $t$-message adversary for q-IND-CPA-mult. We define a q-IND-CPA-2 adversary $\advA'=(\advA'_1,\advA'_2)$ as follows:
\begin{description}
\item[$\advA_1'(pk,\rho^{\mathcal{R}_{evk}})$:] Run $\advA_1(pk,\rho^{\mathcal{R}_{evk}})$ to get $\rho\in D(\mathcal{M}_0^1\otimes \dots\otimes \mathcal{M}_0^t\otimes\mathcal{M}_1^1\otimes\dots\otimes\mathcal{M}_1^t\otimes \mathcal{E})$. Choose a random $i\in\{1,\dots,t\}$, and apply $\Xi_{\QHE}^{\textsf{cpa-2},r}$ to the system $\mathcal{M}_0^i,\mathcal{M}_1^i$.  For $j<i$, apply $\mathsf{Enc}_{pk}$ to $\mathcal{M}_0^j$ and label the output as $\mathcal{C}_j$. For $j>i$, apply $\mathsf{Enc}_{pk}$ to $\mathcal{M}_1^j$, and label the output as $\mathcal{C}_j$. Let $\mathcal{E}'=\(\bigotimes_{j=1: j\neq i}^t\mathcal{C}_j\)\otimes\mathcal {E} \otimes \mathbb{C}^{t+1}$. Record $i$ in the last register.

\item[$\advA_2':D(\mathcal{C}\otimes\mathcal{E}')\rightarrow \{0,1\}$:] Label the output of $\Xi_{\QHE}^{\textsf{cpa-2},r}$, $\cal C$, as $\mathcal{C}_i$. Apply $\advA_2$ to the system $\mathcal{C}_1\otimes\dots\otimes\mathcal{C}_t\otimes\mathcal{E}$, to get a bit $r'$. Output $r'$.
\end{description}
We now consider the success probability of $\advA'$ on $\mathsf{PubK}_{\advA',\QHE}^{\textsf{cpa-2}}$. Let $\rho=\advA_1(pk,\rho^{\mathcal{R}_{evk}})$. We first note that if $\advA'_1$ selects $i$, then if $r=0$, the state passed to $\advA_2'$ is $\Psi_i(\rho)$, but if $r=1$, the state passed to $\advA_2'$ is $\Psi_{i+1}(\rho)$.
So if $r=0$, then the success probability is:
\begin{equation*}
\Pr[\mathsf{PubK_{\advA',\QHE}^{cpa\text{-}2}}(\kappa)=1|r=0] = \sum_{i=1}^{t}\frac{1}{t}\Pr[\advA_2(\Psi_i(\rho))=0].
\end{equation*}
And if $r=1$, then the success probability is:
\begin{equation*}
\Pr[\mathsf{PubK_{\advA',\QHE}^{cpa\text{-}2}}(\kappa)=1|r=1] = \sum_{i=1}^t\frac{1}{t}\Pr[\advA_2(\Psi_{i+1}(\rho))=1]=\sum_{i=0}^{t-1}\frac{1}{t}\Pr[\advA_2(\Psi_i(\rho))=1].
\end{equation*}
From these two equations, we can compute:
\begin{align*}
&\Pr[\mathsf{PubK_{\advA',\QHE}^{cpa\text{-}2}}(\kappa)=1]\\
 &= \frac{1}{2}\Pr[\mathsf{PubK_{\advA',\QHE}^{cpa\text{-}2}}(\kappa)=1|r=0]+\frac{1}{2}\Pr[\mathsf{PubK_{\advA',\QHE}^{cpa\text{-}2}}(\kappa)=1|r=1]\\
 &= \frac{1}{2}\(\frac{1}{t}\Pr[\advA_2(\Psi_0(\rho))=1]+\sum_{i=1}^{t-1}\frac{1}{t}\(\Pr[\advA_2(\Psi_i(\rho))=0]+\Pr[\advA_2(\Psi_i(\rho))=1]\)+\frac{1}{t}\Pr[\advA_2(\Psi_t(\rho))=0]\)\\
&=\frac{1}{2}\(\frac{1}{t}\Pr[\advA_2(\Psi_0(\rho))=1]+\frac{t-1}{t}+\frac{1}{t}\Pr[\advA_2(\Psi_t(\rho))=0]\).\\
\end{align*}
Since $\QHE$ is assumed to be q-IND-CPA, there exists a negligible function $\eta'$ such that $\Pr[\mathsf{PubK_{\advA',\QHE}^{cpa\text{-}2}}(\kappa)=1]\leq \frac{1}{2}+\eta'(\kappa)$, so we can compute:
\begin{equation}
\frac{1}{t}\Pr[\advA_2(\Psi_0(\rho))=1]+\frac{t-1}{t}+\frac{1}{t}\Pr[\advA_2(\Psi_t(\rho))=0]  \leq 1+2 \eta'(\kappa).
\label{eq:equiv-mult-3}
\end{equation}
Note that $\Psi_0(\rho)=\(\(\mathsf{Enc}_{pk}\)^{\otimes t}\otimes \mathbb{I}_{\cal E}\)(\rho^{(\bigotimes_{j=1}^t\mathcal{M}_1^j)\otimes\mathcal{E}})$ and $\Psi_t(\rho)=\(\(\mathsf{Enc}_{pk}\)^{\otimes t}\otimes \mathbb{I}_{\cal E}\)(\rho^{(\bigotimes_{j=1}^t\mathcal{M}_0^j)\otimes\mathcal{E}})$, so
$\frac{1}{2}\Pr[\advA_2(\Psi_0(\rho))=1]+\frac{1}{2}\Pr[\advA_2(\Psi_t(\rho))=0]=\Pr[\mathsf{PubK_{\advA,\QHE}^{cpa\text{-}mult}}(\kappa)=1]$. From Equation \eqref{eq:equiv-mult-3}, we get:
\begin{equation*}
\Pr[\mathsf{PubK_{\advA,\QHE}^{cpa\text{-}mult}}(\kappa)=1] \leq  \frac{1}{2}+t\cdot \eta'(\kappa).
\end{equation*}
It must be the case that $t=O(\mathrm{poly}(\kappa))$, since $\advA$ is a QPT algorithm, so $t\cdot \eta'(\kappa)$ is negligible in $\kappa$. Setting $\eta=t\cdot \eta'$ completes the proof.
\end{proof}

\begin{figure}[h]
\centering
\begin{tikzpicture}[scale = 1.25]
\draw (-1.25,2.54)--(3,2.54);		\node at (-1,2.7) {$pk$};
\draw (-1.25,2.46)--(3,2.46);
\draw (-1.25,2.04)--(-.5,2.04);	\node at (-1,2.2) {$pk$};
\draw (-1.25,1.96)--(-.5,1.96);
\draw (-1.25,0.04)--(-.5,0.04);	\node at (-.875,.2) {\small $\mathcal{R}_{evk}$};
\draw (-1.25,-.04)--(-.5,-.04);

\draw (0,2)--(4,2);
\node at (.3,2.15) {\small ${\cal M}_0^1$};
\node at (3.4,2.15) {\small ${\cal C}_1$};
\draw (0,1.5)--(3,1.5);
\node at (.3,1.65) {\small ${\cal M}_0^2$};
\draw (0,1)--(1.5,1);
\node at (.3,1.15) {\small ${\cal M}_1^1$};
\draw (0,.5)--(3,.5);
\node at (.3,.65) {\small ${\cal M}_1^2$};
\draw (0,0)--(4,0);
\node at (.3,.15) {\small ${\cal E}$};
\draw (3,1) -- (4,1);
\node at (3.4,1.15) {\small ${\cal C}_2$};
\draw (4.5,1.04)--(5,1.04);
\draw (4.5,.96)--(5,.96);

\filldraw[fill=white] (-1.75,2.75) rectangle (-1.25,-.25);
\node[rotate=90] at (-1.5,1.25) {$\KeyGen$};

\filldraw[fill=white] (-.5,2.25) rectangle (0,-.25);
\node at (-.25,1) {$\advA_1$};

\node at (1.25,1) {\meas};

\filldraw[fill=white] (2.35,2.75) rectangle (3.15,1.8);
\node at (2.75,2) {$\mathsf{Enc}_{pk}$};

\filldraw[fill=white] (2.1, 1.65) rectangle (3.15, .35);
\node at (2.625,1) {$\Xi_{\QHE}^{\textsf{cpa-2},r}$};

\filldraw[fill=white] (4,2.25) rectangle (4.5,-.25);
\node at (4.25,1) {$\advA_2$};

\node at (5.2,1) {$b'$};
\end{tikzpicture}
\caption{The strategy described in the proof of Thm.~\ref{th:equiv-mult} for the case $t=2$, when $i=2$.}\label{fig:equiv-mult}
\end{figure}

\iftoggle{crypto}{
\paragraph{Security of Symmetric Key Schemes}
In order to define q-IND-CPA security in the symmetric-key setting,  we must equip the adversary with an encryption oracle $\Enc_{sk}(\cdot)$.
An \emph{adversary with access to an encryption oracle}, $\advA$ is a tuple of quantum channels $(\advA^{(1)}, \dots, \advA^{(q+1)})$, such that $\advA^{(1)}:D({\cal X})\rightarrow D(\mathcal{M}\otimes\mathcal{E})$ for some space $\cal X$, for $i=2,\dots,q$, $\advA^{(i)}:D(\mathcal{C}\otimes\mathcal{E})\rightarrow D(\mathcal{M}\otimes\mathcal{E})$, and $\advA^{(q+1)}:D(\mathcal{C}\otimes\mathcal{E})\rightarrow D(\mathcal{Y})$ for some space $\mathcal{Y}$. The interaction of the adversary and the encryption oracle is shown in Fig.~\ref{fig:adv-with-oracle}, and for the case of a bounded encryption scheme, in which the oracle also updates a counter, in Fig.~\ref{fig:adv-with-oracle-bounded}.

\begin{figure}[h]
\centering
\begin{tikzpicture}
\node at (0,0){
\begin{tikzpicture}
\draw (-.5,2.04) -- (1.5,2.04);
\draw (-.5,1.96) -- (1.5,1.96);
\draw (-.5,.125)--(1.5,.125);

\filldraw[fill=white] (.25,2.25) rectangle (.75,.75);
\node[rotate=90] at (.5,1.5) {\small$\QHE.\Enc$};

\draw[->] (.35,.5)--(.35,.75);
\draw[->] (.65,.75)--(.65,.5);

\filldraw[fill=white] (.25,-.25) rectangle (.75,.5);
\node at (.5,.125) {$\advA$};

\node at (-.25,2.2) {$sk$};
\node at (1,2.2) {$sk$};


\node at (-.25,.325) {\small $\cal X$};
\node at (1,.325) {\small $\cal Y$};
\end{tikzpicture}
};

\node at (1.7,0) {$:=$};

\node at (7.75,0) {
\begin{tikzpicture}[xscale=1.1]
\draw (0,1.79)--(9.75,1.79);
\draw (0,1.71)--(9.75,1.71);
	\draw (1,.75)--(9,.75);
\draw (0,.375) -- (1,.375);	\draw (9.25,.375)--(9.75,.375);
	\draw (1,0)--(9,0);

\node at (.25,1.95) {$sk$};
\node at (.25,.575) {\small $\cal X$};

\filldraw[fill=white] (.5,-.25) rectangle (1.25,1);
\node at (.875,.375) {$\advA^{(1)}$};

\node at (1.5,.95) {\small $\cal M$};
\node at (1.5,.2) {\small $\cal E$};

\filldraw[fill=white] (2,.5) rectangle (2.5,2);
\node[rotate=90] at (2.25,1.25) {\small$\QHE.\Enc$};

\node at (2.75,1.95) {$sk$};
\node at (2.75,.95) {\small $\cal C$};

\filldraw[fill=white] (3.25,-.25) rectangle (4,1);
\node at (3.625,.375) {$\advA^{(2)}$};

\node at (4.25,.95) {\small $\cal M$};
\node at (4.25,.2) {\small $\cal E$};

\filldraw[fill=white] (4.75,.5) rectangle (5.25,2);
\node[rotate=90] at (5,1.25) {\small$\QHE.\Enc$};

\node at (5.5,1.95) {$sk$};
\node at (5.5,.95) {\small $\cal C$};

\fill[fill=white] (5.75,-.25) rectangle (6.25,2);
\node at (6,1.75) {$\dots$};
\node at (6,.75) {$\dots$};
\node at (6,0) {$\dots$};

\node at (6.5,1.95) {$sk$};
\node at (6.5,.95) {\small $\cal M$};
\node at (6.5,.2) {\small $\cal E$};

\filldraw[fill=white] (6.75,.5) rectangle (7.25,2);
\node[rotate=90] at (7,1.25) {\small$\QHE.\Enc$};

\node at (7.5,1.95) {$sk$};
\node at (7.5,.95) {\small $\cal C$};

\filldraw[fill=white] (8,-.25) rectangle (9.25,1);
\node at (8.625,.375) {$\advA^{(q+1)}$};

\node at (9.5,.575) {\small $\cal Y$};

\end{tikzpicture}
};

\end{tikzpicture}
\caption{An adversary $\advA$ that makes at most $q$ encryption oracle calls is a list of quantum channels $\advA^{(1)},\dots,\advA^{(q+1)}$ such that for $j=1,\dots,q$, $\advA^{(j)}$ sends a message to an encryption oracle, and $\advA^{(j+1)}$ receives the output. The full interaction is shown on the right, but we use the figure on the left as a short-hand for this interaction.}\label{fig:adv-with-oracle}
\end{figure}

\begin{figure}[h]
\centering
\begin{tikzpicture}
\node at (0,0){
\begin{tikzpicture}
\draw (-.5,2.04) -- (1.5,2.04);
\draw (-.5,1.96) -- (1.5,1.96);
\draw (-.5,1.04)--(1.5,1.04);
\draw (-.5,.96)--(1.5,.96);
\draw (-.5,.125)--(1.5,.125);

\filldraw[fill=white] (.25,2.25) rectangle (.75,.75);
\node[rotate=90] at (.5,1.5) {\small$\QHE.\Enc$};

\draw[->] (.35,.5)--(.35,.75);
\draw[->] (.65,.75)--(.65,.5);

\filldraw[fill=white] (.25,-.25) rectangle (.75,.5);
\node at (.5,.125) {$\advA$};

\node at (-.25,2.2) {$sk$};
\node at (1,2.2) {$sk$};

\node at (-.25,1.2) {$d$};
\node at (1.25,1.2) {$d+q$};


\node at (-.25,.325) {\small $\cal X$};
\node at (1,.325) {\small $\cal Y$};
\end{tikzpicture}
};

\node at (1.7,0) {$:=$};

\node at (7.75,0) {
\begin{tikzpicture}[xscale=1.1]
\draw (0,1.79)--(9.75,1.79);
\draw (0,1.71)--(9.75,1.71);
\draw (0,1.29)--(9.75,1.29);
\draw (0,1.21)--(9.75,1.21);
	\draw (1,.75)--(9,.75);
\draw (0,.375) -- (1,.375);	\draw (9.25,.375)--(9.75,.375);
	\draw (1,0)--(9,0);

\node at (.25,1.95) {$sk$};
\node at (.25,1.45) {$d$};
\node at (.25,.575) {\small $\cal X$};

\filldraw[fill=white] (.5,-.25) rectangle (1.25,1);
\node at (.875,.375) {$\advA^{(1)}$};

\node at (1.5,.95) {\small $\cal M$};
\node at (1.5,.2) {\small $\cal E$};

\filldraw[fill=white] (2,.5) rectangle (2.5,2);
\node[rotate=90] at (2.25,1.25) {\small$\QHE.\Enc$};

\node at (2.75,1.95) {$sk$};
\node at (3,1.45) {$d+1$};
\node at (2.75,.95) {\small $\cal C$};

\filldraw[fill=white] (3.25,-.25) rectangle (4,1);
\node at (3.625,.375) {$\advA^{(2)}$};

\node at (4.25,.95) {\small $\cal M$};
\node at (4.25,.2) {\small $\cal E$};

\filldraw[fill=white] (4.75,.5) rectangle (5.25,2);
\node[rotate=90] at (5,1.25) {\small$\QHE.\Enc$};

\node at (5.5,1.95) {$sk$};

\node at (5.5,.95) {\small $\cal C$};

\fill[fill=white] (5.75,-.25) rectangle (6.25,2);
\node at (6,1.75) {$\dots$};
\node at (6,1.25) {$\dots$};
\node at (6,.75) {$\dots$};
\node at (6,0) {$\dots$};

\node at (5.75,1.45) {$d+2$};

\node at (6.5,1.95) {$sk$};
\node at (6.5,.95) {\small $\cal M$};
\node at (6.5,.2) {\small $\cal E$};

\filldraw[fill=white] (6.75,.5) rectangle (7.25,2);
\node[rotate=90] at (7,1.25) {\small$\QHE.\Enc$};

\node at (7.5,1.95) {$sk$};
\node at (7.75,1.45) {$d+q$};
\node at (7.5,.95) {\small $\cal C$};

\filldraw[fill=white] (8,-.25) rectangle (9.25,1);
\node at (8.625,.375) {$\advA^{(q+1)}$};

\node at (9.5,.575) {\small $\cal Y$};

\end{tikzpicture}
};

\end{tikzpicture}
\caption{An adversary $\advA$ that makes at most $q$ encryption oracle calls to a bounded encryption oracle is a list of quantum channels $\advA^{(1)},\dots,\advA^{(q+1)}$ with the interaction shown on the right. We use the figure on the left as a short-hand for this interaction.}\label{fig:adv-with-oracle-bounded}
\end{figure}

Just as in the public-key setting, we can define a quantum CPA indistinguishability experiment for the symmetric-key setting, $\mathsf{SymK}_{\advA,\QHE}^{\sf cpa}(\kappa)$. An adversary for $\mathsf{SymK}_{\advA,\QHE}^{\sf cpa}(\kappa)$ is a pair of adversaries with access to an encryption oracle $\advA=(\advA_1,\advA_2)=(\advA_1^{(1)},\dots,\advA_1^{(q+1)},\advA_2^{(1)},\dots,\advA_2^{(q'+1)})$ ($q$~is the number of oracle calls before the challenger is called, and $q'$ is the number of oracle calls after the challenger is called). The experiment $\mathsf{SymK}_{\advA,\QHE}^{\sf cpa}(\kappa)$ is defined below, and shown in Fig.~\ref{fig:symmetric-cpa}.\looseness=-1

\noindent\textbf{The quantum symmetric-key CPA indistinguishability experiment} $\mathsf{SymK^{cpa}_{\advA,\QHE}}(\kappa)$
\begin{enumerate}
\item $\KeyGen(1^\kappa)$ is run to obtain keys $(sk,\rho_{evk})$.
\item $\advA_1$ is given $\rho_{evk}$, and may make a polynomial number of calls to an encryption oracle $\QHE.\Enc_{sk}$ before outputting a quantum state in message space $\cal M$ and environment register $\cal E$.
\item A random bit $r\in\{0,1\}$ is chosen and $\Xi_{\QHE}^{{\sf cpa},r}$ is applied to the state in $\cal M$ (the output being a state in $\cal C$).
\item Adversary $\advA_2$ obtains the system $\mathcal{C}\otimes \mathcal{E}$ and may make a polynomial number of calls to an encryption oracle $\QHE.\Enc_{sk}$ before outputting a bit $r'$.
\item The output of the experiment is defined to be 1 if $r=r'$ and 0 otherwise. In case $r=r'$, we say that $\advA$ \emph{wins} the experiment.
\end{enumerate}

\begin{figure}[h]
\centering
\begin{tikzpicture}
\node at (0,0){
\begin{tikzpicture}
\draw (0,2.54)--(5.5,2.54);
\draw (0,2.46)--(5.5,2.46);
\draw (0,.75)--(5,.75);
		\draw (5,.415)--(5.5,.415);
		\draw (5,.335)--(5.5,.335);
\draw (1,0)--(5,0);

\filldraw[fill=white] (-.5,.5) rectangle (0,2.75);
\node[rotate=90,align=center] at (-.25,1.625) {\small ${\sf QHE.KeyGen}$};

\node at (.25,2.7) {$sk$};
\node at (.4,.95) {\small $\mathcal{R}_{evk}$};

\filldraw[fill=white] (1,1.25) rectangle (1.5,2.75);
\node[rotate=90] at (1.25,2) {\small $\QHE.\Enc$};
\draw[->] (1.1,1)--(1.1,1.25);
\draw[->] (1.4,1.25)--(1.4,1);
\filldraw[fill=white] (1,-.25) rectangle (1.5,1);
\node at (1.25,.375) {$\advA_1$};

\node at (1.75,2.7) {$sk$};
\node at (1.75,.95) {\small $\cal M$};
\node at (1.75,.2) {\small $\cal E$};

\filldraw[fill=white] (2.5,.5) rectangle (3.5,2.75);
\node at (3,1.625) {$\Xi_{\QHE}^{\textsf{cpa},r}$};

\node at (3.75,2.7) {$sk$};
\node at (3.75,.95) {\small $\cal C$};

\filldraw[fill=white] (4.5,1.25) rectangle (5,2.75);
\node[rotate=90] at (4.75,2) {\small $\QHE.\Enc$};
\draw[->] (4.6,1)--(4.6,1.25);
\draw[->] (4.9,1.25)--(4.9,1);
\filldraw[fill=white] (4.5,-.25) rectangle (5,1);
\node at (4.75,.375) {$\advA_2$};

\node at (5.25,2.7) {$sk$};

\node at (5.75,.375) {$r'$};

\end{tikzpicture}
};

\node at (8,0){
\begin{tikzpicture}
\draw (0,2.54)--(5.5,2.54);
\draw (0,2.46)--(5.5,2.46);
\draw (.6,1.54)--(5.5,1.54);
\draw (.6,1.46)--(5.5,1.46);
\draw (0,.75)--(5,.75);
		\draw (5,.415)--(5.5,.415);
		\draw (5,.335)--(5.5,.335);
\draw (1,0)--(5,0);

\filldraw[fill=white] (-.5,.5) rectangle (0,2.75);
\node[rotate=90,align=center] at (-.25,1.625) {\small ${\sf QHE.KeyGen}$};

\node at (.25,2.7) {$sk$};
\node at (.5,1.5) {$1$};
\node at (.4,.95) {\small $\mathcal{R}_{evk}$};

\filldraw[fill=white] (1,1.25) rectangle (1.5,2.75);
\node[rotate=90] at (1.25,2) {\small $\QHE.\Enc$};
\draw[->] (1.1,1)--(1.1,1.25);
\draw[->] (1.4,1.25)--(1.4,1);
\filldraw[fill=white] (1,-.25) rectangle (1.5,1);
\node at (1.25,.375) {$\advA_1$};

\node at (1.75,2.7) {$sk$};
\node at (2,1.7) {\small$q+1$};
\node at (1.75,.95) {\small $\cal M$};
\node at (1.75,.2) {\small $\cal E$};

\filldraw[fill=white] (2.5,.5) rectangle (3.5,2.75);
\node at (3,1.625) {$\Xi_{\QHE}^{\textsf{cpa},r}$};

\node at (3.75,2.7) {$sk$};
\node at (4,1.7) {\small$q+2$};
\node at (3.75,.95) {\small $\cal C$};

\filldraw[fill=white] (4.5,1.25) rectangle (5,2.75);
\node[rotate=90] at (4.75,2) {\small $\QHE.\Enc$};
\draw[->] (4.6,1)--(4.6,1.25);
\draw[->] (4.9,1.25)--(4.9,1);
\filldraw[fill=white] (4.5,-.25) rectangle (5,1);
\node at (4.75,.375) {$\advA_2$};

\node at (5.25,2.7) {$sk$};
\node at (5.85,1.7) {\small$q'+q+2$};

\node at (5.75,.375) {$r'$};

\end{tikzpicture}
};
\end{tikzpicture}
\caption{The quantum CPA experiment for symmetric-key systems (left) and bounded symmetric-key systems (right).
}\label{fig:symmetric-cpa}
\end{figure}

\begin{definition}[Quantum Indistinguishability under Chosen Plaintext Attack (q-IND-CPA) for Symmetric Key Schemes]
\label{def:q-ind-cpa-Symmetric}
A symmetric-key quantum homomorphic encryption scheme $\QHE$ is q-IND-CPA secure if for all quantum polynomial-time adversaries with oracle access, {$\advA=(\advA_1^{(1)},\dots,\advA_1^{(q+1)},\advA_2^{(1)},\dots,\advA_2^{(q'+1)})$}, there exists a negligible function $\eta$ such that: {
$$\Pr[\mathsf{SymK_{\advA,\QHE}^{cpa}}(\kappa)=1]\leq \frac{1}{2}+\eta(\kappa).$$}
\end{definition}

Similar to the case of public-key encryption (Sec.~\ref{sec:security-QHE}), it is straightforward to give the seemingly stronger variant of q-IND-CPA, \emph{q-IND-CPA-mult}, which is defined identically to the public-key case (Def.~\ref{def:q-IND-CPA-mult}) but with an adversary having access to an encryption oracle. However, just as in the public-key case, it turns out that these definitions are equivalent.

\begin{theorem}[Equivalence of q-IND-CPA and q-IND-CPA-mult in symmetric-key schemes] \label{thm:equiv-IND-CPA-sym}
Let $\QHE$ be a symmetric-key quantum homomorphic encryption scheme. Then $\QHE$ is q-IND-CPA if and only if $\QHE$ is q-IND-CPA-mult.
\end{theorem}
\noindent The proof of Thm.~\ref{thm:equiv-IND-CPA-sym} is virtually identical to that of Thm.~\ref{thm:equiv-IND-CPA}; the only difference is that adversaries now have access to an encryption oracle, but all techniques still hold, since an adversary with access to an encryption oracle can simulate another adversary with access to an encryption oracle.
}{}

\section{Key Update Rules for Stabilizer Elements}
\label{appendix:Key-update-rules-stabilizer}
We review here the key update rules for performing stabilizer/Clifford operators on quantum data encrypted with the quantum one-time pad~\cite{Got98}. \iftoggle{crypto}{We use the following set of gates to (redundantly) generate the Clifford group:
$$\xgate = \left[\begin{array}{cc} 0 & 1\\ 1 & 0\end{array}\right],
\,\zgate = \left[\begin{array}{cc} 1 & 0\\ 0 & -1\end{array}\right],
\,\pgate = \left[\begin{array}{cc} 1 & 0\\ 0 & i\end{array}\right],
\,\tgate = \left[\begin{array}{cc} 1 & 0\\ 0 & e^{i\pi/4}\end{array}\right],
\,\hgate = \frac{1}{\sqrt{2}}\left[\begin{array}{cc}1 & 1\\1 & -1\end{array}\right],
\,\cnot = \left[\begin{array}{cccc} 1 & 0 & 0 & 0\\ 0 & 1 & 0 & 0\\ 0 & 0 & 0 & 1\\ 0 & 0 & 1 & 0\end{array}\right].$$

These elements, along with single-qubit measurement and qubit preparation generate the class of stabilizer circuits. The key-update rules follow.

}{}

\begin{figure}[H]
\centering
\begin{tikzpicture}
\node at (-.2,0) {$\xgate^{f_{a,i}} \zgate^{f_{b,i}} \ket{\psi }$};

\draw (1,0)--(2,0);

\draw (2,0.04)--(3,0.04);
\draw (2,-0.04)--(3,-0.04);

\node at (2,0) {\meas};

\node at (3.25,0) {$c$};
\node at (1.25,-.25) {\small ${\cal X}_i$};

\node at (5,0) {$f_{a,i}\leftarrow f_{a,i}$};
\end{tikzpicture}
\caption{\label{fig:measurement} Protocol for measurement on the \th{i} wire:
Simply perform the measurement. The resulting bit, $c$, can be decrypted by applying $\xgate^{f_{a,i}}$ (The key $f_{b,i}$ is no longer relevant).}
\end{figure}

\begin{figure}[H]
\centering
\begin{tikzpicture}

\node at (-.4,0) {$\ket{0}$};
\draw (0,0) -- (1.5,0);
\node at (2.2,0) {$\xgate^0\zgate^0\ket{0}$};
\node at (1.25,-.25) {\small ${\cal X}_i$};

\node at (5,0) {$f_{a,i}\leftarrow 0,\quad f_{b,i}\leftarrow 0$};

\end{tikzpicture}
 \caption{\label{fig:aux-prep} Protocol for auxiliary qubit
preparation on a new wire, $i$: Initialize a new wire labelled ${\cal X}_i$ and new key-polynomials $f_{i,a}=f_{b,i}=0$.}
\end{figure}

\begin{figure}[H]
\centering
\begin{tikzpicture}
\node at (-1.15,0) {$\xgate^{f_{a,i}}\zgate^{f_{b,i}}\ket{\psi}$};
\draw (0,0) -- (2,0);
\node at (1.75,-.25) {\small ${\cal X}_i$};
\filldraw[fill=white] (.75,.25) rectangle (1.25,-.25);
\node at (1,0) {$\xgate$};
\node at (3.3,0) {$\xgate^{f_{a,i}}\zgate^{f_{b,i}}\xgate\ket{\psi}$};

\node at (7,0) {$f_{a,i}\leftarrow f_{a,i},\quad f_{b,i}\leftarrow f_{b,i}$};
\end{tikzpicture}
 \caption{\label{fig:X-gate} Protocol for an \xgate-gate on the \th{i} wire: Simply apply the $\xgate$-gate.}
\end{figure}

\begin{figure}[H]
\centering
\begin{tikzpicture}
\node at (-1.15,0) {$\xgate^{f_{a,i}}\zgate^{f_{b,i}}\ket{\psi}$};
\draw (0,0) -- (2,0);
\node at (1.75,-.25) {\small ${\cal X}_i$};
\filldraw[fill=white] (.75,.25) rectangle (1.25,-.25);
\node at (1,0) {$\zgate$};
\node at (3.3,0) {$\xgate^{f_{a,i}}\zgate^{f_{b,i}}\zgate\ket{\psi}$};

\node at (7,0) {$f_{a,i}\leftarrow f_{a,i},\quad f_{b,i}\leftarrow f_{b,i}$};
\end{tikzpicture}
\caption{\label{fig:Z-gate} Protocol for a \zgate-gate on the \th{i} wire: Simply apply the $\zgate$-gate.}
\end{figure}

\begin{figure}[H]
\centering
\begin{tikzpicture}
\node at (-1.15,0) {$\xgate^{f_{a,i}}\zgate^{f_{b,i}}\ket{\psi}$};
\draw (0,0) -- (2.5,0);
\node at (2.25,-.25) {\small ${\cal X}_i$};
\filldraw[fill=white] (1,.25) rectangle (1.5,-.25);
\node at (1.25,0) {$\hgate$};
\node at (3.8,0) {$\xgate^{f_{b,i}}\zgate^{f_{a,i}}\hgate\ket{\psi}$};

\node at (7.5,0) {$f_{a,i}\leftarrow f_{b,i},\quad f_{b,i}\leftarrow f_{a,i}$};
\end{tikzpicture}
 \caption{\label{fig:H-gate} Protocol for an \hgate-gate on the \th{i} wire: Apply the gate and swap the key-polynomials.}
\end{figure}

\begin{figure}[H]
\centering
\begin{tikzpicture}
\node at (-1.15,0) {$\xgate^{f_{a,i}}\zgate^{f_{b,i}}\ket{\psi}$};
\draw (0,0) -- (2.5,0);
\node at (2.25,-.25) {\small ${\cal X}_i$};
\filldraw[fill=white] (1,.25) rectangle (1.5,-.25);
\node at (1.25,0) {$\pgate$};
\node at (4.15,0) {$\xgate^{f_{a,i}}\zgate^{f_{b,i}\oplus f_{a,i}}\pgate\ket{\psi}$};

\node at (8.5,0) {$f_{a,i}\leftarrow f_{a,i},\quad f_{b,i}\leftarrow f_{b,i}\oplus f_{a,i}$};

\end{tikzpicture}
 \caption{\label{fig:P-gate} Protocol for a \pgate-gate on the \th{i} wire: Apply the gate and update $f_{b,i}$.}
\end{figure}

\begin{figure}[H]
\centering
\begin{tikzpicture}
\node at (-1.4,0) {$(\xgate^{f_{a,i}} \zgate^{f_{b,i}} \otimes \xgate^{f_{a,j}} \zgate^{f_{b,j}})\ket{\psi}$};

\node[yscale=2] at (1,0) {$\Big\{$};

\draw (1.25,.4)--(3,.4);
\draw (1.25,-.4)--(3,-.4);

\node at (2.125,.4) {\cntrl};
\draw (2.125,.4)--(2.125,-.4);
\node at (2.125,-.4) {\target};

\node[yscale=2] at (3.25,0) {$\Big\}$};

\node at (7,0) {$(\xgate^{f_{a,i}}  \zgate^{f_{b,i} \oplus f_{b,j}} \otimes \xgate^{f_{a,i} \oplus f_{a,j}} \zgate^{f_{b,j}})\cnot(\ket{\psi})$};

\node at (3.5,-1.25) {$f_{a,i}\leftarrow f_{a,i},\quad f_{b,i}\leftarrow f_{b,i}\oplus f_{b,j},\quad f_{a,j}\leftarrow f_{a,i}\oplus f_{a,j},\quad f_{b,j}\leftarrow f_{b,j}$};

\node at (1.5,.15) {\small ${\cal X}_i$};
\node at (1.5,-.65) {\small ${\cal X}_j$};
\end{tikzpicture}
\caption{\label{fig:CNOT-gate}Protocol for a \cnot-gate with control wire $i$ and target wire $j$: Apply the gate and update  $f_{b,i}$ and $f_{a,j}$.}
\end{figure}

We remark that an alternative gadget for the $\xgate$ is to update the $\xgate$-key as $f_{a,i}\rightarrow f_{a,i}\oplus 1$, rather than applying $\xgate$ to the quantum state. A similar alternative holds for the $\zgate$-gadget. However, these are the only two gates for which a key update is sufficient to affect the gate. Since we are actually carrying out quantum computations on encrypted quantum data --- in contrast to merely simulating a quantum computation --- all gates except the Pauli gates require actual quantum operations to be applied during evaluation.

\iftoggle{crypto}{
\section{Analysis of $\CL$}
\label{app:analysis-cl}

In this section, we analyze the properties of $\CL$, to prove Theorem \ref{thm:main-Clifford}.

\begin{theorem}
\label{thm:correctness:Clifford}
Let $\mathscr{S}$ be the class of Clifford circuits. Then $\CL$ is  $\mathscr{S}$-homomorphic.
\end{theorem}
\begin{proof}
This follows from the circuits in App.~\ref{appendix:Key-update-rules-stabilizer}, as well as the homomorphic property of $\mathsf{HE}$. In particular, since the decrypted values of the ciphertexts are correct (except with exponentially small probability), then {Eq.~}\eqref{eqn:C-homomorphism} is satisfied.
\end{proof}

\begin{theorem}
\label{thm:compactness:Clifford}
$\CL$ is compact.
\end{theorem}
\begin{proof}
Let $p$ be a polynomial such that the complexity of applying $\HE.\Dec$ to the output of $\HE.\Eval$ is at most $p(\kappa)$ --- such a polynomial exists by the compactness of $\HE$. Then decrypting a single qubit of the output of $\CL.\Eval$ has complexity at most $2p(\kappa)+2$, since we must decrypt two keys $a$ and $b$ and then apply $\xgate^a$ and $\zgate^b$, so $\CL$ is also compact.
\end{proof}

\begin{theorem}
\label{thm:security:Clifford}
{Assuming a classical fully homomorphic encryption scheme $\HE$ that is q-IND-CPA secure, the quantum homomorphic scheme $\CL$  is q-IND-CPA secure.}
\end{theorem}
\begin{proof}
The main part of this proof will be to show that the classical ciphertexts $\HE.\Enc_{pk}(a)$ and $\HE.\Enc_{pk}(b)$ give at most a negligible advantage. We will then see that without these classical ciphertexts, the quantum CPA Indistinguishability experiment is independent of $r$ from the  perspective of the adversary.

Let $\CL'$ be the quantum homomorphic encryption scheme with $\CL'.\KeyGen=\CL.\KeyGen$, $\CL'.\Eval=\CL.\Eval$, $\CL'.\Dec=\CL.\Dec$, and
\begin{align*}
\CL'.\Enc_{pk}(\rho) &=\sum_{a,b\in\{0,1\}}\frac{1}{4}\rho(\HE.\Enc_{pk}(0),\HE.\Enc_{pk}(0))\otimes (\xgate^a\zgate^b\rho\zgate^b\xgate^a)\\
&=\rho(\HE.\Enc_{pk}(0),\HE.\Enc_{pk}(0))\otimes \frac{1}{2}\mathbb{I}_2.
\end{align*}

Let $\advA=(\advA_1,\advA_2)$ be an adversary for $\mathsf{PubK}_{\advA,\CL}^{\textsf{cpa}}(\kappa)$.
We will define an adversary $\advA'=(\advA_1',\advA_2')$ for $\mathsf{PubK_{\advA',\HE}^{cpa\text{-}mult}}(\kappa)$. Essentially, $\advA'$ will simulate $\mathsf{PubK}_{\advA,\CL}^{\textsf{cpa-mult}}(\kappa)$, except that when it simulates $\Xi_{\CL}^{{\sf cpa},r}$, it will use $\Xi_{\HE}^{\textsf{cpa-mult},s}$ in place of $\HE.\Enc$, so that it will actually be running either $\Xi_{\CL}^{{\sf cpa},r}$ (if $s=1$) or $\Xi_{\CL'}^{{\sf cpa},r}$ (if $s=0$) (see Fig.~\ref{fig:cl-security}).

\begin{figure}[t]
\centering
\begin{tikzpicture}
\draw (0,2.54)--(5,2.54);
\draw (0,2.46)--(5,2.46);
		\draw (4.5,2.04)--(5,2.04);
		\draw (4.5,1.96)--(5,1.96);
\draw (2,1.54)--(6.25,1.54);	
\draw (2,1.46)--(6.25,1.46);
\draw (0,.79)--(1,.79);	\draw (1,.75)--(6.5,.75);	\draw(6.75,.79)--(7.25,.79);
\draw (0,.71)--(1,.71);					\draw(6.75,.71)--(7.25,.71);
\draw (0,0.04)--(1,0.04);	\draw (1,0)--(6.5,0);
\draw (0,-.04)--(1,-.04);

\filldraw[fill=white] (-.5,-.25) rectangle (0,2.75);
\node[rotate=90] at (-.25,1.25) {$\KeyGen$};

\node at (.25,2.7) {$pk$};
\node at (.25,.95) {$pk$};
\node at (.35,.2) {$evk$};

\filldraw[fill=white] (.75,-.25) rectangle (1.25,1);
\node at (1,.375) {$\advA_1$};

\node at (1.5,.95) {\small $\cal M$};
\node at (1.5,.2) {\small $\cal E$};

\filldraw[fill=white] (2,1.75) rectangle (2.5,1.25);
\node at (2.25,1.5) {$\maxmix$};
\filldraw[fill=white] (2,1) rectangle (2.5,.5);
\node at (2.25,.75) {$\Psi^{\bar{r}}$};

\node at (2.8,1.7) {$a,b$};
\node at (2.75,.95) {\small $\cal M$};

\filldraw[fill=white] (3.25,1.75) rectangle (4.25,.5);
\node at (3.75,.75) {$\xgate^a\zgate^b$};

\node at (4.15,2) {$0,0$};
\node at (4.55,1.7) {$a,b$};
\node at (4.5,.95) {\small $\cal X$};

\filldraw[fill=white] (5,2.75) rectangle (5.5,1.25);
\node[rotate=90] at (5.25,2) {\small$\Xi_{\HE}^{\textsf{cpa-mult},s}$};

\node at (5.75,1.7) {$c$};

\filldraw[fill=white] (6.25,-.25) rectangle (6.75,1.75);
\node at (6.5,.75) {$\advA_2$};

\node at (7.5,.75) {$r'$};

\draw (7.75,.79)--(8.25,.79);
\draw (7.75,.71)--(8.25,.71);
\node at (9.5,.75) {$s'=\bar{r}\oplus r'$};

\draw[dashed] (6.15,-.35) rectangle (7.75,1.85);
\draw[dashed] (.65,-.35) rectangle (4.9,2.25);
\node at (2.775,-.6) {$\advA_1'$};
\node at (6.95,-.6) {$\advA_2'$};

\draw[dashed] (1.9,.4) rectangle (5.6,2.85);
\node at (3.75,3.1) {$\Xi_{\CL}^{{\sf cpa},r}$/$\Xi_{\CL'}^{{\sf cpa},r}$};
\end{tikzpicture}
\caption{The new adversary $\advA'$ for $\mathsf{PubK_{\advA',\HE}^{\textsf{cpa-mult}}}$, where $\Psi$ is the channel that replaces the system with $\ket{0}\bra{0}$. Here $\maxmix$ denotes the channel that outputs a completely mixed state, or equivalently, a uniform random variable. If $s=1$, the middle dashed box is $\Xi_{\CL}^{{\sf cpa},r}$, and if $s=0$, $\Xi_{\CL'}^{{\sf cpa},r}$.}\label{fig:cl-security}
\end{figure}

\begin{description}
\item[$\advA_1'(pk,evk)$:] Run $\advA_1(pk,evk)$ to get a state $\rho^{\cal ME}$. Choose a uniform random bit $r$. If $r=0$, discard the $\mathcal{M}$ subsystem and replace it with the state $\ket{0}\bra{0}$. Choose uniform random bits $a$ and $b$, and apply $\QEnc_{a,b}$, the quantum one-time pad, to $\mathcal{M}$, relabelling the resulting system by $\cal X$. Input $(a,b)$ and $(0,0)$ to $\Xi_{\HE}^{\textsf{cpa-mult},s}$.
\item[$\advA_2'$:] Run $\advA_2$ to get a bit $r'$. Output $1$ if $r=r'$ and $0$ otherwise.
\end{description}
We now compute the probability that $\advA'$ correctly guesses $s$, which we know must be at most $\frac{1}{2}+\eta(\kappa)$ for some negligible function, since $\HE$ is q-IND-CPA. If $s=1$, then $\advA'$ is simulating $\mathsf{PubK}_{\advA,\CL}^{\mathsf{cpa}}$, so the probability that $r'=r$ (and thus that $s'=1=s$) is $\Pr[\mathsf{PubK_{\advA,\CL}^{cpa}}(\kappa)=1]$.

On the other hand, if $s=0$, then $\advA_2$ gets encryptions of $0$ rather than $\HE.\Enc(a),\HE.\Enc(b)$,~so $\advA'$ is simulating $\mathsf{PubK}_{\advA,\CL'}^{\mathsf{cpa}}$, so the probability that $r\neq r'$, and thus $s'=0=s$ is \mbox{$\Pr[\mathsf{PubK_{\advA,\CL'}^{cpa}}(\kappa)=0]$}.\looseness = -1

Then since the total probability that $s=s'$ is at most $\frac{1}{2}+\eta(\kappa)$, we have:
\begin{align}
\frac{1}{2}\Pr[\mathsf{PubK_{\advA,\CL}^{cpa}}(\kappa)=1]+\frac{1}{2}\Pr[\mathsf{PubK_{\advA,\CL'}^{cpa}}(\kappa)=0] &\leq \frac{1}{2}+\eta(\kappa)\nonumber\\
\Pr[\mathsf{PubK_{\advA,\CL}^{cpa}}(\kappa)=1]+1-\Pr[\mathsf{PubK_{\advA,\CL'}^{cpa}}(\kappa)=1] &\leq 1+2\eta(\kappa)\nonumber\\
\Pr[\mathsf{PubK_{\advA,\CL}^{cpa}}(\kappa)=1]-\Pr[\mathsf{PubK_{\advA,\CL'}^{cpa}}(\kappa)=1] &\leq 2\eta(\kappa).\label{eq:negl}
\end{align}

We complete the proof by noting that when $s=0$, since $c=(\HE.\Enc_{pk}(0),\HE.\Enc_{pk}(0))$, it is independent of $a,b$ (see
 Fig.~\ref{fig:cl-security-s0}).
\begin{figure}[H]
\centering
\begin{tikzpicture}
\draw (0,2.54)--(5,2.54);
\draw (0,2.46)--(5,2.46);
		\draw (4.5,2.04)--(6.5,2.04);
		\draw (4.5,1.96)--(6.5,1.96);
\draw (2,1.54)--(4,1.54);	
\draw (2,1.46)--(4,1.46);
\draw (0,.79)--(1,.79);	\draw (1,.75)--(6.5,.75);	\draw(6.75,.79)--(7.25,.79);
\draw (0,.71)--(1,.71);					\draw(6.75,.71)--(7.25,.71);
\draw (0,0.04)--(1,0.04);	\draw (1,0)--(6.5,0);
\draw (0,-.04)--(1,-.04);

\filldraw[fill=white] (-.5,-.25) rectangle (0,2.75);
\node[rotate=90] at (-.25,1.25) {$\KeyGen$};

\node at (.25,2.7) {$pk$};
\node at (.25,.95) {$pk$};
\node at (.35,.2) {$evk$};

\filldraw[fill=white] (.75,-.25) rectangle (1.25,1);
\node at (1,.375) {$\advA_1$};

\node at (1.5,.95) {\small $\cal M$};
\node at (1.5,.2) {\small $\cal E$};

\filldraw[fill=white] (2,1.75) rectangle (2.5,1.25);
\node at (2.25,1.5) {$\maxmix$};
\filldraw[fill=white] (2,1) rectangle (2.5,.5);
\node at (2.25,.75) {$\Psi^{\bar r}$};

\node at (2.8,1.7) {$a,b$};
\node at (2.75,.95) {\small $\cal M$};

\filldraw[fill=white] (3.25,1.75) rectangle (4.25,.5);
\node at (3.75,.75) {$\xgate^a\zgate^b$};

\node at (4.15,2) {$0,0$};
\node at (4.5,.95) {\small $\cal X$};

\filldraw[fill=white] (5,2.85) rectangle (5.5,1.65);
\node[rotate=90] at (5.25,2.25) {\small$\HE.\Enc$};

\node at (5.75,2.2) {$c$};

\filldraw[fill=white] (6.25,-.25) rectangle (6.75,2.25);
\node at (6.5,.75) {$\advA_2$};

\node at (7.5,.75) {$r'$};

\draw[dashed] (1.9,1.85) -- (4.75,1.85)--(4.75,.4)--(3.15,.4)--(3.15,1.15)--(1.9,1.15)--(1.9,1.85);
\end{tikzpicture}
\caption{When $s=0$, $\advA'$ is simulating $\mathsf{PubK_{\advA,\CL'}^{cpa}}$. In this case, $c$ is independent of $a,b$, and so the only dependence on $a,b$ is the quantum-one-time-pad encrypted message in~$\cal X$.}\label{fig:cl-security-s0}
\end{figure}

\noindent Then from the perspective of $\advA_2$, since $a,b$ is uniform random, the system $\cal X$ just contains the completely mixed state $\maxmix$ (see Fig.~\ref{fig:cl-security-last}).

\begin{figure}[H]
\centering
\begin{tikzpicture}
\draw (0,2.54)--(5,2.54);
\draw (0,2.46)--(5,2.46);
		\draw (4.5,2.04)--(6.5,2.04);
		\draw (4.5,1.96)--(6.5,1.96);
\draw (0,.79)--(1,.79);	\draw (1,.75)--(6.5,.75);	\draw(6.75,.79)--(7.25,.79);
\draw (0,.71)--(1,.71);					\draw(6.75,.71)--(7.25,.71);
\draw (0,0.04)--(1,0.04);	\draw (1,0)--(6.5,0);
\draw (0,-.04)--(1,-.04);

\filldraw[fill=white] (-.5,-.25) rectangle (0,2.75);
\node[rotate=90] at (-.25,1.25) {$\KeyGen$};

\node at (.25,2.7) {$pk$};
\node at (.25,.95) {$pk$};
\node at (.35,.2) {$evk$};

\filldraw[fill=white] (.75,-.25) rectangle (1.25,1);
\node at (1,.375) {$\advA_1$};

\node at (1.5,.95) {\small $\cal M$};
\node at (1.5,.2) {\small $\cal E$};

\filldraw[fill=white] (2,1) rectangle (2.5,.5);
\node at (2.25,.75) {$\Psi^{\bar r}$};

\node at (2.75,.95) {\small $\cal M$};

\fill[fill=white] (3.5,.8) rectangle (4.25,.7);

\node at (3.5,.75) {\meas};

\filldraw[fill=white] (4.25,1) rectangle (4.75,.5);
\node at (4.5,.75) {$\maxmix$};

\node at (4.15,2) {$0,0$};
\node at (5,.95) {\small $\cal X$};

\filldraw[fill=white] (5,2.85) rectangle (5.5,1.65);
\node[rotate=90] at (5.25,2.25) {\small$\HE.\Enc$};

\node at (5.75,2.2) {$c$};

\filldraw[fill=white] (6.25,-.25) rectangle (6.75,2.25);
\node at (6.5,.75) {$\advA_2$};

\node at (7.5,.75) {$r'$};

\draw[dashed] (3.05,1.1) rectangle (5.25,.4);
\end{tikzpicture}

\caption{The circuit from Figure \ref{fig:cl-security-s0} is equivalent to the above circuit, in which the system in $\mathcal{M}$ is replaced with the completely mixed state. Then from the perspective of $\advA$, the experiment is independent of $r$.}\label{fig:cl-security-last}
\end{figure}

\noindent Since the experiment $\mathsf{PubK_{\advA,\CL'}^{cpa}}$ is independent of $r$ from the perspective of $\advA$, $\Pr[\mathsf{PubK_{advA,\CL'}^{cpa}}(\kappa)=1]=\frac{1}{2}$. Combining this with Equation \eqref{eq:negl}, we get
\begin{equation*}
\Pr[\mathsf{PubK_{\advA,\CL}^{cpa}}(\kappa)=1]\leq \frac{1}{2}+2\eta(\kappa),
\end{equation*}
which completes the proof, since $2\eta$ is still a negligible function.
\end{proof}

}{}

\section{Correctness of the $\tgate$-gate Gadget}
\label{appendix:Correctness-T-gate}

We give below a step-by-step proof of the correctness of the
\tgate-gate protocol\iftoggle{crypto}{s}{ from Fig.~\ref{fig:T-gate-EPR}}. The basic
building block is the circuit identity for an \xgate-teleportation
from~\cite{ZLC00}, which we re-derive here. Also of relevance to
this work are the techniques developed by Childs, Leung, and
Nielsen~\cite{CLN05} to manipulate circuits that produce an output
that is correct \emph{up to known Pauli corrections}.

We will make use of the following identities which all hold up to an
irrelevant global phase: $ \xgate \zgate = \zgate \xgate$, $\pgate
\zgate = \zgate \pgate$, $\pgate \xgate = \xgate \zgate \pgate$,
$\tgate \zgate = \zgate \tgate$, $\tgate \xgate = \xgate \zgate
\pgate \tgate$,  $\pgate^2  = \zgate$ and $\pgate^{a\oplus b} =
\zgate^{a\cdot b} \pgate^{a + b}$ (for $a, b \in \{0,1\}$).

\begin{enumerate}
\item  Our first circuit identity (Fig.~\ref{fig:circuit-proof:1}) swaps a qubit $\ket{\psi}$ with the
state $\ket{+}$ and is easy to verify.

\begin{figure}[H]
\centerline{
 \Qcircuit @C=1em @R=1em {
\lstick{\ket{\psi}} & \qw &\targ      &  \ctrl{1} & \qw   &  \rstick{\ket{+}}     \\
\lstick{\ket{+}}    & \qw &\ctrl{-1}  &   \targ   & \qw &
\rstick{\ket{\psi}}&   }
 }
 \caption{ Circuit identity (easy to verify).}
 \label{fig:circuit-proof:1}
\end{figure}

\item We can measure the top qubit in the above circuit
and classically control the output correction (Fig.~\ref{fig:circuit-proof:2}). We have thus
re-derived the circuit corresponding to the ``\xgate-teleportation''
of~\cite{ZLC00}.

\begin{figure}[H]
 \centerline{
 \Qcircuit @C=1em @R=1em {
\lstick{\ket{\psi}} & \qw &\targ      &  \meter & \cw   &  \rstick{c}      \\
\lstick{\ket{+}}    & \qw &\ctrl{-1}  &   \qw  & \qw &
\rstick{\xgate^c\ket{\psi}}&   }
 }
 \caption{$\xgate$-teleportation}
 \label{fig:circuit-proof:2}
\end{figure}

\item Let the input be $\tgate  \xgate^a \zgate^b \ket{\psi}$, and add two gates on the auxiliary wire, $\pgate^a$ and $\zgate^k$ (Fig.~\ref{fig:circuit-proof:3}). Using the fact that  $\pgate$ and $\zgate$ commute with
control, and applying identities given above, we get as output (using $\tgate\xgate = \xgate\zgate\pgate\tgate$):
\begin{equation}
\pgate^a\zgate^k\xgate^c \tgate \xgate^a \zgate^b  \ket{\psi}  = \pgate^a\zgate^k\xgate^c  \xgate^a \zgate^{a \oplus b} \pgate^a \tgate   \ket{\psi}.
\end{equation}

This is equal to (simplifying, then pushing the first $\pgate$ to the end):
\begin{eqnarray*}
 \pgate^a \xgate^{a \oplus c} \zgate^{a \oplus b \oplus k} \pgate^a \tgate \ket{\psi} &=&  \xgate^{a\oplus c}\zgate^{(a\oplus c)a}\pgate^a\zgate^{a\oplus b\oplus k}\pgate^a\tgate\ket{\psi}\\
&=& \xgate^{a\oplus c}\zgate^{a^2\oplus c\cdot a\oplus a\oplus b\oplus k}\pgate^{2a}\tgate\ket{\psi}\\
&=& \xgate^{a\oplus c}\zgate^{c\cdot a\oplus a\oplus b\oplus k}\tgate\ket{\psi}\quad\mbox{since $a^2=a$ and $\pgate^2=\zgate$}.\\
 \end{eqnarray*}

\begin{figure}[H]
\centerline{
 \Qcircuit @C=1em @R=1em {
\lstick{\xgate^a \zgate^b\ket{\psi}} & \qw & \gate{\tgate} & \qw &\targ     &  \meter &  \cw  &  \rstick{c}      \\
\lstick{\ket{+}}    & \qw& \gate{\pgate^a} & \gate{\zgate^k} & \ctrl{-1}
 & \qw  &\qw & \rstick{\xgate^{a \oplus c} \zgate^{a \oplus b \oplus k \oplus a\cdot c } \tgate \ket{\psi}}& }
 }
 \caption{Final circuit for $\tgate$ gate.}
 \label{fig:circuit-proof:3}
\end{figure}

\end{enumerate}

\iftoggle{crypto}{
\section{Analysis of $\EPR$}
\label{app:analysis-epr}
We now analyse the various properties of $\EPR$, in order to prove Thm.~\ref{thm:main-EPR}.
Since the scheme $\EPR$ uses the same $\KeyGen$ and $\Enc$ procedures as~$\CL$, the following theorem follows from Thm.~\ref{thm:security:Clifford}.

\begin{theorem}
\label{thm:security:QHE}
If $\HE$ is q-IND-CPA secure, then $\EPR$ is q-IND-CPA secure.
\end{theorem}

The next theorem shows the  homomorphic property for all circuits (recall that this property is independent of compactness).

\begin{theorem}
\label{thm:correctness:QHE}
Let $\classS$ be the class of all quantum circuits. Then $\EPR$ is  $\classS$-homomorphic.
\end{theorem}
The proof follows from the circuits in App.~\ref{appendix:Key-update-rules-stabilizer}, Fig.~\ref{fig:T-gadget-EPR}, as well as the homomorphic property of~$\mathsf{HE}$.

Since the complexity of the decryption procedure depends on $R$, the number of $\tgate$-gates in the circuit, it is clear that the scheme $\EPR$ is not compact. However, by analysing the circuit's dependence on $R$, we can see that for a very large class of quantum circuits, $\EPR$ is non-trivially quasi-compact. The following theorem is immediate from the decryption procedure.

\begin{theorem}\label{cor:EPR-dec-complexity}
Let $p$ be a polynomial such that $\HE.\Dec$ has complexity $O(p(\kappa))$. Then {the decryption procedure} $\EPR.\Dec$ has complexity $O(R^2+Rp(\kappa)+mp(\kappa)+mR)$.
\end{theorem}

\noindent Thus, the dependence of the complexity of $\EPR.\Dec$ on the evaluated circuit $\mathsf{C}$ is $R^2$:

\begin{corollary}
Let $R(\mathsf{C})$ denote the number of $\tgate$-gates in a circuit $\mathsf{C}$. Then $\EPR$ is $R^2$-quasi-compact.
\end{corollary}

This beats the compactness of the trivial scheme for all circuits $\mathsf{C}$ such that the number of $\tgate$-gates is less than the squareroot of the number of gates; that is $R\ll \sqrt{G}$.

}{}

\iftoggle{crypto}{
\section{Analysis of $\AUX$}
\label{app:analysis-aux}

We now analyse the various properties of $\AUX$ in order to prove Thm.~\ref{thm:main-AUX}. Consider a layered quantum circuit $\mathsf{C}$ with $L$ layers of $\tgate$-gates.
To simplify the analysis, we assume that the ordering of gates $\mathsf{g}_1,\dots,\mathsf{g}_G$ has the property that if $\mathsf{g}_i$ is a $\tgate$-gate in level $\ell$, and $\mathsf{g}_j$ is a $\tgate$-gate in level $\ell+1$, then $i<j$; that is, we completely evaluate level $\ell$ before we begin to evaluate level $\ell+1$.

\begin{lemma}\label{lem:terms}
Let $f_{a,i}$ be a key-polynomial going into the \th{\ell} layer of $\tgate$-gates. Then $f_{a,i}$ is a sum of terms in $T_\ell$.
\end{lemma}
\begin{proof}
We prove this statement by induction on $\ell$. Before any gates have been applied, the key-polynomial are $f_{a,i}=a_i$ and $f_{b,i}=b_i$ for $i=1,\dots,n$. We can easily see from the update rules that applying Clifford gates results in keys of the form $f$ or $f+f'$, where $f$ and $f'$ were previous keys. Thus, after a Clifford circuit has been applied, all key-polynomial are sums of terms from $\{a_1,\dots,a_n,b_1,\dots,b_n\}=T_1$.

Let $f_{a,1},\dots,f_{a,n},f_{b,1},\dots,f_{b,n}$ be the key-polynomials going into the \th{\ell} layer, and suppose they are sums of terms in $T_{\ell}$. Let $f'_{a,1},\dots,f'_{a,n},f'_{b,1},\dots,f'_{b,n}$ be the key-polynomials right after the \th{\ell} layer of $\tgate$-gates has been applied. If no $\tgate$ is applied on the \th{i} wire, then $f_{a,i}'=f_{a,i}$ and $f_{b,i}'=f_{b,i}$, so $f_{a,i}',f_{b,i}'$ are both sums of terms in $T_{\ell}\subset T_{\ell+1}$. Suppose on the other hand that we apply a $\tgate$-gate to the \th{i} wire at level $\ell$.
From the $\tgate$-gadget (Fig.~\ref{fig:T-gate-AUX}), we  see that after applying a $\tgate$ to the \th{i} wire, we have new keys $f_{a,i}'=f_{a,i}\oplus c$ for a known constant $c$, so $f_{a,i}'$ is a sum of terms in $T_\ell\subset T_{\ell+1}$; and $f_{b,i}'=(1\oplus c)f_{a,i}\oplus f_{b,i}\oplus k$, where $k$ is the auxiliary state key of the auxiliary state used to implement the gadget. If $f_{a,i}=t_1\oplus \dots\oplus t_r$, for $t_1,\dots,t_r\in T_\ell$, then we construct $\ket{+_{f_{a,i},k}}$ from auxiliary states $\ket{+_{t_1,k_1}},\dots,\ket{+_{t_r,k_r}}$ for some $k_1,\dots,k_r\in\{k^{(\ell)}_{q,i}\}_{q=1}^{|T_{\ell}|}\subset T_{\ell+1}$, so we have
$k=\bigoplus_{j=1}^rk_j\oplus \bigoplus_{j=2}^rc_jt_j\oplus \bigoplus_{j=1}^r\bigoplus_{j'=1}^{j-1}t_jt_{j'}$ for known $c_2,\dots,c_r$, which is the sum of terms in $T_{\ell+1}$, since $t_1,\dots,t_r\in T_{\ell}$. Thus, $f_{b,i}'$ is the sum of terms in $T_{\ell+1}$.

Thus, after applying the \th{\ell} layer of $\tgate$-gates, all key-polynomials are sums of terms from $T_{\ell+1}$. To complete the proof, we simply observe again that Clifford circuits act additively on the keys, and so do not introduce new terms, so just before the \th{(\ell+1)} layer of $\tgate$-gates, the key-polynomials are still sums of terms in $T_{\ell+1}$.
\end{proof}

\noindent The bottleneck in this scheme is the number of auxiliary states required:
\begin{lemma}
The number of auxiliary states output by $\AUX.\KeyGen(1^\kappa,1^n)$ grows in $n$ as $O(n^{2^{L-1}+1})$.
\end{lemma}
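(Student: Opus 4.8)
The plan is to reduce the lemma to controlling the cardinalities $|T_1|,\dots,|T_L|$ from the key generation procedure. By construction, the quantum auxiliary states output by $\AUX.\KeyGen(1^\kappa,1^n)$ are precisely the single-qubit states $\ket{+_{s_i,k_i}}$ appearing in $\sigma_{aux}^{a,b,k}=\sigma(s^{(1)}(a,b)^{*n},k^{(1)})\otimes\dots\otimes\sigma(s^{(L)}(\dots)^{*n},k^{(L)})$: for each layer $\ell\in\{1,\dots,L\}$ and each of the $n$ wires we supply one copy of $\ket{+_{t,k}}$ for every monomial $t\in T_\ell$, so the total number of auxiliary states is $\sum_{\ell=1}^{L} n\,|T_\ell|$. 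Since $L$ is a constant and the $|T_\ell|$ are nondecreasing in $\ell$, this is $O\!\left(n\,|T_L|\right)$, so it suffices to prove $|T_L|=O(n^{2^{L-1}})$.

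First I would read off a recursion for $|T_\ell|$ directly from the definition $T_\ell=T_{\ell-1}\cup\{tt':t,t'\in T_{\ell-1},\,t\neq t'\}\cup\{k^{(\ell-1)}_{1,i},\dots,k^{(\ell-1)}_{|T_{\ell-1}|,i}\}_{i=1}^n$, namely
\[
|T_\ell|\;\le\;|T_{\ell-1}|+\binom{|T_{\ell-1}|}{2}+n\,|T_{\ell-1}|,
\]
with base value $|T_1|=2n$. (An upper bound suffices, so possible overlaps among the three sets of monomials only help.) Next I would prove by induction on $\ell$ that $|T_\ell|\le 2|T_{\ell-1}|^2$ for $\ell\ge 2$: the point is that monotonicity gives $|T_{\ell-1}|\ge|T_1|=2n$, hence $n\,|T_{\ell-1}|\le\tfrac12|T_{\ell-1}|^2$, and combining this with $\binom{|T_{\ell-1}|}{2}\le\tfrac12|T_{\ell-1}|^2$ and $|T_{\ell-1}|\le|T_{\ell-1}|^2$ turns the recursion into $|T_\ell|\le 2|T_{\ell-1}|^2$. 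Unrolling from $|T_1|=2n$ gives $|T_\ell|\le 2^{2^{\ell}-1}\,n^{2^{\ell-1}}$, which for the fixed constant $\ell=L$ is $O(n^{2^{L-1}})$; plugging this into the count $\sum_{\ell=1}^L n\,|T_\ell|=O(n\,|T_L|)$ from the first paragraph yields $O(n^{2^{L-1}+1})$.

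The only subtlety, and the step I would be most careful about, is making sure the ``$+\,n\,|T_{\ell-1}|$'' contribution (the fresh measurement-key variables introduced when auxiliary states are added together) does not raise the exponent of the polynomial growth --- this is exactly what the bound $|T_{\ell-1}|\ge 2n$ takes care of, so that the quadratic cross-term $\binom{|T_{\ell-1}|}{2}$ dominates. Everything else is bookkeeping, and the accumulated constant factors ($2^{2^{L-1}-1}$ from unrolling, and the factor $L$ from summing over layers) are harmless because $L$ is an absolute constant; one should also note explicitly that the count is independent of the security parameter $\kappa$, since each auxiliary state is a single qubit.
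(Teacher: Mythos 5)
Your proof is correct and follows essentially the same route as the paper: count the qubits in $\sigma_{aux}^{a,b,k}$ as $n\sum_{\ell=1}^L|T_\ell|$, read off the recursion $|T_\ell|=|T_{\ell-1}|+\binom{|T_{\ell-1}|}{2}+n|T_{\ell-1}|$ with $|T_1|=2n$, bound it by $|T_\ell|\le c|T_{\ell-1}|^2$, and unroll to get $|T_\ell|=O(n^{2^{\ell-1}})$. You are merely more explicit than the paper about the constant ($c=2$, justified via $|T_{\ell-1}|\ge 2n$) and about the unrolled constant factor, which is a harmless refinement.
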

\begin{proof}
The number of qubits encoded in $\sigma_{aux}^{a,b,k}$ is
$$|k^{(1)}|+|k^{(2)}|+\dots+|k^{(L)}|=n|T_1|+n|T_2|+\dots+n|T_L|=n\sum_{\ell=1}^{L}|T_\ell|.$$
From the definition of $T_{\ell}$, we  see that:
$$|T_1|=2n,\quad\mbox{and for $\ell>1$,}\quad |T_\ell|=|T_{\ell-1}|+\binom{|T_{\ell-1}|}{2}+n|T_{\ell-1}|.$$
So certainly for all $\ell>1$, $|T_{\ell}|\leq c|T_{\ell-1}|^2$ for some constant $c$, and thus $|T_{\ell}|\leq c^{\ell-1}(2n)^{2^{\ell-1}}\in O(n^{2^{\ell-1}})$. Thus $n\sum_{\ell=1}^L|T_{\ell}|\in O(n^{2^{L-1}+1})$.
\end{proof}
\noindent We thus have the following theorem:
\begin{theorem}
Let $\classS_n$ be the class of all quantum circuits on $n$ wires with $\tgate$-depth at most~$L$, and let $\classS=\{\classS_n\}_{n\in\mathbb{N}}$. Then $\AUX$ is $\classS$-homomorphic and compact.
\end{theorem}

We now consider the security of the scheme.

\begin{theorem}
\label{thm:security:AUX}
If $\HE$ is q-IND-CPA secure, then $\AUX$ is q-IND-CPA secure.
\end{theorem}

We will prove Thm.~\ref{thm:security:AUX} in several parts. To begin, we will show that an adversary that interacts with $\AUX.\KeyGen$ can't do much better than an adversary that interacts instead with an altered version of $\AUX.\KeyGen$, $\KeyGen'$, in which every classical encryption has been replaced with $\HE.\Enc_{pk}(0)$ (Lemma \ref{lem:aux-security-1}). Then we will be able to complete the proof by showing that an adversary interacting with $\KeyGen'$ instead of $\AUX.\KeyGen$ can't win the q-IND-CPA experiment for $\AUX$ with probability better than $\frac{1}{2}$.

\begin{lemma}\label{lem:aux-security-1}
Define a QHE scheme $\AUX'$ such that $\AUX'.\KeyGen(1^{\kappa},1^n)=\KeyGen'(1^\kappa,1^n)$, where $\KeyGen'$ behaves identically to $\AUX.\KeyGen$, except it replaces every classical encryption $\HE.\Enc_{pk}(x)$ with $\HE.\Enc_{pk}(0)$. Let $\AUX'.\Enc=\AUX.\Enc$, $\AUX'.\Dec=\AUX.\Dec$ and $\AUX'.\Eval=\AUX.\Eval$. Then for any quantum polynomial-time adversary $\advA=(\advA_1,\advA_2)$ with encryption oracle access, there exists a negligible function $\eta$ such that:
$$\Pr[\mathsf{SymK_{\advA,\AUX}^{cpa}}(\kappa)=1]-\Pr[\mathsf{SymK_{\advA,\AUX'}^{cpa}}(\kappa)=1]\leq \eta(\kappa).$$
Thus, we can restrict our attention to adversaries that make no use of the classical encryptions, since they add at most a negligible advantage.
\end{lemma}
\begin{proof}
We will define an adversary $\advA'=(\advA_1',\advA_2')$ for the quantum CPA-mult indistinguishability experiment for $\HE$, $\mathsf{PubK_{\advA',\HE}^{cpa\text{-}mult}}(\kappa)$.
Essentially, $\advA'$ will run $\AUX.\KeyGen$, except it will use the challenger $\Xi_{\HE}^{\textsf{cpa-mult}}$ in place of $\HE.\Enc_{pk}$, so that it is either running $\AUX.\KeyGen$ or $\KeyGen'$. It will then simulate the $\sf SymK$ experiment, and if $\advA$ wins, it will guess that it ran the original version of $\AUX.\KeyGen$, and otherwise it will guess that it ran $\KeyGen'$.

\begin{description}
\item[$\advA_1'(pk,evk)$:] $\advA_1'$ chooses uniform random bit strings $a,b\in\{0,1\}^n$ and $k\in\{0,1\}^N$, where $N=n|T_1|+\dots+n|T_L|$, and gives $m_0=\mathbf{0}=0^{2n+N}$ and $m_1=(a,b,k)$ to the challenger $\Xi_{\HE}^{\textsf{cpa-mult}}$, which outputs either $c_1=\HE.\Enc_{pk}(a,b,k)$, or $c_0=\HE.\Enc_{pk}(\mathbf{0})$.
\item[$\advA_2'(c)$:] $\advA_2'$ computes $\sigma_{aux}^{a,b,k}$ and gives $\sigma_{aux}^{a,b,k},pk,evk,c$ to $\advA_1$. $\advA_1$ may make several oracle calls, which $\advA_2'$ can simulate, because it has $a,b$ and so can run $\AUX.\Enc$. When $\advA_1$ outputs a message to the challenger, $\advA_2'$ samples a random bit $r$, and runs $\Xi_{\AUX}^{\mathsf{cpa},r}$, which it can simulate, since it has $a,b$, and so can run $\AUX.\Enc$. $\advA_2'$ then gives the challenge to $\advA_2$, and if $\advA_2$ outputs $r$, $\advA_2'$ outputs $1$, and otherwise, $\advA_2'$ outputs $0$.
\end{description}
We now calculate the probability that $\advA'$ correctly guesses which of $c_0$ and $c_1$ it received from the challenger, which we know must be less than $\frac{1}{2}+\eta(\kappa+n)$ for some negligible function, since $\HE$ is q-IND-CPA, $\kappa+n$ is the security parameter given to $\HE.\KeyGen$, and $|m_0|=|m_1|=2n+N = O(\mathrm{poly}(n))=O(\mathrm{poly}(n+\kappa))$. If $\advA'$ received $c_0$, then it acted as $\KeyGen'$, whereas if it received $c_1$, it acted as $\AUX.\KeyGen$. In the former case, the probability that $\advA'$ correctly guesses 0 is the probability that $\advA$ loses the $\sf SymK$ experiment when it interacts with $\AUX'$, $\Pr[\mathsf{SymK_{\advA,\AUX'}^{cpa}}(\kappa)=0]$. In the latter case, the probability that $\advA'$ correctly guesses 1 is the probability that $\advA$ wins the $\sf SymK$ experiment when it interacts with $\AUX$, $\Pr[\mathsf{SymK_{\advA,\AUX}^{cpa}}(\kappa)=1]$. Thus, since $\HE$ is q-IND-CPA, there exists a negligible function $\eta'$ such that
\begin{align*}
\frac{1}{2}\Pr[\mathsf{SymK_{\advA,\AUX'}^{cpa}}(\kappa)=0]+\frac{1}{2}\Pr[\mathsf{SymK_{\advA,\AUX}^{cpa}}(\kappa)=1] & \leq \frac{1}{2}+\eta'(\kappa)\\
1-\Pr[\mathsf{SymK_{\advA,\AUX'}^{cpa}}(\kappa)=1]+\Pr[\mathsf{SymK_{\advA,\AUX}^{cpa}}(\kappa)=1] &\leq 1+2\eta'(\kappa)\\
\Pr[\mathsf{SymK_{\advA,\AUX}^{cpa}}(\kappa)=1]-\Pr[\mathsf{SymK_{\advA,\AUX'}^{cpa}}(\kappa)=1] &\leq 2\eta'(\kappa).
\end{align*}
Setting $\eta=2\eta'$ completes the proof.
\end{proof}

The next lemma shows that the output of $\KeyGen'$ is actually $(pk,evk,\maxmix)$, which is independent of $a,b,k$.
\begin{lemma}\label{lem:aux-security-2}
Let $N=n|T_1|+\dots+n|T_L|$. For any $a,b\in\{0,1\}^{n}$,
$\displaystyle\sum_{k\in\{0,1\}^{N}}\!\!\sigma_{aux}^{a,b,k}=\frac{1}{2^{N}}\mathbb{I}_{2^{N}}$.
\end{lemma}
\begin{proof}
We first note that for any string $s$ of length $|s|$:
\begin{align*}
\sum_{k\in \{0,1\}^{|s|}}\sigma(s,k) &= \sum_{k\in \{0,1\}^{|s|}}\bigotimes_{i=1}^{|s|}\zgate^{k_i}\pgate^{s_i}\ket{+}\bra{+}\pgate^{s_i}\zgate^{k_i}
= \bigotimes_{i=1}^{|s|}\sum_{k\in\{0,1\}}\zgate^k\pgate^{s_i}\ket{+}\bra{+}\pgate^{s_i}\zgate^k\\
&= \bigotimes_{i=1}^{|s|}\sum_{k\in\{0,1\}}\(\ket{0}\bra{0}+\ket{1}\bra{1}+i^{2k+a}\ket{0}\bra{1}+i^{2k-a}\ket{1}\bra{0}\)\\
&= \bigotimes_{i=1}^{|s|}\(2\mathbb{I}_2+(i^{2+a}+i^a)\ket{0}\bra{1}+(i^{2-a}+i^{-a})\ket{1}\bra{0}\)
= \(2\mathbb{I}_2\)^{\otimes |s|}=2^{|s|}\mathbb{I}_{2^{|s|}}.
\end{align*}
Then it is easy to see that for any $a,b\in \{0,1\}^n$:
\begin{align*}
 \sum_{k\in\{0,1\}^{N}}\sigma_{aux}^{a,b,k}
 &= \sum_{\substack{k^{(1)}\in\{0,1\}^{n|T_1|},\dots,\\ k^{(L)}\in\{0,1\}^{n|T_L|}}}\sigma(s^{(1)}(a,b)^{*n},k^{(1)})\otimes \dots\otimes \sigma(s^{(L)}(a,b,k^{(1)},\dots,k^{(L-1)})^{*n},k^{(L)})\\
&= \sum_{k^{(1)}\in\{0,1\}^{n|T_1|}}\sigma(s^{(1)}(a,b)^{*n},k^{(1)})\otimes \sum_{k^{(2)}\in\{0,1\}^{n|T_2|}}\sigma(s^{(2)}(a,b,k^{(1)})^{*n},k^{(2)})\otimes\dots \\
&\qquad\qquad\otimes \sum_{k^{(L)}\in\{0,1\}^{n|T_L|}}\sigma(s^{(L)}(a,b,k^{(1)},\dots,k^{(L-1)})^{*n},k^{(L)})\\
&= \sum_{k^{(1)}\in\{0,1\}^{n|T_1|}}\sigma(s^{(1)}(a,b)^{*n},k^{(1)})\otimes \sum_{k^{(2)}\in\{0,1\}^{n|T_2|}}\sigma(s^{(2)}(a,b,k^{(1)})^{*n},k^{(2)})\otimes\dots\\
&\qquad\qquad\otimes \sum_{k^{(L-1)}\in\{0,1\}^{n|T_{L-1}|}}\sigma(s^{(L-1)}(a,b,k^{(1)},\dots,k^{(L-2)})^{*n},k^{(L-1)})\otimes 2^{n|T_L|}\mathbb{I}_{2^{n|T_L|}}\\
&= 2^{n|T_1|+\dots+n|T_L|}\mathbb{I}_{2^{n|T_1|+\dots+n|T_L|}}. \iftoggle{crypto}{}{\qedhere}
\end{align*}
\end{proof}

To complete the proof of Thm.~\ref{thm:security:AUX}, we simply show that no adversary interacting with $\AUX.\KeyGen'$ can win the experiment $\sf SymK^{cpa}$ with probability better than $\frac{1}{2}$.

\begin{lemma}\label{lem:aux-security-3}
For any adversary $\advA$ with access to an encryption oracle,
$\Pr[\mathsf{SymK_{\advA,\AUX'}^{cpa}}(\kappa)=1]= \frac{1}{2}$.
\end{lemma}
{\begin{proof}
Let $q$ be the number of oracle calls made by $\advA_1$, and write $\advA_1=(\advA_1^{(1)},\dots,\advA_1^{(q+1)})$. Let $q'$ be the number of oracle calls made by $\advA_2$, and write $\advA_2=(\advA_2^{(1)},\dots,\advA_2^{(q'+1)})$. If $q\geq n$, then the challenger just outputs $\bot$, independent of $r$, so certainly in that case $\advA$ cannot win with probability more than $\frac{1}{2}$, so suppose $q<n$. If $q+q'+1>n$, then the last $q+q'+1-n$ oracle calls made by $\advA_2$ simply return $\bot$, which $\advA$ could simulate without actually making these oracle calls, so suppose without loss of generality that $q+q'+1\leq n$.

The output of $\AUX'.\KeyGen=\KeyGen'$ to $\advA_1^{(1)}$ is $(pk,evk,\sigma_{aux}^{a,b,k})$, and
by Lemma \ref{lem:aux-security-2}, for any $a,b$, this is equal to $(pk,evk,\frac{1}{2^{N}}\mathbb{I}_{2^{N}})$. Thus, the interaction of $\KeyGen'$ with the experiment is shown in part (a) of Fig.~\ref{fig:aux-security}. $\KeyGen'$ chooses random bits $a_1,b_1,\dots,a_{q+q'+1},b_{q+q'+1}$, for use in oracle calls and the challenge itself, but these are independent of the information given to $\advA$ by $\KeyGen'$. (The other random bits selected by $\KeyGen'$, $a_{q+q'+2},b_{q+q'+2},\dots,a_n,b_n$ and the string $k$, are independent of the interaction with the adversary, so we ignore them.)

It is then easy to see from Fig.~\ref{fig:aux-security} that every call to the encryption oracle can be replaced by a channel that discards the input and returns a completely mixed state, since for any input $\rho^{\mathcal{M}}$, the encryption oracle returns
$$\Tr_1\(\frac{1}{4}\sum_{a,b\in\{0,1\}}\ket{a,b}\bra{a,b}_1\otimes\xgate^{a}\zgate^{b}\rho^{\cal M}\zgate^{b}\xgate^{a}\)=\frac{1}{4}\sum_{a,b\in\{0,1\}}\xgate^{a}\zgate^{b}\rho^{\cal M}\zgate^{b}\xgate^{a}=\frac{1}{2}\mathbb{I}_{2}.$$
In other words, we have:
\begin{figure}[H]
\centering
\begin{tikzpicture}
\node at (0,0){
\begin{tikzpicture}
\draw (0,.79)--(.75,.79);
\draw (0,.71)--(.75,.71);
\draw (0,0)--(2.75,0);

\filldraw[fill=white] (-.5,.5) rectangle (0,1);
\node at (-.25,.75) {$\maxmix$};

\node at (.4,.95) {$a,b$};
\node at (.5,.2) {\small $\cal M$};

\filldraw[fill=white] (.75,-.25) rectangle (2,1);
\node at (1.375,0) {\small $\QEnc_{a,b}$};

\node at (2.25,.2) {\small $\cal C$};
\end{tikzpicture}};
\node at (2.5,0) {$\equiv$};
\node at (5,0) {
\begin{tikzpicture}
\draw (0,0) -- (1,0); \draw (2,0) -- (2.75,0);

\node at (.4,.2) {\small $\cal M$};

\node at (1,0) {\meas};

\filldraw[fill=white] (1.7,-.25) rectangle (2.2,.25);
\node at (1.95,0) {$\maxmix$};

\node at (2.45,.2) {\small $\cal C$};
\end{tikzpicture}
};
\end{tikzpicture}
\end{figure}
\noindent Here $\maxmix$ denotes the channel that outputs a completely  mixed state, or equivalently, a uniform random variable.

For the same reason, the call to the challenger $\Xi_{\AUX}^{{\sf cpa},r}$ can also be replaced with the channel that discards the input and returns $\maxmix$, since $\Xi_{\AUX}^{{\sf cpa},r}$ applies a quantum one-time pad using random keys $a_{q+1},b_{q+1}$ to the input or to $\ket{0}\bra{0}$, and in either case, the resulting state is the completely mixed state. Thus, from the perspective of $\advA$, the experiment is independent of $r$, as shown in part (b) of Fig.~\ref{fig:aux-security}. Thus, an adversary cannot win with probability better than $\frac{1}{2}$.
\end{proof}
}
\begin{figure}[b]
\centering
\begin{tikzpicture}[scale=.9]
\node at (0,0) {
\begin{tikzpicture}
\draw (1,4.25)--(13,4.25);
\draw (1,3.25)--(13,3.25);
\draw (1,2.5)--(10.5,2.5);
\draw (1,1.5)--(3.5,1.5);	
\draw (0,.75)--(13,.75);
\draw (1,0) -- (13,0);

\filldraw[fill=white] (-.5,-.25) rectangle (0,2.75);
\node[rotate=90] at (-.25,1.25) {$\HE.\KeyGen(1^{\kappa+n})$};
\node at (.75,.95) {$pk,evk$};

\filldraw[fill=white] (.8,4) rectangle (1.3,4.5);
\node at (1.05,4.25) {$\maxmix$};
\node at (1.05,3.85) {$\vdots$};
\filldraw[fill=white] (.8,3) rectangle (1.3,3.5);
\node at (1.05,3.25) {$\maxmix$};
\filldraw[fill=white] (.8,2.25) rectangle (1.3,2.75);
\node at (1.05,2.5) {$\maxmix$};
\node at (1.05,2.1) {$\vdots$};
\filldraw[fill=white] (.8,1.25) rectangle (1.3,1.75);
\node at (1.05,1.5) {$\maxmix$};

\filldraw[fill=white] (.8,-.25) rectangle (1.3,.25);
\node at (1.05,0) {$\maxmix$};

\filldraw[fill=white] (1.8,-.25) rectangle (2.7,1);
\node at (2.25,.375) {$\advA_1^{(1)}$};

\node at (2.8,4.45) {\small $a_{q+q'+1},b_{q+q'+1}$};
\node at (2.35,3.45) {\small $a_{q+2},b_{q+2}$};
\node at (2.35,2.7) {\small $a_{q+1},b_{q+1}$};
\node at (2,1.7) {\small $a_1,b_1$};
\node at (2.95,.95) {\small $\cal M$};
\node at (2.95,.2) {\small $\cal E$};

\filldraw[fill=white] (3.25,.5) rectangle (4.75,1.75);
\node at (4,.75) {\small$\QEnc_{a_1,b_1}$};

\node at (6,2) {\small (encryption oracle)};
\draw[dashed,->] (6,1.8) -- (4.75,1.25);

\node at (4.95,.95) {\small $\cal C$};

\filldraw[fill=white] (5.3,-.25) rectangle (6.2,1);
\node at (5.75,.375) {$\advA_1^{(2)}$};

\node at (6.45,.95) {\small $\cal M$};
\node at (6.45,.2) {\small $\cal E$};

\fill[fill=white] (6.9,-.25) rectangle (7.85,1.75);
\node at (7.375,.75) {$\dots$};

\node at (8.325,.95) {\small $\cal C$};
\node at (8.325,.2) {\small $\cal E$};

\filldraw[fill=white] (8.55,-.25) rectangle (9.7,1);
\node at (9.125,.375) {$\advA_1^{(q+1)}$};

\node at (9.95,.95) {\small $\cal M$};
\node at (9.95,.2) {\small $\cal E$};

\filldraw[fill=white] (10.4,.5) rectangle (11.6,2.75);
\node at (11,1.625) {$\Xi_{\AUX}^{{\sf cpa},r}$};

\node at (11.85,.95) {\small $\cal C$};

\filldraw[fill=white] (12.3,1.25) rectangle (13.2,4.5);
\node[rotate=90] at (12.75,2.875) {$\AUX.\Enc$};
\draw[->] (12.4,1) -- (12.4,1.25);
\draw[->] (13.1,1.25) -- (13.1,1);
\filldraw[fill=white] (12.3,-.25) rectangle (13.2,1);
\node at (12.75,.375) {$\advA_2$};

\draw (13.2,.415)--(13.7,.415);
\draw (13.2,.335)--(13.7,.335);
\node at (13.95,.375) {$r'$};

\draw[dashed] (-.75, -.5) rectangle (1.5,5.25);
\node at (.375,5) {$\KeyGen'$};

\end{tikzpicture}
};

\node at (0,-3.25) {(a)};

\node at (0,-5){
\begin{tikzpicture}
\draw (0,.75)--(13,.75);
\draw (1,0) -- (13,0);

\filldraw[fill=white] (-.5,-.25) rectangle (0,2.5);
\node[rotate=90] at (-.25,1.125) {\small$\HE.\KeyGen(1^{\kappa+n})$};
\node at (.75,.95) {$pk,evk$};

\filldraw[fill=white] (.8,-.25) rectangle (1.3,.25);
\node at (1.05,0) {$\maxmix$};

\filldraw[fill=white] (1.8,-.25) rectangle (2.7,1);
\node at (2.25,.375) {$\advA_1^{(1)}$};

\node at (2.95,.95) {\small $\cal M$};
\node at (2.95,.2) {\small $\cal E$};

\fill[fill=white] (3.4,.5) rectangle (4.6,1.75);
\node at (3.6,.75) {\meas};
\filldraw[fill=white] (4.25,.5) rectangle (4.75,1);
\node at (4.5,.75) {$\maxmix$};

\node at (5,.95) {\small $\cal C$};

\filldraw[fill=white] (5.3,-.25) rectangle (6.2,1);
\node at (5.75,.375) {$\advA_1^{(2)}$};

\node at (6.45,.95) {\small $\cal M$};
\node at (6.45,.2) {\small $\cal E$};

\fill[fill=white] (6.9,-.25) rectangle (7.85,1.75);
\node at (7.375,.75) {$\dots$};

\node at (8.325,.95) {\small $\cal C$};
\node at (8.325,.2) {\small $\cal E$};

\filldraw[fill=white] (8.55,-.25) rectangle (9.7,1);
\node at (9.125,.375) {$\advA_1^{(q+1)}$};

\node at (9.95,.95) {\small $\cal M$};
\node at (9.95,.2) {\small $\cal E$};

\fill[fill=white] (10.4,.5) rectangle (11.6,1.75);
\node at (10.6,.75) {\meas};
\filldraw[fill=white] (11.25,.5) rectangle (11.75,1);
\node at (11.5,.75) {$\maxmix$};

\node at (12,.95) {\small $\cal C$};

\node at (12.3,1.5) {\meas};
\filldraw[fill=white] (12.85,1.25)rectangle(13.35,1.75);
\node at (13.1,1.5) {$\maxmix$};
\draw[->] (12.4,1)--(12.4,1.25);
\draw[->] (13.1,1.25)--(13.1,1);
\filldraw[fill=white] (12.3,-.25) rectangle (13.2,1);
\node at (12.75,.375) {$\advA_2$};

\draw (13.2,.415)--(13.7,.415);
\draw (13.2,.335)--(13.7,.335);
\node at (13.95,.375) {$r'$};

\end{tikzpicture}
};
\node at (0,-6.75) {(b)};
\end{tikzpicture}

\caption{Proof of Lemma \ref{lem:aux-security-3}. (a) shows how $\KeyGen'$ interacts with the experiment. The channel $\maxmix$ outputs a completely  mixed state, or equivalently, a uniform random variable. Since the random bits $a_i,b_i$ are independent of the other outputs of $\KeyGen'$, for each $i$, we can replace each of the oracle calls as well as the challenger with a channel that discards the input and returns $\maxmix$, as shown in (b). Thus, the experiment is independent of $r$ from the perspective of $\advA$, and so $\advA$ can do no better than guessing $r$. }\label{fig:aux-security}
\end{figure}

Combining Lemma \ref{lem:aux-security-1} and Lemma \ref{lem:aux-security-3} proves Thm.~\ref{thm:security:AUX} immediately.

}{
\section{Proof of Lemma \ref{lem:aux-security-2}}\label{app:misc}

\noindent\textbf{Lemma \ref{lem:aux-security-2}}\emph{
Let $N=n|T_1|+\dots+n|T_L|$. For any $a,b\in\{0,1\}^{n}$,
$\sum_{k\in\{0,1\}^{N}}\sigma_{aux}^{a,b,k}=\frac{1}{2^{N}}\mathbb{I}_{2^{N}}.$
}
\begin{proof}
We first note that for any string $s$ of length $|s|$:
\begin{align*}
\sum_{k\in \{0,1\}^{|s|}}\sigma(s,k) &= \sum_{k\in \{0,1\}^{|s|}}\bigotimes_{i=1}^{|s|}\zgate^{k_i}\pgate^{s_i}\ket{+}\bra{+}\pgate^{s_i}\zgate^{k_i}
= \bigotimes_{i=1}^{|s|}\sum_{k\in\{0,1\}}\zgate^k\pgate^{s_i}\ket{+}\bra{+}\pgate^{s_i}\zgate^k\\
&= \bigotimes_{i=1}^{|s|}\sum_{k\in\{0,1\}}\(\ket{0}\bra{0}+\ket{1}\bra{1}+i^{2k+a}\ket{0}\bra{1}+i^{2k-a}\ket{1}\bra{0}\)\\
&= \bigotimes_{i=1}^{|s|}\(2\mathbb{I}_2+(i^{2+a}+i^a)\ket{0}\bra{1}+(i^{2-a}+i^{-a})\ket{1}\bra{0}\)\\
&= \(2\mathbb{I}_2\)^{\otimes |s|}=2^{|s|}\mathbb{I}_{2^{|s|}}.
\end{align*}
Then it is easy to see that for any $a,b\in \{0,1\}^n$:
\begin{align*}
 \sum_{k\in\{0,1\}^{N}}\sigma_{aux}^{a,b,k}
 &= \sum_{\substack{k^{(1)}\in\{0,1\}^{n|T_1|},\dots,\\ k^{(L)}\in\{0,1\}^{n|T_L|}}}\sigma(s^{(1)}(a,b)^{*n},k^{(1)})\otimes \dots\otimes \sigma(s^{(L)}(a,b,k^{(1)},\dots,k^{(L-1)})^{*n},k^{(L)})\\
&= \sum_{k^{(1)}\in\{0,1\}^{n|T_1|}}\sigma(s^{(1)}(a,b)^{*n},k^{(1)})\otimes \sum_{k^{(2)}\in\{0,1\}^{n|T_2|}}\sigma(s^{(2)}(a,b,k^{(1)})^{*n},k^{(2)})\otimes\dots \\
&\qquad\qquad\otimes \sum_{k^{(L)}\in\{0,1\}^{n|T_L|}}\sigma(s^{(L)}(a,b,k^{(1)},\dots,k^{(L-1)})^{*n},k^{(L)})\\
&= \sum_{k^{(1)}\in\{0,1\}^{n|T_1|}}\sigma(s^{(1)}(a,b)^{*n},k^{(1)})\otimes \sum_{k^{(2)}\in\{0,1\}^{n|T_2|}}\sigma(s^{(2)}(a,b,k^{(1)})^{*n},k^{(2)})\otimes\dots\\
&\qquad\qquad\otimes \sum_{k^{(L-1)}\in\{0,1\}^{n|T_{L-1}|}}\sigma(s^{(L-1)}(a,b,k^{(1)},\dots,k^{(L-2)})^{*n},k^{(L-1)})\otimes 2^{n|T_L|}\mathbb{I}_{2^{n|T_L|}}\\
&= 2^{n|T_1|+\dots+n|T_L|}\mathbb{I}_{2^{n|T_1|+\dots+n|T_L|}}.
\end{align*}
\vskip-20pt
\end{proof}
}

\end{document}